\providecommand{\NewStructureName}[1]{}
\providecommand{\AssignStructureRole}[2]{}
\begin{document}

\title{FastFI: Enhancing API Call-Site Robustness in Microservice-Based Systems with Fault Injection}

\author{Yuzhen Tan}
\email{tanyuzhen@whu.edu.cn}
\orcid{0009-0001-0356-3568}
\affiliation{%
  \institution{School of Computer Science, Wuhan University}
  \city{Wuhan}
  \state{Hubei}
  \country{China}
}
\author{Jian Wang$^\ast$}
\email{jianwang@whu.edu.cn}
\orcid{0000-0002-1559-9314}
\affiliation{%
    \institution{School of Computer Science, Wuhan University, Zhongguancun Laboratory}
    \city{Wuhan}
    \state{Hubei}
    \country{China}
}
\author{Shuaiyu Xie}
\email{theory@whu.edu.cn}
\orcid{0000-0001-7925-3788}
\affiliation{%
    \institution{School of Computer Science, Wuhan University}
    \city{Wuhan}
    \state{Hubei}
    \country{China}
}
\author{Bing Li}
\email{bingli@whu.edu.cn}
\orcid{0000-0002-2165-2636}
\affiliation{%
    \institution{School of Computer Science, Wuhan University, Zhongguancun Laboratory}
    \city{Wuhan}
    \state{Hubei}
    \country{China}
}
\author{Yunqing Yong}
\email{2024202110044@whu.edu.cn}
\orcid{0009-0005-0575-7066}
\affiliation{%
    \institution{School of Computer Science, Wuhan University}
    \city{Wuhan}
    \state{Hubei}
    \country{China}
}
\author{Neng Zhang}
\email{nengzhang@ccnu.edu.cn}
\orcid{0000-0001-8662-5690}
\affiliation{%
    \institution{School of Computer Science \& Hubei Provincial Key Laboratory of Artificial Intelligence and Smart Learning, Central China Normal University}
    \city{Wuhan}
    \state{Hubei}
    \country{China}
}
\author{Shaolin Tan}
\email{shaolintan@hnu.edu.cn}
\orcid{0000-0001-6549-9760}
\affiliation{%
    \institution{Zhongguancun Laboratory}
    \city{Beijing}
    \country{China}
}

\renewcommand{\shortauthors}{Yuzhen Tan and Jian Wang, et al.}

\begin{abstract}
    Fault injection is a key technique for assessing software reliability, enabling proactive detection of system defects before they manifest in production. However, the increasing complexity of microservice architectures leads to exponential growth in the fault-injection space, rendering traditional random injection inefficient.  Recent lineage-driven approaches mitigate this problem through heuristic pruning, but they face two limitations. First, combinatorial-fault discovery remains bottlenecked by general-purpose SAT solvers, which fail to exploit the monotone and low-overlap structure of derived CNF formulas and typically rely on a static upper bound on fault size. Second, existing techniques provide limited post-injection guidance beyond reporting detected faults.
    To address these challenges, we propose FastFI, a fault-injection-guided framework to enhance the robustness of API call sites in microservice-based systems.
    FastFI features a DFS-based solver with dynamic fault injection to discover all valid combinatorial faults, and it leverages fault-injection results to identify critical APIs whose call sites should be hardened for robustness.
    Experiments on four representative microservice benchmarks show that FastFI reduces end-to-end fault-injection time by an average of 76.12\% compared to state-of-the-art baselines while maintaining acceptable resource overhead. Moreover, FastFI accurately identifies high-impact APIs and provides actionable guidance for call-site hardening.
\end{abstract}

\begin{CCSXML}
<ccs2012>
 <concept>
  <concept_id>00000000.0000000.0000000</concept_id>
  <concept_desc>Do Not Use This Code, Generate the Correct Terms for Your Paper</concept_desc>
  <concept_significance>500</concept_significance>
 </concept>
 <concept>
  <concept_id>00000000.00000000.00000000</concept_id>
  <concept_desc>Do Not Use This Code, Generate the Correct Terms for Your Paper</concept_desc>
  <concept_significance>300</concept_significance>
 </concept>
 <concept>
  <concept_id>00000000.00000000.00000000</concept_id>
  <concept_desc>Do Not Use This Code, Generate the Correct Terms for Your Paper</concept_desc>
  <concept_significance>100</concept_significance>
 </concept>
 <concept>
  <concept_id>00000000.00000000.00000000</concept_id>
  <concept_desc>Do Not Use This Code, Generate the Correct Terms for Your Paper</concept_desc>
  <concept_significance>100</concept_significance>
 </concept>
</ccs2012>
\end{CCSXML}

\ccsdesc[500]{Software and its engineering~Software maintenance tools}

\keywords{microservice-based system, fault injection, API call-site robustness}


\maketitle

\section{Introduction}

Microservice architecture, a dominant architectural paradigm for modern software, has seen rapid adoption and scale-up. By decomposing monoliths into highly cohesive, loosely coupled services that interact via API calls, microservices offer architectural flexibility, development agility, and independent deployment~\cite{msintro1,msintro2}. However, as business requirements evolve and functionality becomes increasingly complex, microservice systems rapidly grow in both scale and heterogeneity: inter-service dependencies form intricate network topologies, and failure-propagation paths become difficult to predict or control. Meanwhile, the widespread adoption of DevOps (integration and automation of development and operations) further amplifies these pressures: continuous integration and continuous deployment compress release cycles to an unprecedented fast-paced rhythm~\cite{Devopsintro1,Devopsintro2}, outstripping traditional quality-assurance practices and making it increasingly challenging to detect and localize reliability defects in a timely manner.

Fault injection is a proactive reliability technique that deliberately injects controlled faults into otherwise healthy systems to observe behavior under adverse conditions. By deliberately injecting controlled faults into otherwise healthy systems, it provides actionable evidence for remediation and supports more accurate robustness assessment. Consequently, fault injection has been broadly adopted in industrial microservice deployments by leading practitioners such as Netflix~\cite{LDFI-Netflix,Netflix-ICSE}, Google~\cite{gcp-dirt-2020}, Microsoft~\cite{azure-chaos-studio}, and Amazon~\cite{aws-pg-chaos-2025,aws-pg-strategy-2025}.
Despite its effectiveness, applying fault injection at scale in microservice environments remains highly challenging. Unlike monolithic systems, microservice failures often emerge from complex interactions among multiple components, making it difficult to anticipate which fault scenarios are most critical to test. 

In practice, standard fault injection tools such as ChaosMesh~\cite{ChaosMesh}, ChaosBlade~\cite{ChaosBlade}, and ChaosMonkey~\cite{ChaosMonkey} support multi-dimensional injections spanning network anomalies, container failures, and node crashes. However, as the number of services and dependency paths grows, the fault space expands combinatorially, preventing these tools from producing actionable injection guidance within a practical timeframe. 
In many cases, injecting a single fault does not trigger an observable anomaly (e.g., due to redundancy and alternative execution paths), necessitating the injection of fault combinations. As illustrated in Fig.~\ref{fig:intro_combinatorial_fault}, injecting a fault into Cache or Database in isolation does not cause the request to fail; however, injecting faults into Cache and Database simultaneously leads to a request failure. We refer to such failure-inducing fault combinations and their corresponding injections as combinatorial faults and combinatorial-fault injection, respectively.  
A key objective is to identify a minimal injection that causes the system to fail, referred to as a minimal combinatorial fault. To address this problem,  Netflix~\cite{LDFI-Netflix} adapted LDFI~\cite{LDFI-2015} to microservices by modeling minimal combinatorial-fault discovery as enumerating minimal SAT solutions of a CNF (Conjunctive Normal Form) formula. IntelliFI~\cite{IntelliFI} and MicroFI~\cite{MicroFI} further employed heuristic pruning strategies to reduce the combinatorial-fault space and prioritize candidates based on fitness and API priority, thereby enabling the prioritized injection of high-value faults under limited injection budgets. Although prior work has made notable progress in pruning the fault injection space and prioritizing injections, two practical challenges remain prominent:

\begin{figure}[t]
  \centering
  \begin{subfigure}[t]{0.59\linewidth}
    \centering
    \includegraphics[width=\linewidth]{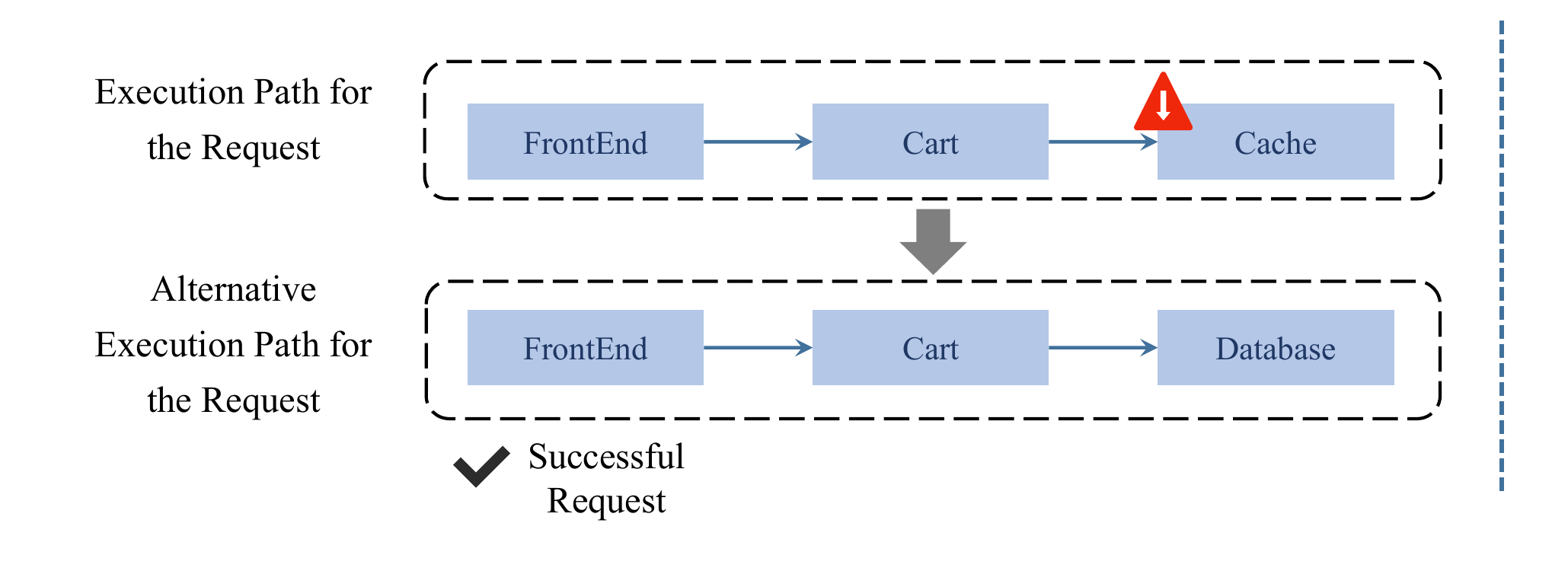}
    \caption{Single-fault injection $\{Cache\}$}
    \label{fig:intro_combinatorial_fault_a}
  \end{subfigure}\hfill
  \begin{subfigure}[t]{0.41\linewidth}
    \centering
    \includegraphics[width=\linewidth]{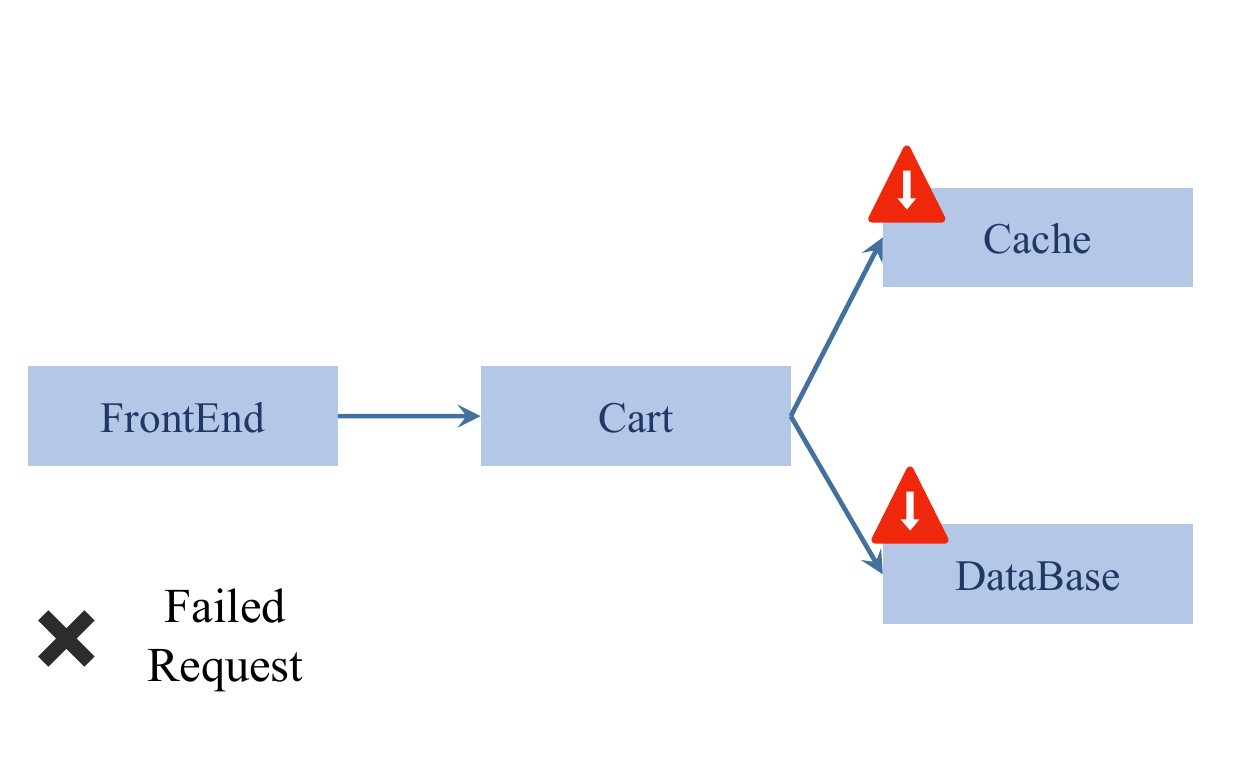}
    \caption{Combinatorial-fault injection $\{Cache, Database\}$}
    \label{fig:intro_combinatorial_fault_b}
  \end{subfigure}
  \caption{Illustration of fault injection.}
  \label{fig:intro_combinatorial_fault}
\end{figure}

\begin{figure*}[t]
  \centering
  \captionsetup{font=small}

  \begin{tabular}{@{}p{0.33\textwidth}@{\hspace{0.03\textwidth}}p{0.64\textwidth}@{}}

  \begin{minipage}[t]{\linewidth}\rule{0pt}{0pt}
    \centering
    \includegraphics[width=\linewidth,trim=10 10 10 10,clip]{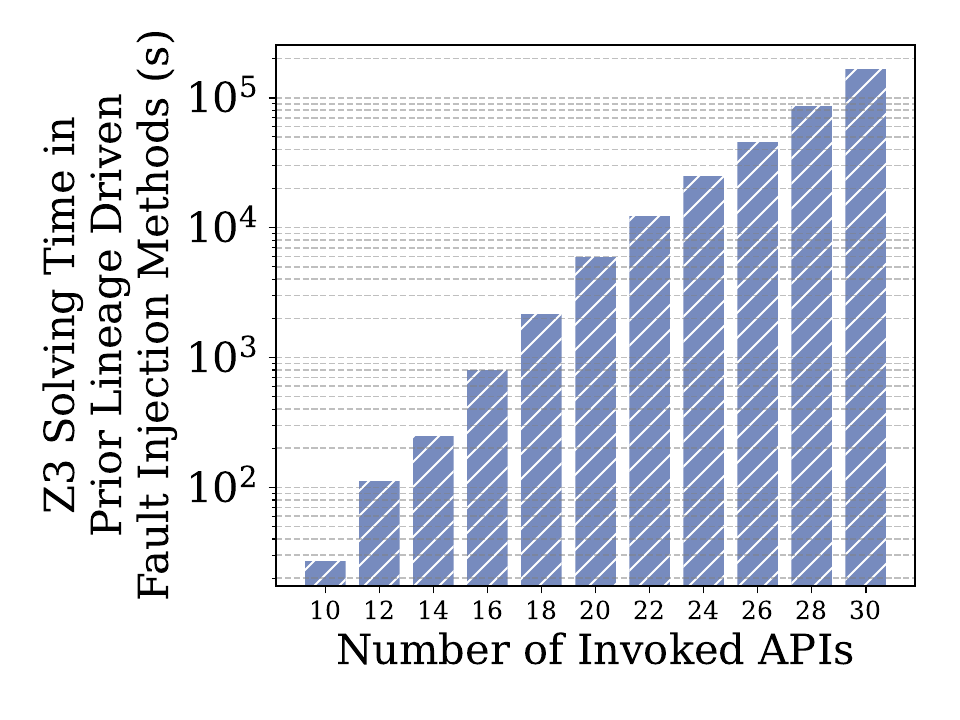}
    \captionof{figure}{Solving time for combinatorial faults.}
    \label{fig:intro_solving_time}
  \end{minipage}
  &
  \begin{minipage}[t]{\linewidth}\rule{0pt}{0pt}
    \centering
    \scriptsize
    \setlength{\tabcolsep}{2.6pt}
    \renewcommand{\arraystretch}{1.05}

    \captionof{table}{Statistics of Alibaba microservice trace dataset}
    \label{tab:Alibaba-Dataset-Result}

    \subcaptionbox{API invocation distribution per service request%
      \label{tab:Alibaba-Dataset-Result-servicce-api}}[0.49\linewidth]{%
      \centering
      \begin{tabular}{@{}c|c|c@{}}
        \toprule
        \makecell{\textbf{Number of}\\\textbf{Invoked APIs}} &
        \makecell{\textbf{Number of}\\\textbf{Service Requests}} &
        \makecell{\textbf{Proportion (\%)}} \\
        \midrule
        $[1,10]$       & 28281 & 82.72 \\
        $[11,20]$      & 3315  & 9.70  \\
        $[21,30]$      & 1095  & 3.20  \\
        $[31,50]$      & 831   & 2.43  \\
        $[51,+\infty)$ & 666   & 1.95  \\
        \bottomrule
      \end{tabular}
    }\hfill
    \subcaptionbox{Top-K APIs ranked by the number of covered service requests%
      \label{tab:Alibaba-Dataset-Result-api-service}}[0.49\linewidth]{%
      \centering
      \begin{tabular}{@{}c|c|c|c@{}}
        \toprule
        \textbf{K} &
        \textbf{API ID} &
        \makecell{\textbf{Covered Service}\\\textbf{Requests}} &
        \makecell{\textbf{Coverage}\\\textbf{Ratio (\%)}} \\
        \midrule
        1 & mJxUT-THwr  & 8352 & 24.43 \\
        2 & 7aJ\_JD61MG & 4801 & 14.04 \\
        3 & vs2nQhH1hq  & 2554 &  7.47 \\
        4 & g9oxz20ShF  & 2482 &  7.26 \\
        5 & vs2nQhH1hq  & 2328 &  6.81 \\
        \bottomrule
      \end{tabular}
    }
  \end{minipage}

  \end{tabular}
\end{figure*}

\textbf{(1) Efficiency bottleneck in discovering minimal combinatorial faults.} Prior approaches, such as LDFI, IntelliFI, and MicroFI, reduce the minimal combinatorial-fault discovery problem to enumerating minimal solutions of a CNF formula, yet they generally overlook solver-side efficiency. In microservice systems, high availability is often achieved by provisioning multiple alternative execution paths that overlap only weakly~\cite{low_overlap1, low_overlap2, low_overlap3} (see \S\ref{sec:motivation2}), thereby inducing CNF instances with low clause overlap. Moreover, the CNF formulas derived from LDFI are monotone, meaning that each clause contains only positive literals, i.e., no variable appears negated. Nevertheless, existing methods that rely on Z3~\cite{z3-solver}, a general-purpose SAT solver, do not exploit these structural features, making CNF solving the dominant runtime bottleneck in the fault-injection pipeline.
Based on our statistics on the Alibaba microservice trace dataset~\cite{alibaba-dataset} (Table~\ref{tab:Alibaba-Dataset-Result-servicce-api}), we find that although most requests invoke only a small number of API calls, 1,497 distinct requests within one hour still invoke more than 30 API calls. To illustrate the resulting computational burden, we consider a request with four alternative execution paths, where each path invokes 30 APIs. Under the evaluation environment described in Section~\ref{evaluation_environment}, solving the corresponding CNF formula can exceed 164,661 seconds, as shown in Fig.~\ref{fig:intro_solving_time}. At scale, repeatedly incurring day-level (or longer) solving time per request would lead to prohibitive end-to-end costs, making solver efficiency a first-order design requirement for combinatorial-fault discovery in large-scale systems.

Furthermore, these methods set an a priori upper bound on the number of APIs in each combinatorial fault, rather than dynamically adapting it according to the results of fault injection (see \S\ref{sec:motivation3}). A bound that is too small may prevent discovering valid combinatorial faults, whereas an overly large bound introduces substantial redundancy and wastes injection budget.

\textbf{(2) Lack of targeted robustness enhancement guidance.} Existing techniques emphasize how to design and execute injections, but provide limited post-injection guidance for localizing critical APIs and translating findings into actionable hardening decisions. As pointed out by the classical fallacies of distributed computing~\cite{dc_fallacies}, transient network unreliability, latency variability, and topology changes are inherent to distributed deployments and often lie beyond the control of any single service. Consequently, callee-side fixes alone are often insufficient to ensure timely and successful responses under all failure modes. Many practical mitigations for such failures (e.g., graceful degradation and proactive controls, as well as failover mechanisms along the request path) therefore need to be enforced at the caller side. We thus target API call sites, i.e., caller-owned program points where outbound requests are constructed and dispatched, and harden them with caller-side (call-site) policies to better withstand and adapt to these failures.


In large-scale microservice systems, however, there can be thousands of call sites to harden; thus, effective post-injection guidance is needed to prioritize those with the largest potential blast radius across requests. Based on the Alibaba microservice trace dataset~\cite{alibaba-dataset}, Table~\ref{tab:Alibaba-Dataset-Result-api-service} shows the distribution of API call counts per request and the prevalence of API sharing across endpoints in production. In the sample data, the most commonly used API is invoked by 8,352 service endpoints, accounting for 24.43\% of the total. This high degree of sharing implies that the robustness of call sites associated with a small set of highly shared APIs critically influences system stability: defects in error handling at these call sites can propagate across numerous business scenarios, resulting in systemic degradation of service quality. Therefore, accurately identifying and prioritizing improvements to such call sites based on fault-injection evidence is pivotal for microservice reliability assurance, especially under DevOps where continuous delivery further compresses remediation cycles.

To address these challenges, we propose FastFI, a framework that integrates dynamic fault injection with robustness enhancement guidance for API call sites. 
Firstly, FastFI exploits the monotone and weakly overlapping structure of CNF formulas induced by alternative execution paths in microservice systems and introduces a depth-first search (DFS) based solver with bitmask optimizations to enumerate minimal SAT solutions (see \S\ref{sec:method-FastSATSolver}); secondly, FastFI adaptively adjusts the upper bound of combinatorial-fault size, which corresponds to the number of injection points included in each combinatorial fault, based on real-time feedback during injection (see \S\ref{sec:method-K-Fault}); and finally, FastFI formulates critical API call-site identification as a Partial Max-SAT problem~\cite{PartialMaxSAT} (see \S\ref{sec:method-api-choose}), providing developers with actionable guidance for enhancing the most vulnerable API call sites. The main contributions of this paper are summarized as follows:

(1) We develop a DFS-based solver with bitmask-based optimizations that exploits the weakly overlapping structure and monotone CNF formulas induced by alternative microservice execution paths, along with a feedback-driven mechanism that dynamically adjusts the combinatorial-fault size during injection.


(2) We propose a post-injection robustness–enhancing method that identifies and prioritizes API call sites with critical impacts on reliability, providing developers with targeted system improvement suggestions.


(3) We evaluate FastFI on four representative microservice benchmarks and validate scalability on production-scale call graphs. Experimental results show that FastFI substantially outperforms state-of-the-art baselines in terms of efficiency while yielding actionable guidance for robustness enhancement.

The remainder of this paper is organized as follows. Section~\ref{sec:2} introduces preliminaries for combinatorial-fault injection in microservice systems and motivates the need for efficient fault discovery and robustness enhancement at API call sites. Section~\ref{sec:3} describes the design of FastFI, including our DFS-based SAT solver, the dynamic fault-injection mechanism, and how FastFI identifies critical API call sites to guide call-site hardening. Section~\ref{sec:4} presents the evaluation methodology and experimental results. Section~\ref{sec:5} discusses limitations and threats to validity. Section~\ref{sec:6} reviews related work. Section~\ref{sec:7} concludes the paper and outlines directions for future work.

\section{Background and Motivation}
\label{sec:2}
\subsection{Background}
This work targets reliability validation and robustness improvement in microservice-based systems. In such systems, combinatorial-fault injection is often required to expose interaction-induced failures that cannot be revealed by single-fault testing. Meanwhile, improving robustness in practice typically relies on enforcing caller-side policies at concrete API call sites. To facilitate understanding of our problem setting and solution space, we summarize the necessary background concepts below.

\textbf{Chaos Engineering}. Chaos engineering is a systematic methodology for increasing confidence in system behavior under turbulent conditions by intentionally injecting faults into production environments~\cite{ChaosEngineeringBasic}. It begins by defining measurable steady-state metrics that characterize normal behavior, formulates the hypothesis that these metrics hold under both control and experimental conditions, injects realistic failures, and tests the hypothesis by detecting deviations from the steady state. In industry, chaos engineering practice emphasizes strict blast-radius control, risk-aware execution, and incremental automation toward continuous experimentation.

\begin{figure}[t]
  \centering
  \begin{minipage}[b]{0.48\textwidth}
    \centering
    \includegraphics[width=\linewidth]{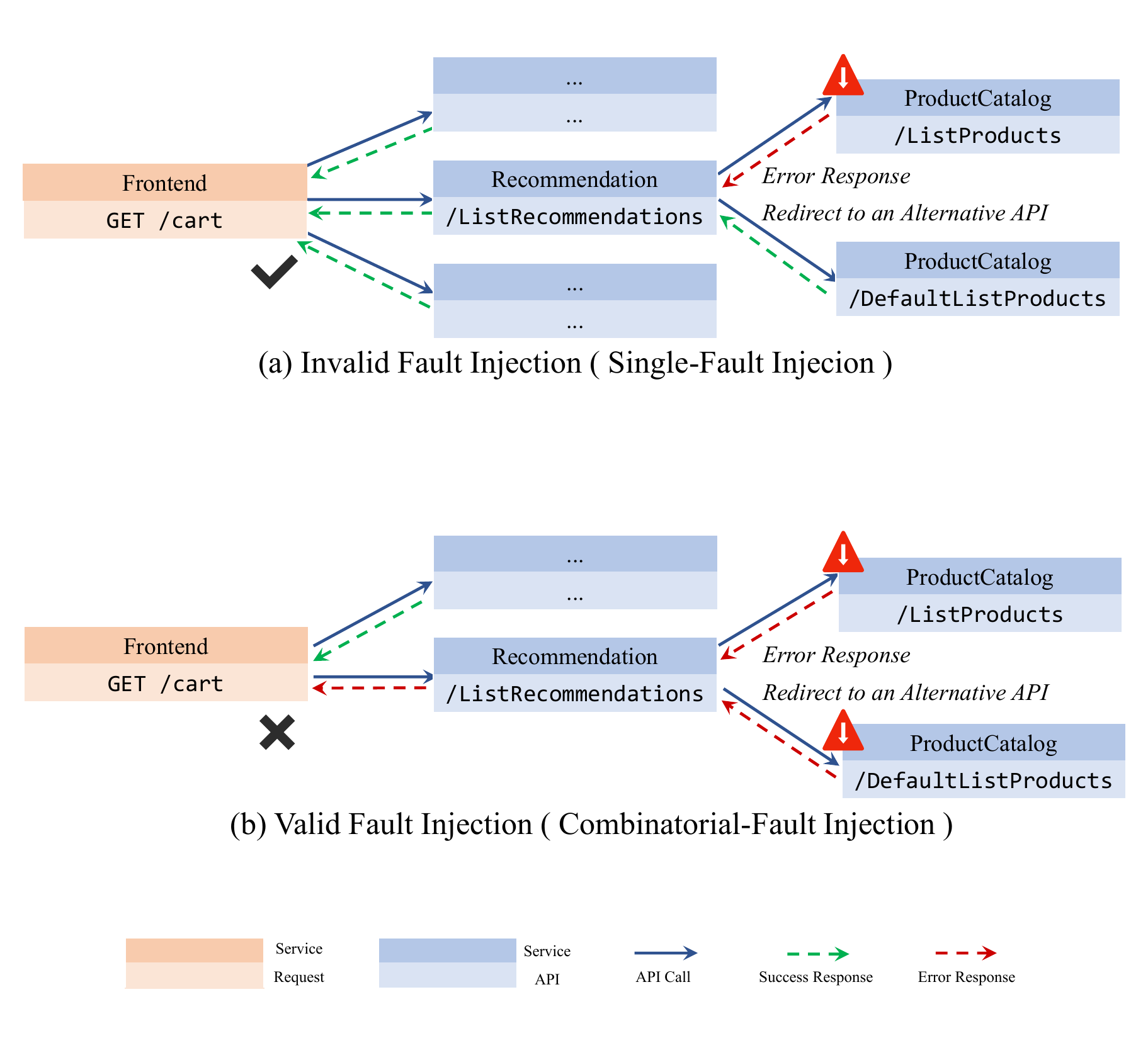}
    \captionof{figure}{End-to-end workflow of combinatorial-fault injection.}
    \label{fig:CombinationFault}
  \end{minipage}
  \hfill
  \begin{minipage}[b]{0.51\textwidth}
    \centering
    \includegraphics[width=\linewidth]{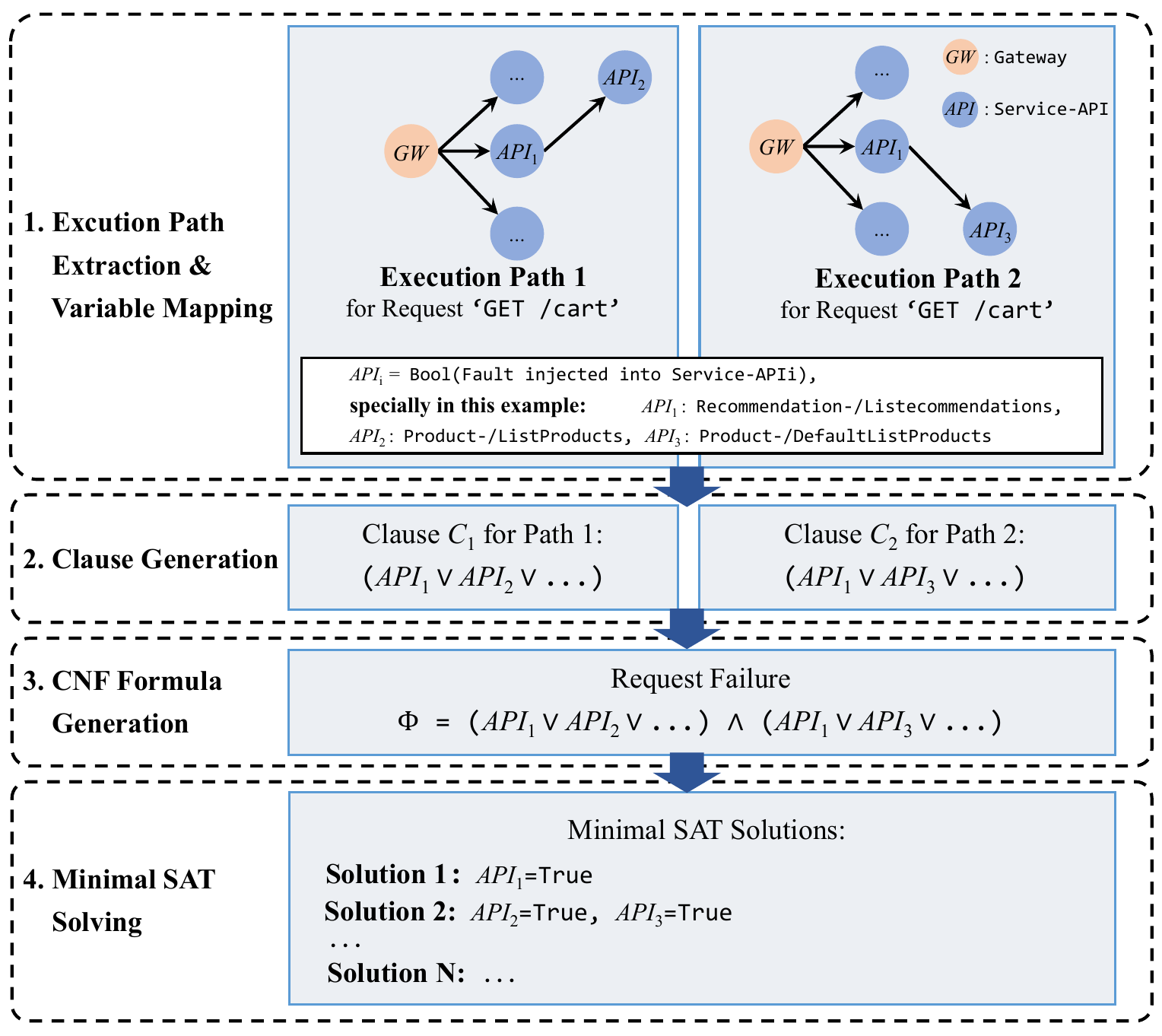}
    \captionof{figure}{Combinatorial-fault discovery pipeline in Online Boutique.}
    \label{fig:ldfi_process}
  \end{minipage}
\end{figure}

\begin{figure}[t]
  \centering
  \begin{minipage}[t]{0.51\linewidth}
    \centering
    \captionsetup{type=figure,width=\linewidth}%
    \rule{0pt}{0pt}
    \includegraphics[width=\linewidth]{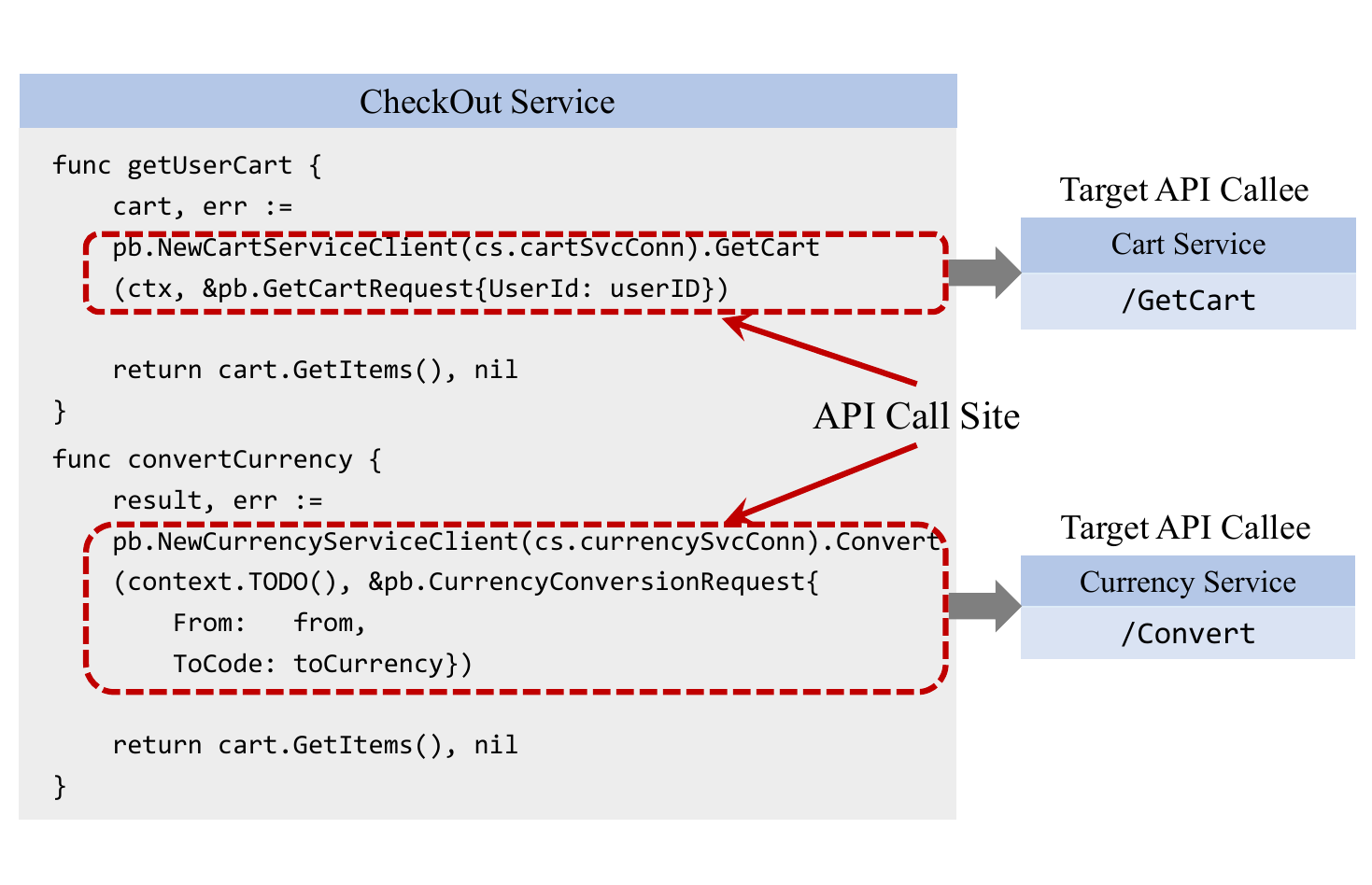}
    \captionof{figure}{Example API call sites in the source code.}
    \label{fig:api_call_site}
  \end{minipage}%
  \hfill
  \begin{minipage}[t]{0.47\linewidth}
    \centering
    \captionsetup{type=table,width=\linewidth}%
    \captionof{table}{Fault injection results in Online Boutique.}
    \label{tab:api-fault-injection-result}
    \rule{0pt}{0pt}
    \scriptsize
    \setlength{\tabcolsep}{2pt}
    \resizebox{\linewidth}{!}{%
      \begin{tabular}{@{}l|l|l|l@{}}
        \toprule
        \textbf{Request} & \textbf{Service} & \textbf{API} & \textbf{Result} \\
        \midrule
        \multirow{5}{*}{\texttt{GET /}}
          & cartservice           & \texttt{/GetCart}              & \textbf{500 Error} \\
          & CurrencyService       & \texttt{/Convert}              & \textbf{500 Error} \\
          & CurrencyService       & \texttt{/GetSupportedCurrency} & \textbf{500 Error} \\
          & productcatalogservice & \texttt{/ListProducts}         & \textbf{500 Error} \\
          & AdService             & \texttt{/GetAd}                & 200 OK    \\
        \midrule
        \multirow{7}{*}{\texttt{GET /product/\{id\}}}
          & CartService           & \texttt{/GetCart}              & \textbf{500 Error} \\
          & CurrencyService       & \texttt{/Convert}              & \textbf{500 Error} \\
          & CurrencyService       & \texttt{/GetSupportedCurrency} & \textbf{500 Error} \\
          & ProductCatalogService & \texttt{/GetProduct}           & \textbf{500 Error} \\
          & ProductCatalogService & \texttt{/ListProducts}         & \textbf{500 Error} \\
          & RecommendationService & \texttt{/ListRecommendations}  & \textbf{500 Error} \\
          & AdService             & \texttt{/GetAd}                & 200 OK    \\
        \midrule
        \multirow{7}{*}{\texttt{GET /cart}}
          & CartService           & \texttt{/GetCart}              & \textbf{500 Error} \\
          & CurrencyService       & \texttt{/Convert}              & \textbf{500 Error} \\
          & CurrencyService       & \texttt{/GetSupportedCurrency} & \textbf{500 Error} \\
          & ProductCatalogService & \texttt{/GetProduct}           & \textbf{500 Error} \\
          & ProductCatalogService & \texttt{/ListProducts}         & \textbf{500 Error} \\
          & RecommendationService & \texttt{/ListRecommendations}  & \textbf{500 Error} \\
          & ShippingService       & \texttt{/GetQuote}             & \textbf{500 Error} \\
        \bottomrule
      \end{tabular}%
    }
  \end{minipage}
\end{figure}

\textbf{Combinatorial-Fault Injection}. Combinatorial-fault injection tests system reliability by introducing multiple faults concurrently, thereby better reflecting production incidents that stem from co-occurring events such as hardware failures plus traffic surges. Fig.~\ref{fig:CombinationFault} illustrates combinatorial-fault injection in Online Boutique~\cite{OnlineBoutique}, a widely-used microservice benchmark, using the \texttt{GET /cart} request as a representative example. A fault in the ProductCatalog service does not immediately cause a failure because the call graph contains a fallback call to \texttt{/DefaultListProduct}. However, when faults are injected into both the primary and fallback APIs simultaneously, the request fails, revealing a vulnerability that single-fault tests do not expose. Consequently, systematic combinatorial testing uncovers interaction-induced weaknesses, enabling developers to accurately identify minimal combinatorial faults, thereby improving microservice reliability under complex failures. A combinatorial fault is considered \emph{valid} if injecting the corresponding set of faults triggers a request failure. Otherwise, it is deemed \emph{invalid} because it does not induce any observable request failure.

\textbf{Lineage-Driven Fault Injection}. To reduce the search space of candidate combinatorial faults, Lineage-Driven Fault Injection (LDFI) constructs a CNF formula from the execution lineage graph and enumerates minimal solutions to guide fault injection~\cite{LDFI-2015}. Netflix adapted LDFI and deployed it at scale in microservice-based systems~\cite{LDFI-Netflix}. In microservice-based systems, a request trace that includes alternative execution paths can be modeled as a lineage graph. Let $API_{i,j}$ be a Boolean variable indicating whether the $j$-th API on alternative path $i$ fails. In this setting, a combinatorial fault is \emph{valid} if every alternative path contains at least one failed API; it is \emph{invalid} if there exists at least one healthy path. This produces a CNF formula: $\Phi \;=\; \bigwedge_{i=1}^{m}\;(\bigvee_{j=1}^{k_i} API_{i,j})$, where $m$ is the number of alternative paths, $k_i$ is the number of APIs on path $i$, and each clause of the CNF formula is an alternative path. Notably, $\Phi$ is a monotone CNF formula, as it contains no negated variables (i.e., no $\lnot API_{i,j}$). Calculating the minimal SAT solutions of $\Phi$ gives the minimal combinatorial faults that cause the request to fail.

Fig.~\ref{fig:ldfi_process} illustrates the process of discovering combinatorial faults for the \texttt{GET /cart} request in Online Boutique. Specifically, we model each API call along an execution path as a Boolean variable $API_i$. If $API_i$ is set to \texttt{True}, a fault is injected at the corresponding API. For each execution path, we collect the set of Boolean variables representing the API calls on that path. We then construct one CNF clause per path by disjoining all variables on the path. Finally, we conjoin the per-path clauses to obtain the overall CNF formula for the request, as follows: $(API_1 \lor API_2 \lor \cdots)\ \land\ (API_1 \lor API_3 \lor \cdots)$. Solving this CNF formula and extracting its minimal satisfying assignments yields candidate fault-injection plans such as $\{API_1\}\ \text{and}\ \{API_2, API_3\}$.

\textbf{API Call Site}. An API call site is a specific program location (e.g., a function-call statement) where a service invokes an external API provided by another service or a third-party component. In microservice systems, each API call site represents a potential point of inter-service dependency and thus a candidate location for enhancing program robustness. Fig.~\ref{fig:api_call_site} shows the API call sites in the source code of Online Boutique’s Checkout Service. In particular, the statements enclosed by the red dashed boxes indicate the exact lines where the Checkout Service issues outbound requests to downstream services (e.g., the Cart Service and the Currency Service). We refer to each such program location as an API call site.

\textbf{Robustness of API Call Sites}. In microservice-based systems, we define the robustness of an API call site as the client’s ability to maintain availability despite downstream faults, timeouts, and request rejections. This robustness is typically realized through (i) failover, which routes traffic to healthy replicas when anomalies occur; (ii) graceful degradation, which keeps the system partially functional using cached data or default responses when replicas are down or severely degraded; and (iii) proactive controls, such as calibrated timeouts, load-aware retries, and circuit breakers, to prevent retry storms and mitigate cascading failures. Robust call sites reduce downtime, limit cascading dependency failures, and help services meet their SLOs in cloud-native deployments.

\subsection{Motivation}
\subsubsection{API Call Sites with Low Robustness Result in Low Service Reliability}\label{sec:motivation1}
While microservice architectures bring benefits such as agile development and independent deployment, enabled by service decoupling, they also introduce distributed-systems complexity and new reliability risks. In microservice-based systems, a single request often needs to traverse multiple services and invoke dozens of APIs to complete a business process. This high degree of distribution amplifies the impact of individual downstream failures, often expanding the blast radius substantially, particularly when API call sites lack adequate robustness mechanisms.

To assess the impact of the robustness of the API call site on service reliability, we injected gRPC status code 14 (UNAVAILABLE) faults into the Online Boutique benchmark and performed repeated trials for three representative requests: the homepage (\texttt{GET /}), the product-details page (\texttt{GET /product/\{id\}}), and the shopping-cart page (\texttt{GET /cart}). As shown in Table~\ref{tab:api-fault-injection-result}, except for AdService's \texttt{/GetAd} API, injecting faults into other critical downstream APIs caused the corresponding upstream endpoints to respond with HTTP 500 (Internal Server Error).

For most API calls, this behavior occurs because Online Boutique implements neither fallback alternatives nor degradation mechanisms, resulting in low robustness of API call sites. When a downstream service returns an error, the upstream service immediately aborts and propagates the failure, which can trigger cascading failures along the request chain. By contrast, the system continues to respond successfully even when calls to AdService fail. This is because the system employs a lenient handling strategy for API call results of AdService; even when its API call returns errors, upstream services do not interrupt the execution of the primary business logic. This design pattern indicates that AdService is defined as an optional dependency in the current business process, where its service availability does not substantially impact the normal operation of core business functionality.

\begin{tcolorbox}[
  enhanced,
  colback=gray!5,
  colframe=gray!500,
  arc=2mm,
  boxrule=1.5pt,
  left=1.5mm,right=1.5mm,top=1mm,bottom=1mm
]
\textbf{Insight 1:}
API call sites with inadequate robustness can trigger cascading failures, causing user-facing requests to return HTTP 500 (Internal Server Error) and reducing service availability. This observation underscores the need to systematically enhance robustness at API call sites to sustain reliable service operation at scale.
\end{tcolorbox}

\subsubsection{Efficiency of Discovering Valid Combinatorial Faults} \label{sec:motivation2}
In highly decoupled microservice architectures, a key design principle for improving API call-site robustness and end-to-end availability is to provide multiple fallback paths with minimal shared dependencies for each business function. As DoorDash reports~\cite{doordash-rpc-fallbacks-2022}, if each path succeeds with probability $0.9$ and failures are approximately independent, then two such paths can already yield a $99\%$ success rate (i.e., $1-(1-0.9)^2$). Conversely, when paths share critical dependencies such as a common database or a third-party API, a single-point failure can simultaneously invalidate all paths, eliminating the intended redundancy. Reflecting industry best practices, engineering guidelines from Oracle~\cite{oracle-cnc-policy-180} and IBM~\cite{ibm-sva-failover} recommend maintaining a hierarchical structure with three replicas of the same request and a primary path. This architectural pattern of the few-but-independent-path has been widely adopted in practice; however, it poses unique computational challenges for fault injection testing.

The LDFI algorithm abstracts each alternative execution path as a clause in a monotone CNF formula and discovers combinatorial faults by enumerating all minimal satisfying assignments of the resulting formula. In production systems, alternative execution paths are typically designed for high availability. To prevent failures on one path from impairing others, these paths are engineered to be largely independent with minimal overlap; i.e., the corresponding clauses share few variables.
This design choice is further motivated by the common-mode/correlated-failure phenomenon: shared components and hidden dependencies can induce correlated outages that simultaneously invalidate multiple paths, undermining the intended redundancy~\cite{low_overlap1, low_overlap2, low_overlap3}.
While this abstraction faithfully captures the structure of such fallback designs, it also yields challenging instances for SAT solvers. General-purpose SAT solvers are optimized for broad classes of problems; however, when clauses are near-independent and variable sharing is sparse, standard search pruning and conflict-learning heuristics can become substantially less effective, leading to poor solving efficiency.

\begin{figure}[t]
  \centering
  \includegraphics[width=\linewidth]{./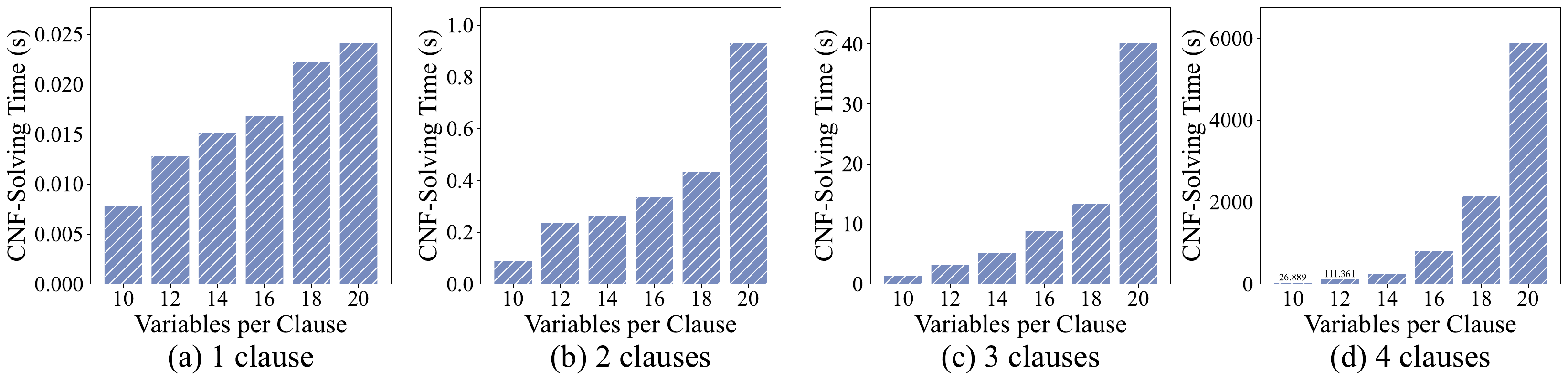}
  \caption{CNF-solving time of the Z3 solver for computing minimal satisfying assignments.}
  \label{fig:motivation-z3}
\end{figure}

When applying general-purpose SAT solvers such as Z3~\cite{z3-solver} to enumerate minimal SAT solutions, we observe exponential solving time growth even for modest instance sizes. Fig.~\ref{fig:motivation-z3} shows that with only four clauses, each containing 20 variables, the solving time reaches $5.8 \times 10^3$ seconds. Although the production trace statistics in Table~\ref{tab:Alibaba-Dataset-Result-servicce-api} indicate that most requests invoke only a small number of API calls, more than 2,500 requests invoke over 20 API calls, making this worst-case behavior practically relevant. Moreover, Z3 returns one minimal SAT solution at a time: after obtaining one minimal SAT solution, we add a blocking clause and restart the search (even with incremental solving), incurring substantial branching and backtracking in each round. More critically, as microservice systems scale up, this computational bottleneck amplifies further: when systems contain a large number of microservices, each maintaining multiple alternative paths and each alternative path invoking more API calls, the search space for minimal SAT solutions quickly becomes prohibitively large, thereby failing to meet DevOps timeliness requirements.

\begin{tcolorbox}[
  enhanced,
  colback=gray!5,
  colframe=gray!500,
  arc=2mm,
  boxrule=1.5pt,
  left=1.5mm,right=1.5mm,top=1mm,bottom=1mm
]
\textbf{Insight 2:}
General-purpose SAT solvers often struggle to scale on the few-but-independent-path structure common in microservice-based systems, where clauses share variables sparsely, and the formula is monotone. This highlights the urgent need to develop specialized structure-aware algorithms that exploit low clause overlap to support real-time fault injection testing in large-scale microservice environments.
\end{tcolorbox}

\subsubsection{Necessity of Dynamic Combinatorial-Fault Injection}\label{sec:motivation3}
Existing combinatorial-fault injection methods, such as LDFI~\cite{LDFI-Netflix}, IntelliFI~\cite{IntelliFI}, and MicroFI~\cite{MicroFI}, require configuring an upper bound on the number of APIs included in each combinatorial fault before initiating tests. This static bound implicitly assumes that system scale and fault-tolerance assumptions are known and stable. However, in continuously evolving real-world microservice environments, the number of services and the dependency topologies change across versions, rendering such a static choice brittle.

To assess the sensitivity of static bounds and avoid benchmark-specific conclusions, we selected one representative request from each of four widely used microservice benchmarks, including Online Boutique~\cite{OnlineBoutique}, Hotel Reservation~\cite{hotel}, Sock Shop~\cite{SockShop}, and Train Ticket~\cite{TrainTicket} (details of these benchmarks are provided in Section~\ref{sec: benchmarks}). These benchmarks cover diverse service topologies and dependency depths, allowing us to test whether the observed effects persist across distinct microservice call graphs rather than reflecting idiosyncrasies of a single benchmark.  The selected requests involve 7, 5, 9, and 15 API invocations, respectively. We then conducted static-bound fault-injection experiments on each request.

\begin{table}[t]
  \centering
  \caption{Results of static combinatorial-fault injection.}
  \label{tab:static_fault_num}
  \setlength{\tabcolsep}{2pt}
  \renewcommand{\arraystretch}{1.3}
  \resizebox{\linewidth}{!}{%
    \begin{tabular}{l|
      *{5}{>{\centering\arraybackslash}m{2.6em}}|
      *{5}{>{\centering\arraybackslash}m{2.6em}}|
      *{5}{>{\centering\arraybackslash}m{2.6em}}|
      *{3}{>{\centering\arraybackslash}m{2.6em}}
    }
      \toprule
      \textbf{Request} &
      \multicolumn{5}{c}{\makecell{\textbf{Online Boutique:} \\ \texttt{GET /cart}}} &
      \multicolumn{5}{|c|}{\makecell{\textbf{Hotel Reservation:} \\ \texttt{GET /hotels}}} &
      \multicolumn{5}{c|}{\makecell{\textbf{Sock Shop:} \\ \texttt{POST /orders}}} &
      \multicolumn{3}{c}{\makecell{\textbf{Train Ticket:} \\ \texttt{POST /left}}} \\
      \midrule
      \textbf{Max Number of APIs} &
        2 & 3 & 4 & 5 & 6 &
        2 & 3 & 4 & 5 & 6 &
        2 & 3 & 4 & 5 & 6 &
        2 & 3 & 4 \\
      \textbf{Number of Valid Faults} &
        0 & 0 & 7 & 7 & 7 &
        0 & 0 & 5 & 5 & 5 &
        0 & 0 & 9 & 9 & 9 &
        0 & 15 & 15 \\
      \textbf{Total Fault Injections} &
        68  & 338  & 1360 & 4507  & 18919 &
        26  & 92   & 361  & 988   & 2816  &
        129 & 708  & 3698 & 15448 & 67458 &
        361 & 3620 & 27873 \\
      \bottomrule
    \end{tabular}%
  }
\end{table}

In the evaluation, we varied the upper bound on the number of APIs contained in each combinatorial fault according to the replication factor of the benchmark. For the Train Ticket deployed with three replicas, we set the bound to range from 2 to 4. For the other three benchmarks, each deployed with four replicas, we set the bound to range from 2 to 6. As shown in Table~\ref{tab:static_fault_num}, when the preset quota is too conservative, potential combinatorial defects remain undiscovered; when the quota is too large, the search space expands exponentially, wasting testing resources and causing unnecessary disruption to business operations. Consequently, static methods either leave dangerous blind spots in evolving systems or waste computational resources, making them increasingly ill-suited for modern CI/CD pipelines.

\begin{tcolorbox}[
  enhanced,
  colback=gray!5,
  colframe=gray!500,
  arc=2mm,
  boxrule=1.5pt,
  left=1.5mm,right=1.5mm,top=1mm,bottom=1mm
]
\textbf{Insight 3:}
Configuring a static bound on the number of APIs per combinatorial fault ignores the evolving dynamics of microservice-based systems, causing insufficient exploration or excessive testing effort. Fault injection should adaptively adjust the bounds based on runtime feedback as the system evolves, balancing the efficiency of discovering valid combinatorial faults with testing cost.
\end{tcolorbox}

\section{Method}
\label{sec:3}
\subsection{Overview}
\begin{figure}[t]
  \centering
  \includegraphics[width=\linewidth]{./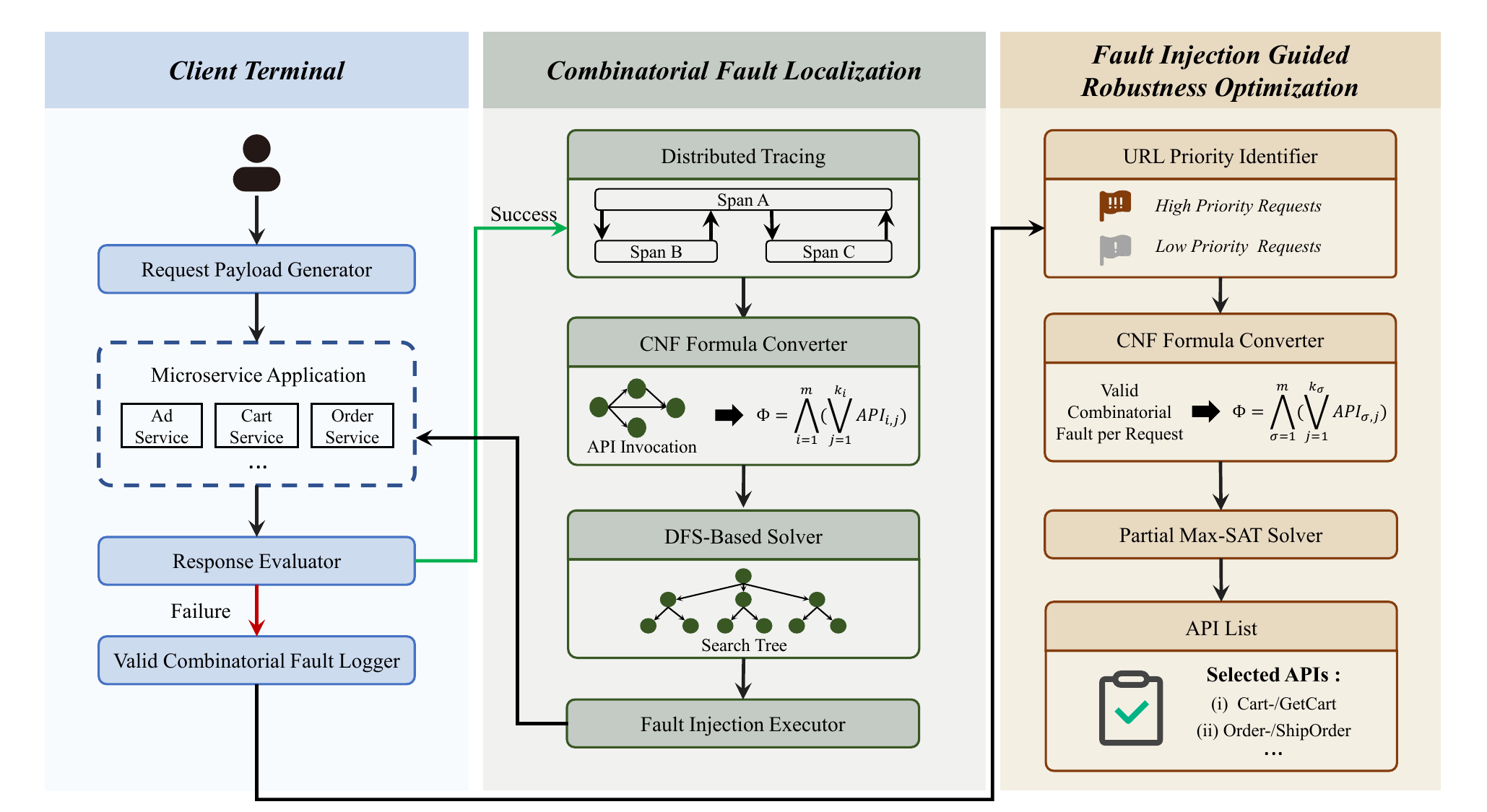}
  \caption{System architecture of FastFI.}
  \label{fig: FastFI-Overview}
\end{figure}
Fig.~\ref{fig: FastFI-Overview} illustrates FastFI, a framework designed to enhance the robustness of API call sites through fault injection. Guided by request-level combinatorial-fault injection and combining logical constraint solving with optimization techniques, FastFI aims to provide a low-cost and high-benefit solution for improving robustness in DevOps pipelines, while ensuring a controlled fault blast radius and maintaining online service availability. The framework consists of three core components:
\begin{itemize}[leftmargin=*,labelsep=0.5em]
    \item \textbf{Client Terminal.} This component first generates specific request payloads and issues them to the target microservice application. It then observes and evaluates the responses to determine whether the injected fault induces a failure. For failure-inducing cases, the client terminal records the corresponding fault as a valid combinatorial fault. For success cases, it forwards the execution outcome to the subsequent localization stage to guide further solving and discover additional combinatorial faults. Overall, the outcome of each trial (success or failure) is fed back to prune subsequent candidates, thereby efficiently filtering and validating combinatorial faults.
    \item \textbf{Combinatorial Fault Localization.} This component targets robustness assessment in microservice-based systems by identifying valid combinatorial faults that can trigger request failures under a realistic fault bound. Given a user request and a tester-specified maximum number of APIs in a combinatorial fault, our goal is to discover all minimal valid combinatorial faults whose injection is sufficient to cause the request to fail. Concretely, the component first collects the distributed trace associated with the request and encodes the execution path extracted from the trace into a monotone CNF formula. It then applies a DFS-based SAT solver to solve all size-bounded minimal satisfying assignments. Finally, the fault injection executor translates each identified combinatorial fault into an EnvoyFilter configuration and injects the corresponding faults into the Kubernetes-deployed application for validation. Unlike unbounded fault injection, which can always force a failure by injecting enough faults, FastFI focuses on the smallest and most critical combinations within a practical limit, thereby revealing which combinatorial faults are most likely to compromise the request and providing actionable guidance for enhancing API call-site robustness.
    \item \textbf{Fault Injection Guided Robustness Optimization.} This component performs robustness optimization by identifying critical APIs whose call-site robustness should be enhanced under a constrained budget. After enumerating all valid combinatorial faults for each request, it assigns requests to high-priority and low-priority tiers using a chosen prioritization policy (e.g., user access frequency). It then converts the per-request valid combinatorial faults into a monotone CNF formula that encodes which APIs must be hardened at their call sites to prevent request failures. Finally, it formulates a Partial Max-SAT problem that maximizes robustness gains while satisfying the budget constraints, and solves it to obtain a list of critical APIs. This list guides which call sites to enhance robustness (e.g., via failover, graceful degradation, or proactive controls).
\end{itemize}

\subsection{Combinatorial Fault Localization}
\subsubsection{DFS-Based Solver for Enumerating All Minimal SAT Solutions of the Monotone CNF Formula}\label{sec:method-FastSATSolver}

As discussed in Motivation~2 (Section~\ref{sec:motivation2}), Z3 becomes a bottleneck for enumerating all minimal SAT solutions of the monotone CNF formula constructed from alternative execution paths for a request instance, thereby reducing the efficiency of discovering valid combinatorial faults. To address this issue, we design a specialized solver for enumerating all minimal SAT solutions based on depth-first search (DFS) combined with bitmasking and pruning rules to accelerate the enumeration of minimal satisfying assignments for the monotone CNF formula. The solver exploits structural properties of the monotone CNF formula induced by modeling alternative execution paths, resulting in faster solving.
As shown in Algorithm~\ref{alg:dfs_based_solver}, our DFS-based solver for minimal SAT solutions comprises three components: \textbf{State Representation}, \textbf{Search Strategy}, and \textbf{Pruning Strategy}.

\renewcommand{\algorithmicrequire}{\textbf{Input:}}
\renewcommand{\algorithmicensure}{\textbf{Output:}}
\begin{algorithm}[t]
\caption{DFS-Based Solver for Enumerating All Minimal SAT Solutions}
\label{alg:dfs_based_solver}
\begin{algorithmic}[1]
\REQUIRE CNF formula $\Phi=\bigwedge_{i=1}^{m}\big(\bigvee_{j=1}^{k_i} API_{i,j}\big)$, upper bound of combinatorial-fault size $current\_crashes$
\ENSURE all minimal SAT solutions $minimal\_solutions$
\STATE $clause\_list \gets \mathrm{BuildClauseMasks}(\Phi)$ \COMMENT{ $clause\_list[i]$: list $(mask,node)$ of clause $i$ }
\STATE $best\_depth \gets \mathrm{InitializeDepthStateMap}()$
\STATE $full\_mask \gets 2^{m}-1$,\ $maxd \gets current\_crashes$
\STATE $stack \gets [(full\_mask,0,\emptyset)]$,\ $minimal\_solutions \gets [\,]$
\WHILE{$stack \neq \emptyset$}
  \STATE $(u,d,chosen) \gets pop(stack)$
  \IF{$u = 0$}
    \IF{{($chosen$ \textbf{in} $minimal\_solutions$)} \OR {(\textbf{not} $\mathrm{IsMinimal}(chosen, full\_mask)$)}} \STATE \textbf{continue} \ENDIF
    \STATE $minimal\_solutions.\textsc{append}(chosen)$,\ \textbf{continue}
  \ENDIF
  \IF{$d \ge maxd$ \OR $d > best\_depth[u]$} \STATE \textbf{continue} \ENDIF
  \STATE $best\_depth[u] \gets d$
  \STATE $i \gets bit\_length(u \land -u) - 1$
  \FORALL{$(mask,node)$ \textbf{in} $clause\_list[i]$}
    \IF{$node \in chosen$} \STATE \textbf{continue} \ENDIF
    \STATE $stack.\textsc{append}\big((u \land \sim mask,\ d+1,\ chosen \cup \{node\})\big)$
  \ENDFOR
\ENDWHILE
\RETURN $minimal\_solutions$
\end{algorithmic}
\end{algorithm}

\textbf{State Representation:}
To reduce memory overhead during the search, we represent the clause-coverage state of a candidate combinatorial fault using an $m$-bit mask $\mathit{mask}\in\{0,1\}^{m}$ (implemented as an integer), where $m$ is the number of clauses in the CNF formula
$\Phi=\bigwedge_{i=1}^{m}\left(\bigvee_{j=1}^{k_i} API_{i,j}\right)$.
Specifically, the $i$-th bit $C_i$ of $\mathit{mask}=(C_1,\ldots,C_m)$ encodes the coverage of clause $i$: $C_i=1$ denotes that clause $i$ is uncovered, and $C_i=0$ denotes that it is covered. We initialize $\mathit{mask}$ to all ones, i.e., $C_i=1$ for all $i\in\{1,\ldots,m\}$, meaning that no clause has been covered yet.

\textbf{Search Strategy:}
The solver performs a depth-first search (DFS) over the space of coverage masks. At each step (Lines 5--6), it scans $\mathit{mask}$ from the least significant bit upward and identifies the first uncovered clause (Line 17). It then selects a literal $API_{i,j}$ from clause $i$ (i.e., an injection site on the corresponding execution path), adds it to the current combinatorial fault $F \leftarrow F \cup \{API_{i,j}\}$, and updates the coverage mask by clearing the bits of all clauses covered by $API_{i,j}$:
\begin{equation}\label{eq:mask-update}
\mathit{mask} \leftarrow \mathit{mask}\ \land\ \sim\Bigl(\operatorname{bitmask}\bigl(\operatorname{cover}(API_{i,j})\bigr)\Bigr).
\end{equation}
In Eq.~\eqref{eq:mask-update}, $\operatorname{cover}(API_{i,j})$ denotes the set of clauses that would be satisfied once $API_{i,j}$ is selected. The function $\operatorname{bitmask}(\cdot)$ maps this set to a binary mask whose $t$-th bit is $1$ iff clause $t$ is covered by $API_{i,j}$. Taking the bitwise AND with its negation therefore clears precisely those covered clauses, producing a new state for subsequent search (Lines 18--23).

\textbf{Pruning Strategy:}
To prevent faults from exceeding the user-specified bound, the search prunes any branch whose depth exceeds the maximum number of APIs included in each combinatorial fault (Lines 13--15). In addition, we maintain a table $\mathit{best\_depth}[\mathit{mask}]$ that records the minimum depth at which each coverage mask has been reached. When the DFS revisits the same $\mathit{mask}$ at a larger depth, the branch is discarded because it cannot produce a solution with fewer selected APIs than an already explored state. Together, these rules substantially reduce redundant exploration while preserving completeness within the depth bound.

The DFS reaches a satisfying assignment when all $m$ bits of the mask become $0$ (i.e., $\mathit{mask}=0$), meaning that the current combinatorial fault $F$ covers all clauses in $\Phi$ (Line 7). However, such a satisfying assignment may still contain redundant APIs. Therefore, for every satisfying combinatorial fault found, we further apply an \emph{redundancy} (subset-minimality) check: we keep $F$ iff removing any single API from $F$ makes $\Phi$ unsatisfied (Lines 8--11). After this minimality filtering step, the remaining solutions constitute the final set of minimal fault-injection plans.

\begin{figure}[t]
  \centering
  \includegraphics[width=0.9\linewidth]{./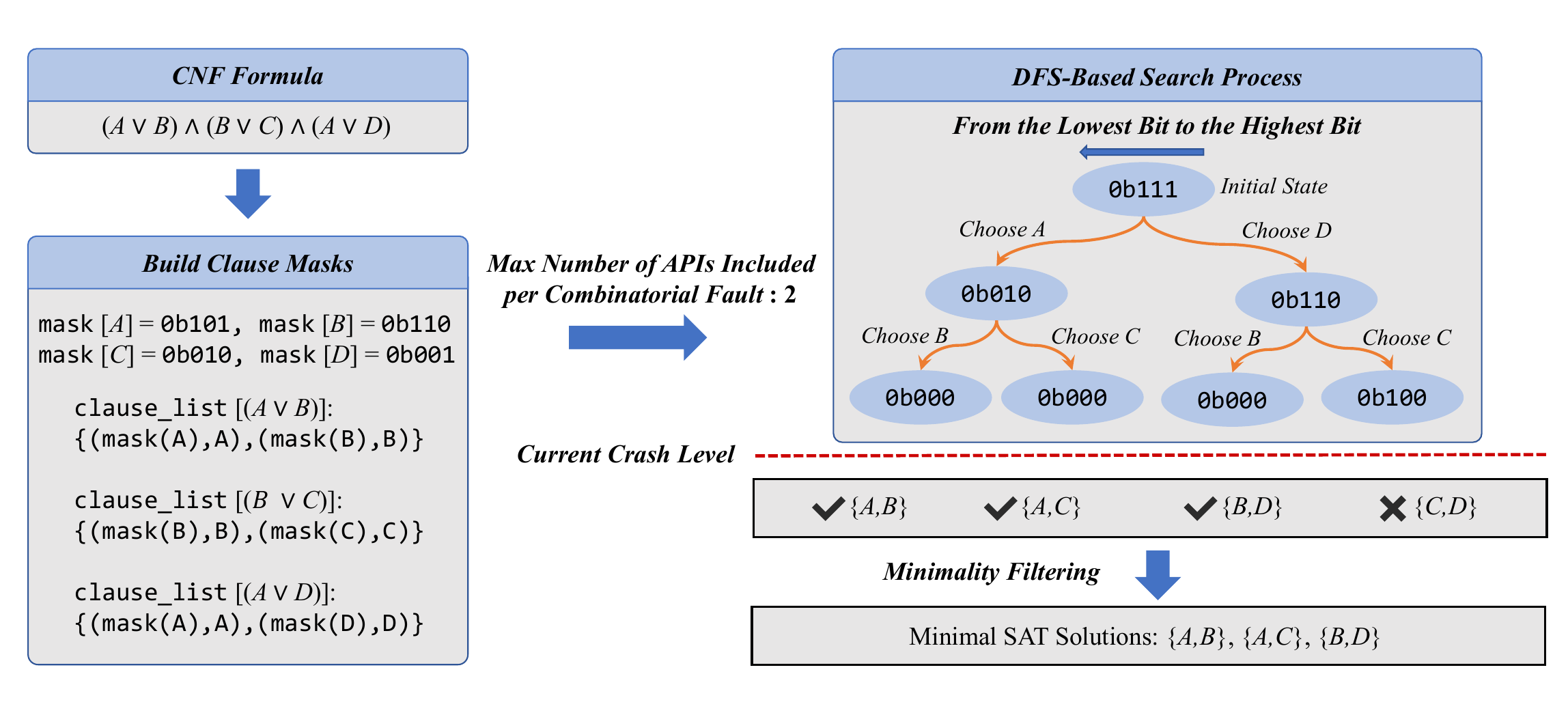}
  \caption{Illustrative example of the DFS-based solver for enumerating minimal SAT solutions.}
  \label{fig: FastDFS}
\end{figure}

Consider the CNF formula in Fig.~\ref{fig: FastDFS}, which demonstrates the step-by-step execution of our DFS-based solver for enumerating all minimal SAT solutions. The input formula is $(A \lor B)\ \land\ (B \lor C)\ \land\ (A \lor D)$, and the maximum number included per combinatorial fault is set to $k=2$. We first encode each clause as one bit in an uncovered-clause mask. Thus, the initial state is $\mathit{mask}=0\text{b}111$, meaning that all three clauses remain uncovered. Next, we compute the clause-coverage bitmask of each Boolean variable (API) and build the clause-to-candidate index $\mathit{clause\_list}$ as shown on the left of Fig.~\ref{fig: FastDFS}. For example, $A$ covers clauses $(A\lor B)$ and $(A\lor D)$, hence its coverage mask is $0\text{b}101$; similarly, $B$, $C$, and $D$ have coverage masks $0\text{b}110$, $0\text{b}010$, and $0\text{b}001$, respectively.

The DFS proceeds by always selecting the lowest-index uncovered clause, i.e., the least significant $1$-bit in the current mask. Starting from $0\text{b}111$, the solver first selects clause $(A \lor D)$ and branches on its candidates $A$ and $D$ (top branch on the right of Fig.~\ref{fig: FastDFS}). If it chooses $A$, the uncovered mask is updated by $\mathit{mask}\leftarrow \mathit{mask}\land \sim 0\text{b}101=0\text{b}010$, indicating that only clause $(B\lor C)$ remains uncovered. The solver then branches on $(B\lor C)$ and chooses either $B$ or $C$, yielding masks $0\text{b}000$ for both $\{A,B\}$ and $\{A,C\}$. Reaching $\mathit{mask}=0$ means all clauses are satisfied, so both sets are added to the candidate set of satisfying solutions.

In the other branch, if the solver chooses $D$ at the root, the uncovered mask is updated as $\mathit{mask}\leftarrow 0\text{b}111\land \sim 0\text{b}001=0\text{b}110$, which means that clauses $(A\lor B)$ and $(B\lor C)$ remain uncovered. The solver then continues by scanning from the least significant bit and selects the lowest-index uncovered clause, which is $(B\lor C)$. Accordingly, it branches on the candidates in that clause (i.e., $B$ or $C$): choosing $B$ updates the mask to $\mathit{mask}\leftarrow 0\text{b}110\land \sim 0\text{b}110=0\text{b}000$, yielding the satisfying set $\{B, D\}$; choosing $C$ updates the mask to $\mathit{mask}\leftarrow 0\text{b}110\land \sim 0\text{b}010=0\text{b}100$, which indicates that clause $(A\lor B)$ is still uncovered.

Finally, after obtaining all satisfying candidates up to depth $k=2$, the solver applies minimality filtering (redundancy): a candidate is kept only if removing any single variable makes the formula unsatisfiable. In this example, $\{A, B\}$, $\{A, C\}$, and $\{B, D\}$ all pass the check and therefore constitute the final minimal SAT solutions: $\{\{A, B\},\{A, C\},\{B, D\}\}$. In contrast, the candidate $\{C, D\}$ is pruned immediately upon generation: it reaches the maximum depth $k=2$ while still leaving the first clause $(A\lor B)$ uncovered, and thus cannot be extended into any satisfying solution. Consequently, that node and all of its descendants are discarded from further exploration.

Next, we analyze the algorithm’s time complexity as follows:
\begin{theorem}[Time Complexity]
\label{proof:time_complex}
\textnormal{Let $m$ be the number of clauses in the CNF formula, $V$ the maximum number of variables per clause, $k$ the maximum number of APIs included in each combinatorial fault, and $\bar s$ the average number of clauses covered by a single Boolean variable. Then, the time complexity of the DFS-based solver for enumerating all minimal SAT solutions is:}
\begin{equation}\label{eq:time-complexity-leafmin}
  T
  = O\Bigl(
    V\,(k+1)(k^2+1)\,\sum_{d=0}^{\lfloor m/\bar s\rfloor}\binom{m}{d\cdot\bar s}
  \Bigr),
\end{equation}

\textnormal{Furthermore, if the coverage rate $\bar s$ is sufficiently large, we obtain the stronger bound:}
\begin{equation}\label{eq:time-complexity-leafmin-another}
  T
  \ll O\Bigl(
    V\,(k+1)\,(k^2+1)\,2^m
  \Bigr).
\end{equation}
\end{theorem}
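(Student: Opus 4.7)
The overall strategy is to bound the total running time as (number of search nodes visited) $\times$ (work per node), then separately count the nodes at each DFS depth using the memoization table $\mathit{best\_depth}$. I will verify the three factors that appear in Eq.~\eqref{eq:time-complexity-leafmin}: the $V$ factor comes from the per-clause branching, $(k+1)$ from the depth bound, $(k^2+1)$ from the minimality filter, and the binomial sum from counting reachable coverage masks.

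First, I would argue that each individual search node performs work bounded by the product $V\cdot(k^2+1)$. Inside the \texttt{while} loop, locating the lowest-index uncovered clause is $O(1)$ using the bitmask identity $u\land -u$; the node then branches over the $\mathit{clause\_list}$ entries of that clause, which contains at most $V$ candidates (the maximum number of literals per clause). When a leaf is reached ($u=0$), the minimality filter of Lines~8--11 removes each of the at most $k$ selected APIs, recomputes the disjoint union of the remaining coverage masks via $k-1$ bitwise ORs, and checks whether it still equals $\mathit{full\_mask}$; this yields $O(k^2)$ per leaf. Combining these contributions gives the $V(k^2+1)$ per-node factor.

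Next, I would bound the number of distinct nodes. The pruning rule $d > \mathit{best\_depth}[u]$ guarantees that any given coverage mask $u$ is expanded at most once, so it suffices to count reachable masks grouped by depth. At depth $d$, the selected APIs have on average cleared $d\bar s$ bits of the $m$-bit mask, so the set of masks reachable by depth $d$ is contained in the family of $m$-bit strings with $d\bar s$ zeros; this family has cardinality $\binom{m}{d\bar s}$. Summing over depths $d=0,1,\ldots$ and noting that (i)~the search is cut off at depth $k$ by Line~13, and (ii)~beyond $d=\lfloor m/\bar s\rfloor$ the average number of covered clauses already exceeds $m$ so no genuinely new masks remain, gives the $(k+1)\sum_{d=0}^{\lfloor m/\bar s\rfloor}\binom{m}{d\bar s}$ bound on total nodes (the $(k+1)$ factor absorbs the depth-indexed outer loop). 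Multiplying this by the per-node cost yields Eq.~\eqref{eq:time-complexity-leafmin}.

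For Eq.~\eqref{eq:time-complexity-leafmin-another}, I would observe that the sum $\sum_{d=0}^{\lfloor m/\bar s\rfloor}\binom{m}{d\bar s}$ selects only the $\lfloor m/\bar s\rfloor+1$ binomial terms whose indices are multiples of $\bar s$, whereas $2^m=\sum_{d=0}^{m}\binom{m}{d}$ includes all $m+1$ terms; when $\bar s$ is large, the skipped terms dominate, so the ratio between the two tends to zero, justifying the strict inequality $\ll$.

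\textbf{Main obstacle.} The most delicate point is justifying that the state count at depth $d$ is bounded by $\binom{m}{d\bar s}$ when $\bar s$ is only an \emph{average}: individual branches may clear fewer or more bits than $d\bar s$, so this is an amortized rather than a worst-case statement. I would address this by aggregating the bound across all reachable masks—summing the coverage increments along any root-to-node path totals $d\bar s$ on average, so the family of multisets of cleared-bit counts is dominated (in expectation) by the binomial family used above. A secondary subtlety is that the factor $(k^2+1)$ charges the minimality-check cost at leaves to every node; this is justified either by observing that internal work is strictly smaller, or by a simple amortized argument that each leaf has at most $V$ ancestors per level.
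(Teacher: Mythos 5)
Your proposal is correct in substance and follows essentially the same route as the paper: bound the number of expanded states via the $\mathit{best\_depth}$ memoization and the depth cutoff, charge $O(V)$ branching work per expansion and $O(k^2)$ per leaf minimality check, bound the reachable masks by the Hamming-weight layers $\binom{m}{d\bar s}$ determined by the average coverage $\bar s$, and compare the sparse binomial sum against $2^m$ for the second bound. One inaccuracy is worth flagging: your claim that the pruning rule $d > \mathit{best\_depth}[u]$ guarantees each mask is expanded \emph{at most once} is false, since the same mask can be re-expanded when it is later reached at a smaller (or equal) depth; the paper instead obtains the $(k+1)$ factor precisely from the fact that $\mathit{best\_depth}[u]$ can take at most $k+1$ distinct values $d\in\{0,\dots,k\}$, so each mask is expanded $O(k+1)$ times, whereas your "$(k+1)$ absorbs the depth-indexed outer loop" remark attaches the factor without a clear justification. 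Because you include the $(k+1)$ factor anyway, your final expression coincides with Eq.~\eqref{eq:time-complexity-leafmin} and the bound stands, but the stated reason for that factor should be corrected. Your acknowledgment that the $\binom{m}{d\bar s}$ layer count is an amortized (average-$\bar s$) rather than worst-case statement matches a heuristic step the paper itself makes ("typically lie on Hamming-weight layers"), so you are no weaker than the paper there.
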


\begin{proof}
We prove the time complexity bound through five steps:
\paragraph{\textbf{(i) Analysis of State Expansions.}} The solver maintains a pruning table $\mathit{best\_depth}[u]$ for each uncovered-mask $u\in\{0,1\}^m$, initialized to $k+1$. A state $(u,d)$ is expanded only when $d \le \mathit{best\_depth}[u]$, after which the table is updated to $\mathit{best\_depth}[u]\leftarrow d$. Any subsequent visit to the same mask $u$ at depth $d'> \mathit{best\_depth}[u]$ is pruned without expansion. Let $R$ denote the number of distinct masks ever reached. Since $\mathit{best\_depth}[u]$ can decrease at most $k+1$ times (for $d\in\{0,1,\ldots,k\}$), each reachable mask is expanded at most $k+1$ times, and thus the total number of state expansions is bounded by $E \le (k+1)R$.
\paragraph{\textbf{(ii) Per-Expansion Computational Cost.}} In each expansion, the algorithm firstly selects one uncovered clause using constant-time bit operations. It then iterates over at most $V$ candidate variables in the selected clause; for each candidate, it performs $O(1)$ work to update the mask and push the resulting child state onto the DFS stack. Therefore, the dominant cost per expansion is the clause-candidate iteration, which costs $O(V)$. Across all expansions, the total branching cost is $O(VE)=O\bigl(V(k+1)R\bigr)$.
\paragraph{\textbf{(iii) Leaf Minimality Check Cost.}} Whenever the search reaches a satisfying leaf state ($u=0$), it invokes \textsc{IsMinimal} on the selected set $chosen$. In a direct implementation, \textsc{IsMinimal} removes each selected variable in turn and recomputes the remaining coverage to test necessity, which costs $O(k^2)$ time per call. Let $C$ denote the total number of satisfying leaf hits. The total cost of leaf minimality checks is thus $O(Ck^2)$. Moreover, the number of leaf hits is upper-bounded by the number of generated child states. Since each expansion generates at most $V$ children, we have $C \le VE$.
\paragraph{\textbf{(iv) Refined Bound on Reachable Masks.}} Let $\bar s$ denote the average number of clauses covered by selecting one variable. Consequently, reachable uncovered masks typically lie on Hamming-weight layers $\{m,\,m-\bar s,\,m-2\bar s,\,\ldots,\,m-\lfloor m/\bar s\rfloor\cdot \bar s\}$, which yields $R \le \sum_{d=0}^{\lfloor m/\bar s \rfloor}\binom{m}{d\cdot\bar s}$.
\paragraph{\textbf{(v) Overall Time Complexity.}} Combining (i)--(iv), we obtain $T = O\bigl(VE + Ck^2\bigr)$. Using $E \le (k+1)R$ and $C \le VE$, we further have $T = O\bigl(V(k+1)(k^2+1)R\bigr)$. Substituting the bound on $R$ from (iv) yields Eq.~\eqref{eq:time-complexity-leafmin}. Finally, when the coverage rate $\bar s$ is sufficiently large, we have $R \ll 2^m$, which implies Eq.~\eqref{eq:time-complexity-leafmin-another}.
\end{proof}

\subsubsection{Dynamic $k$-Fault Injection}\label{sec:method-K-Fault}

\begin{algorithm}[t]
\caption{$k$-Fault Based Dynamic Fault Injection}
\label{alg:kfault-dynamic-inject}
\begin{algorithmic}[1]
\REQUIRE URL request $url$, search bound $k_{\max}$
\ENSURE the set $\mathcal{V}$ of valid combinatorial faults
\STATE $\Phi \leftarrow \mathrm{GetCNFClause}(\mathrm{ExecuteWithNoFaultInject}(url))$
\STATE $k \leftarrow 1,\ \mathcal{V} \leftarrow \emptyset,\ stack \leftarrow \mathrm{DFSBasedSolver}(\Phi,k)$
\WHILE{$stack \neq \emptyset$}
    \STATE $S \leftarrow \mathrm{pop}(stack)$
    \IF{ ($\exists V_0 \in \mathcal{V}:\ V_0 \subseteq S$) \OR ($\mathrm{IsHistoryInjectedFault}(S)$)}
        \STATE \textbf{continue}
    \ENDIF
    \STATE $inject\_result \leftarrow \mathrm{ExecuteWithFaultInject}(url,S)$
    \STATE $path \leftarrow \mathrm{FindNewURLPath}(inject\_result)$
    \IF{$path = \mathrm{None}$}
        \STATE $\mathcal{V} \leftarrow \mathcal{V} \cup \{S\},\ \mathrm{RecordSolution}(S)$
        \STATE \textbf{continue}
    \ENDIF
    \STATE $\Phi \leftarrow \Phi \wedge \mathrm{GetCNFClause}(path)$
    \STATE $\mathrm{clear}(stack),\ stack \leftarrow \mathrm{DFSBasedSolver}(\Phi,k)$
    \IF{$stack = \emptyset\ \AND\ k < k_{\max}$}
        \STATE $k \leftarrow k+1,\ stack \leftarrow \mathrm{DFSBasedSolver}(\Phi,k)$
    \ENDIF
\ENDWHILE
\STATE \textbf{return} $\mathcal{V}$
\end{algorithmic}
\end{algorithm}

As discussed in Motivation 3 (Section~\ref{sec:motivation3}), static approaches fix the maximum number of APIs included in each combinatorial fault, which either fails to discover faults in the system or causes additional invalid fault injections. To address this problem, we propose a dynamic $k$-fault injection framework, as shown in Algorithm~\ref{alg:kfault-dynamic-inject}, which adaptively increases the combination size according to the fault injection feedback.

$k$-fault based dynamic fault injection takes a URL request $url$ and an upper bound $k_{\max}$ as input, and returns the set $\mathcal{V}$ of valid combinatorial faults. It first executes the request without fault injection, extracts the corresponding execution path, and encodes it into an initial CNF formula $\Phi$ (Line~1). We initialize the search with $k\!\leftarrow\!1$, an empty valid combinatorial-fault set $\mathcal{V}$, and a candidate stack with all minimal SAT solutions of $\Phi$ whose cardinality is $k$, as enumerated by our DFS-based solver (Line~2).

The main loop (Line~3) iteratively selects a candidate combinatorial fault $S$ from the stack (Line~4). To avoid unnecessary injections, we prune $S$ in two cases (Lines~5--7): (i) \emph{subsumption}, where there exists a previously valid fault $V_0 \in \mathcal{V}$ such that $V_0 \subseteq S$; because adding injection points cannot mask an induced failure, injecting $S$ would be redundant; and (ii) \emph{repetition}, where $S$ has been attempted before and recorded as invalid, since re-injecting the same non-failure-inducing fault is uninformative. Otherwise, we inject the faults in $S$ into the Kubernetes-deployed application via Istio's EnvoyFilter~\cite{envoyfilter} and evaluate the impact on the request (Lines~8--9). If the injection causes the request to fail, i.e., no alternative execution path can successfully serve the request and the observed failure rate exceeds a predefined threshold, we deem $S$ a valid combinatorial fault, add it to $\mathcal{V}$, and record it (Lines~10--13).

If the injection does not lead to a failure, we treat the observed execution path as a newly discovered alternative path. We then encode the path into a clause by $\mathrm{GetCNFClause}(path)$ and update the CNF formula by conjunction (Line~14). Since updating $\Phi$ may invalidate previously feasible solutions---i.e., a minimal SAT solution under the old formula may no longer satisfy the updated formula. We discard the existing candidate pool and re-solve the updated formula to prune away obsolete candidates, retaining only the minimal SAT solutions of the latest formula (Line~15). If the resulting candidate set is empty and $k < k_{\max}$, we increase the fault bound $k \leftarrow k+1$ and re-solve on the same updated formula (Lines~16--18). In this way, the procedure adaptively expands the search space from $k$-fault to $(k+1)$-fault combinations only when necessary, thereby reducing invalid injections while preserving completeness.

\subsection{Fault Injection Guided Robustness Optimization}\label{sec:method-api-choose}
The increasingly short DevOps iteration cycles~\cite{Devopsintro1,Devopsintro2} pose acute challenges to strengthening the robustness of API call sites. In each iteration, development teams can devote only limited time and engineering budget to enhancing call-site robustness. Meanwhile, fault injection experiments reveal valid combinatorial faults, leading to an optimization problem: under tight time and resource constraints, how can we quickly identify and prioritize API call sites whose robustness enhancement delivers the greatest impact at the lowest cost?

To prioritize effectively, we need to determine the importance of requests, which can be configured either statically or dynamically. In this paper, we use a frequency-driven dynamic prioritization mechanism. High-frequency requests often correspond to critical business workflows; their stability directly affects user experience and service continuity. Consequently, during API selection, we first prioritize mitigating combinatorial faults for these critical requests. For lower-priority requests, we adopt a coverage-maximization strategy to maximize the overall fault tolerance of the system to the greatest extent possible under a budget $B$, i.e., we can enhance the call-site robustness of at most $B$ APIs.

For a request $Req_i$, through the combinatorial-fault injection evaluation, we identify $m_i$ valid combinatorial faults $\{F_1, F_2, \ldots, F_{m_i}\}$, where each combinatorial fault $F_\sigma$ consists of $k_{i,\sigma}$ APIs. We define each API in a combinatorial fault as a decision variable $API_{\sigma,j}$, which indicates whether the corresponding API has been selected for enhancing its call-site robustness (assigned value 1 if chosen, and 0 otherwise).

Next, we construct the valid combinatorial faults for $Req_i$ as a CNF formula. Specifically, each fault $F_\sigma$ is transformed into a clause $\phi_{i,\sigma}$ that is the disjunction of all APIs in $F_\sigma$. The clause $\phi_{i,\sigma}$ is satisfied iff at least one API in $F_\sigma$ is chosen to enhance its call-site robustness by failover, graceful degradation, or proactive controls. This indicates that the corresponding fault has been effectively mitigated.

By conjoining all such clauses for $Req_i$, we obtain the overall CNF formula:
\begin{equation}
\Phi(Req_i) = \bigwedge_{\sigma=1}^{m_i} \left(\bigvee_{j=1}^{k_{i,\sigma}} API_{\sigma,j}\right) = \bigwedge_{\sigma=1}^{m_i} \phi_{i,\sigma}.
\end{equation}
Intuitively, $\Phi(Req_i)$ being satisfied means that all currently known valid combinatorial faults for $Req_i$ are mitigated. Equivalently, each combinatorial fault is mitigated by hardening the call-site robustness of at least one API in it. As a result, if the same combinatorial fault is injected again, the system will either degrade gracefully or reroute execution through an alternative path, thereby preventing user-visible failures under the injected fault.

Based on our business-importance analysis, the system must fully cover all known combinatorial faults for high-priority requests. Let $HPR$ and $LPR$ denote the sets of high-priority and low-priority requests, respectively. We require $\Phi(Req_i)$ to be satisfied for all $Req_i \in HPR$ and maximize coverage over $LPR$ under a budget $B$ that limits the number of APIs whose call-site robustness can be enhanced. The optimization problem is formulated as follows:

\begin{equation}
\begin{aligned}
\max\quad
& \sum_{Req_{Low}\in LPR}\mkern4mu
  \sum_{\phi_{Low,\sigma}\in\Phi(Req_{Low})} \phi_{Low,\sigma}, \\
\text{s.t.}\quad
& \Phi(Req_{High})=\text{True},\hspace{1em} \forall\, Req_{High}\in HPR, \\
& \sum_{j=1}^{n} API_j \le B,  \hspace{1em} \qquad API_j\in\{0,1\}.
\end{aligned}
\end{equation}

The API identification problem naturally maps to a Partial Max-SAT problem. The clauses of the CNF formula corresponding to high-priority requests are treated as \emph{hard} clauses, ensuring robustness for critical business requests, while the clauses of low-priority requests are treated as \emph{soft} clauses, allowing for maximizing coverage under limited enhancement resources. In this way, the problem of selecting APIs to enhance their call-site robustness reduces to a standard Partial Max-SAT optimization process.

We address this problem using a two-stage approach. In the first stage, for the hard clauses of high-priority requests $\Phi_{\text{Hard}}$, we apply the DFS-based solver for enumerating all minimal SAT solutions described in Section~\ref{sec:method-FastSATSolver} to enumerate minimal solution sets $\{\mathcal{H}^{(1)}, \mathcal{H}^{(2)}, \ldots, \mathcal{H}^{(n)}\}$. For each minimal solution $\mathcal{H}^{(i)}$, we evaluate its coverage in the soft clauses of low-priority requests $\Phi_{\text{Soft}}$ as:

\begin{equation}
\begin{aligned}
\text{cov}(\mathcal{H}^{(i)}) = |\{\phi_s \in \Phi_{\text{Soft}} \mid \phi_s(\mathcal{H}^{(i)}) = 1\}|.
\end{aligned}
\end{equation}

We then select the minimal solution with maximum coverage, $\mathcal{H}^* = \arg\max_i \text{cov}(\mathcal{H}^{(i)})$, and set the associated API variables to true. If the budget $B$ is sufficient to satisfy all hard clauses, we then pass the residual budget and the remaining unsatisfied soft clauses to a MaxSAT solver~\cite{or-tools} to continue optimization, subject to the constraint $\sum_{j=1}^{n} API_j \leq B$. If the budget $B$ is insufficient to satisfy all hard clauses, the instance is infeasible and fails to solve. This two-stage process begins with the optimal minimal SAT solution that guarantees coverage for high-priority requests and then progressively maximizes fault coverage for lower-priority requests under the given budget.

\section{Evaluation}
\label{sec:4}
In this section, we conducted evaluations to answer the following research questions:
\begin{itemize} [leftmargin=*,labelsep=0.5em]
    \item \textbf{RQ1:} How effective and efficient is FastFI in discovering valid combinatorial faults?
    \item \textbf{RQ2:} How does FastFI's DFS-based solver perform on the monotone CNF formulas produced by our model?
    \item \textbf{RQ3:} To what extent does the dynamic fault injection mechanism improve the execution efficiency of combinatorial-fault injection?
    \item \textbf{RQ4:} How accurately can FastFI identify API call sites that should be hardened, and how effective is the resulting hardening?
    \item \textbf{RQ5:} Is FastFI's resource overhead acceptable in practical deployment settings?
\end{itemize}

For \textbf{RQ1}, we examine FastFI's effectiveness and computational efficiency in discovering valid combinatorial faults.
For \textbf{RQ2}, we measure the CNF-solving time of our DFS-based solver.
For \textbf{RQ3}, we evaluate the effectiveness of our dynamic injection mechanism in reducing the number of fault-injection trials and improving overall efficiency.
For \textbf{RQ4}, we assess the accuracy of FastFI in prioritizing vulnerable API call sites and evaluate the effectiveness of the hardening guided by FastFI.
For \textbf{RQ5}, we measure the resource overhead of deploying FastFI, thereby validating its feasibility in practice.

\subsection{Evaluation Design}
\subsubsection{Benchmarks}
\label{sec: benchmarks}

\begin{table}[t]
\centering
\caption{Statistics of execution spans and invoked APIs for selected requests across benchmarks.}
\label{tab:span_api_stats}
\scriptsize
\setlength{\tabcolsep}{3pt}
\renewcommand{\arraystretch}{1.3}
\makebox[\linewidth][c]{
\resizebox{\linewidth}{!}{
\begin{tabular}{c|l|c|c||c|l|c|c}
\toprule
Benchmark & Request & \#Spans & \#Invoked APIs &
Benchmark & Request & \#Spans & \#Invoked APIs \\
\midrule
\multirow{2}{*}{Online Boutique}   & GET /                & 28  & 5  &
\multirow{2}{*}{Online Boutique}   & GET /product/\{id\}  & 26  & 7  \\
                                  & GET /cart            & 28  & 7  &
                                  & POST /cart/checkout  & 39  & 12 \\
\midrule
\multirow{3}{*}{Hotel Reservation} & GET /recommendations & 5   & 2  &
\multirow{3}{*}{Sock Shop}         & GET /login           & 15  & 2  \\
                                  & POST /reservation    & 5   & 2  &
                                  & POST /cart           & 23  & 2  \\
                                  & GET /hotels          & 13  & 5  &
                                  & POST /orders         & 34  & 9  \\
\midrule
\multirow{3}{*}{Train Ticket}      & POST /left           & 171 & 15 &
\multirow{3}{*}{Train Ticket}      & POST /cheapest       & 317 & 19 \\
                                  & POST /minStation     & 279 & 21 &
                                  & POST /quickest       & 317 & 19 \\
                                  & GET /food            & 25  & 6  &
                                  & POST /preserve       & 261 & 25 \\
\bottomrule
\end{tabular}
}%
}%
\end{table}

\mbox{}\par
\textbf{(1) Real-World Microservice Benchmarks.}  We evaluated FastFI on four widely used microservice benchmarks: Online Boutique~\cite{OnlineBoutique}, Sock Shop~\cite{SockShop}, Hotel Reservation~\cite{hotel}, and Train Ticket~\cite{TrainTicket}. Online Boutique adopts gRPC for inter-service communication, and implements distributed tracing with OpenTelemetry~\cite{OpenTelemetry}. Sock Shop and Train Ticket communicate via RESTful HTTP, and employ OpenTracing~\cite{OpenTracing} for distributed tracing. Hotel Reservation adopts gRPC for inter-service communication and also implements distributed tracing with OpenTracing. These applications vary in communication patterns, allowing us to validate the applicability and effectiveness of FastFI across diverse microservice architectures. As summarized in Table~\ref{tab:span_api_stats}, we selected, for each benchmark, requests that invoke the largest possible subset of APIs exposed by the business-logic services.

These benchmarks employ simplified designs without explicit fault-handling mechanisms, resulting in requests that lack alternative execution paths. As a result, a single component failure can cause a request to fail, limiting the scope of combinatorial fault studies. To address this limitation, following IntelliFI~\cite{IntelliFI} and MicroFI~\cite{MicroFI}, we constructed alternative execution paths by deploying multiple replicas for each microservice. We used Istio~\cite{istio} to enable automatic retries and traffic management so that when an instance fails, subsequent requests are automatically rerouted to healthy replicas. In our experiments, the Train Ticket benchmark was deployed with two and three replicas, and each of the other benchmarks was deployed with four and six replicas. For brevity in subsequent discussions, we denoted the benchmark configurations with four replicas as OB-4, HR-4, and SS-4 for Online Boutique, Hotel Reservation, and Sock Shop. Similarly, we used OB-6, HR-6, and SS-6 for their six replica configurations. We denoted the two-replica and three-replica Train Ticket configurations as TT-2 and TT-3, respectively.



\textbf{(2) Simulated Microservice Benchmark.} The above benchmarks are orders of magnitude smaller than production deployments, making them insufficient to stress-test algorithmic efficiency at scale. Moreover, the sets of APIs invoked by different requests are largely disjoint. This abstraction diverges from production deployments, where cross-service dependencies induce substantial overlap and correlation among requests, as shown in Table~\ref{tab:Alibaba-Dataset-Result-api-service}. To assess scalability and practicality, we built a simulated microservice benchmark comprising one million services across three types: business logic, database, and cache, where services of the business logic type account for the largest number. Following the call graph generation algorithm in~\cite{call-graph-generator}, which is derived from real-world production data, we applied a conditional probability model parameterized by the current depth and sibling node distribution. We capped per-step expansion to bound the maximum depth and to avoid unrealistically deep chains or excessive fan-out, thereby ensuring architectural plausibility.

To reflect production-style multipath, we generated alternative execution paths using a \emph{grouped-skeleton} scheme: for each request, backup paths are partitioned into groups; each group contains two paths that share a subset of edges termed the \emph{skeleton}, while all non-skeleton edges within the group are completely distinct. This mirrors multi-region deployments where distinct data centers or regions maintain independent call chains, enabling seamless failover and region-level isolation. To model caching behavior, we introduced intra-group asymmetry: one path models a cache hit with a lightweight fast path and fewer API calls, and the other models a cache miss with a full path that includes database access and more API calls. The two paths have an API-call ratio of \(3{:}7\). We exposed three tunable parameters for scaling: \emph{GroupNum} (the number of alternative groups per request), \emph{EdgeNum} (the total number of call edges across the two paths within a group), and \emph{BoneNum} (the number of shared edges that form the skeleton).

\subsubsection{Evaluation Environment and Implementation Details}
\label{evaluation_environment}
We built an experimental cluster on Kubernetes v1.29.15 with Istio v1.24.1 as the service mesh infrastructure. The cluster consists of six nodes: one control-plane node and five worker nodes. The control-plane node uses an Intel Xeon E3-1240 v5 (3.50\, GHz) with 16\, GB RAM, running Ubuntu 22.04 LTS. Each worker (Work-Node1 to Work-Node5) uses an Intel Core i7-4790 (3.60\, GHz) with 16\, GB RAM, also running Ubuntu 22.04 LTS.
FastFI is implemented in Python 3.9.22, and the source code is publicly available~\cite{FastFI}. For fault-injection experiments, depending on the protocol, we used EnvoyFilter to inject failures at the inbound path of the Envoy sidecar: gRPC status code 14 (UNAVAILABLE) for gRPC, and HTTP 503 (Service Unavailable) for RESTful HTTP. Meanwhile, retry behavior was configured centrally via Istio VirtualService. To perform request-level fault injection, we followed MicroFI’s strategy by generating a unique identifier as the hash of the request path and operation. We then embedded the hashing result in the tracestate~\cite{W3C-context} header field to scope fault propagation precisely. Since tracestate propagation is specified by the W3C context specification and supported by the OpenTelemetry standard~\cite{OpenTelemetry}, we modified the distributed tracing implementations in the Sock Shop, Hotel Reservation, and Train Ticket benchmarks to adopt OpenTelemetry.

\subsection{RQ1: Effectiveness of FastFI in Discovering Valid Combinatorial Faults}
\subsubsection{Baselines}

We evaluated FastFI against representative state-of-the-art lineage-driven fault-injection methods on real-world microservice benchmarks:
\begin{itemize}[leftmargin=*,labelsep=0.5em]
    \item \textbf{LDFI}~\cite{LDFI-Netflix}. It is an early lineage-driven technique adapted to microservices. It encodes alternative execution paths as a monotone CNF formula and enumerates minimal satisfying assignments, without relying on heuristic pruning.
    \item \textbf{IntelliFI}~\cite{IntelliFI}. It exploits recorded traces to prune the candidate space using rule-based heuristics. It refrains from injecting an upstream fault when the trace already manifests an upstream failure, and further removes candidates that are strict supersets of previously non-injected candidates.
    \item \textbf{MicroFI}~\cite{MicroFI}. It applies log-driven heuristics to eliminate candidates that have already been tested. By consulting the injection log, it reuses prior outcomes and avoids redundant injections.
\end{itemize}

\textbf{Request-Scoped History Reuse.} The key assumption underpinning MicroFI's reuse strategy, namely that the fault-handling logic for the same downstream API remains consistent across different call sites, does not always hold in practice. In the Online Boutique benchmark, injecting a fault into the recommendation service’s \texttt{/Listrecommendations} API causes \texttt{GET /product/\{id\}} and \texttt{GET /cart} to fail, but does not cause \texttt{POST /cart/checkout} to fail. This divergence arises from caller-dependent error handling and fallback logic, which can result in different end-to-end outcomes under the same downstream failure. Consequently, reusing API-level injection results across different requests can be unreliable and lead to erroneous fault injection conclusions. To ensure soundness in our evaluation, we applied history-based pruning conservatively and reused injection history only within the same request (i.e., per endpoint), rather than across different requests that invoke the same downstream API.

\textbf{Standardizing Baselines with Dynamic Configuration.} To ensure a fair comparison, we standardized all baselines by replacing their static configuration---a fixed combination size (i.e., a fixed maximum number of APIs included in each combinatorial fault) with a dynamic configuration, while leaving each method’s core logic and heuristics unchanged. Concretely, each baseline starts by injecting single-API faults and then incrementally increases the combination size, guided by online fault-injection feedback, until it discovers all valid combinatorial faults or reaches the configured maximum combination size. This standardization ensures that the evaluation primarily reflects end-to-end solving efficiency and the effectiveness of each method's pruning strategy.
Moreover, with a static configuration, the number of intermediate hypotheses can increase significantly on these benchmarks, resulting in an unwieldy number of candidate solutions. Since these baselines use Z3-based incremental solving, they must repeatedly add blocking constraints for previously enumerated models and re-solve to obtain the next candidate, which quickly becomes the dominant bottleneck as the candidate space grows. In our measurements on OB-4, for the request \texttt{GET /}, the static setting already becomes intractable when the CNF instance grows to six clauses: the Z3 solver fails to enumerate even a single model within 10,000 seconds. Therefore, we adopted the dynamic setting for all baselines to prevent solver enumeration artifacts from overwhelming the comparison. With this standardization, our evaluation isolates the effectiveness of each method's heuristic pruning rules and its end-to-end solving efficiency.
\subsubsection{Metrics}
On real-world microservice benchmarks, we assessed three metrics: 
\begin{itemize}[leftmargin=*,labelsep=0.5em]
    \item \textbf{Fault Injection Number:} the number of fault injections required to discover all valid combinatorial faults.
    \item \textbf{CNF-Solving Time:} the pure time spent solving CNF formulas to obtain all valid combinatorial faults.
    \item \textbf{End-to-end Time:} the full pipeline time, including workload generation, trace collection, CNF formula construction, solution computation, injection execution, failure validation, and system recovery.
\end{itemize}
On the large-scale simulated microservice benchmark, we focused on computational efficiency, measured by the CNF-solving time to discover all valid combinatorial faults for the system, as well as the fault injection number metric.

\subsubsection{Results}
\paragraph{(i) Results on Real-World Microservice Benchmarks.}

\begin{table*}[t]
\captionsetup{justification=centering}
\caption{Comparison of fault injection algorithms on real-world microservice benchmarks. \textit{Note:} OB-$n$/HR-$n$/SS-$n$/TT-$n$ denote Online Boutique/Hotel Reservation/Sock Shop/Train Ticket deployed with $n$ replicas per microservice.}
\label{tab:rq1-result}
\centering
\scriptsize
\begingroup
\setlength{\tabcolsep}{2pt}
\renewcommand{\arraystretch}{1.3} 
\large
\resizebox{\linewidth}{!}{
\begin{tabular}{c|l|cccc|cccc|cccc}
\toprule
\multirow{2}{*}{Benchmark} & \multirow{2}{*}{Request}
 & \multicolumn{4}{c|}{Fault Injection Number}
 & \multicolumn{4}{c|}{CNF-Solving Time (s)}
 & \multicolumn{4}{c}{End-to-end Time (s)} \\
\cline{3-6}\cline{7-10}\cline{11-14}
\rule{0pt}{3ex}
 & & LDFI & IntelliFI & MicroFI & FastFI
   & LDFI & IntelliFI & MicroFI & FastFI
   & LDFI & IntelliFI & MicroFI & FastFI \\
\midrule
\multirow{4}{*}{OB-4} & GET / &
        97 & 88 & 81 & \textbf{40} &
        5.20 & 4.62 & 4.44 & \textbf{1.05} &
        2299.08 & 2146.19 & 1929.09 & \textbf{985.14} \\
    & GET /product/\{id\} &
        101 & 94 & 84 & \textbf{41} &
        8.43 & 7.66 & 5.77 & \textbf{1.28} &
        2400.92 & 2178.54 & 1961.15 & \textbf{979.93} \\
    & GET /cart &
        121 & 116 & 100 & \textbf{46} &
        11.20 & 10.79 & 10.88 & \textbf{1.47} &
        2758.01 & 2658.94 & 2312.56 & \textbf{1075.65} \\
    & POST /checkout &
        177 & 136 & 152 & \textbf{53} &
        118.30 & 61.38 & 71.06 & \textbf{3.34} &
        4585.24 & 3507.73 & 3892.91 & \textbf{1416.17} \\
\midrule
\multirow{4}{*}{OB-6} & GET / &
        235 & 223 & 216 & \textbf{67} &
        46.65 & 36.31 & 35.08 & \textbf{7.20} &
        5380.31 & 5094.47 & 4923.50 & \textbf{1720.26} \\
    & GET /product/\{id\} &
        262 & 245 & 224 & \textbf{76} &
        123.87 & 96.32 & 75.53 & \textbf{10.82} &
        6378.36 & 6034.62 & 5498.09 & \textbf{1840.86} \\
    & GET /cart &
        281 & 247 & 228 & \textbf{78} &
        131.81 & 103.11 & 85.30 & \textbf{12.16} &
        6948.52 & 6135.43 & 5626.99 & \textbf{1933.31} \\
    & POST /checkout &
        413 & 266 & 288 & \textbf{106} &
        12424.17 & 9102.14 & 12002.79 & \textbf{63.98} &
        24747.15 & 16792.60 & 19224.28 & \textbf{2868.78} \\
\midrule
\multirow{3}{*}{HR-4} & GET /recommendations &
        26 & 13 & 17 & \textbf{10} &
        0.31 & 0.12 & 0.18 & \textbf{0.11} &
        574.27 & 325.88 & 441.26 & \textbf{231.67} \\
    & GET /hotels &
        122 & 75 & 97 & \textbf{23} &
        7.05 & 3.92 & 5.04 & \textbf{0.42} &
        2741.95 & 1747.87 & 2191.59 & \textbf{543.42} \\
    & POST /reservation &
        15 & 14 & 15 & \textbf{9} &
        0.18 & 0.17 & 0.17 & \textbf{0.14} &
        336.47 & 334.05 & 335.42 & \textbf{210.61} \\
\midrule
\multirow{3}{*}{HR-6} & GET /recommendations &
        60 & 34 & 50 & \textbf{16} &
        1.28 & 0.67 & 0.97 & \textbf{0.17} &
        1316.11 & 780.64 & 1101.60 & \textbf{370.94} \\
    & GET /hotels &
        236 & 202 & 222 & \textbf{31} &
        24.14 & 19.62 & 21.97 & \textbf{1.02} &
        7941.26 & 4844.69 & 5194.88 & \textbf{832.75} \\
    & POST /reservation &
        38 & 29 & 33 & \textbf{15} &
        0.77 & 0.41 & 0.59 & \textbf{0.16} &
        836.11 & 661.92 & 736.16 & \textbf{352.72} \\
\midrule
\multirow{3}{*}{SS-4} & GET /login &
        18 & 15 & 15 & \textbf{10} &
        0.19 & 0.15 & 0.15 & \textbf{0.09} &
        421.17 & 363.69 & 374.82 & \textbf{268.68} \\
    & POST /orders &
        301 & 141 & 173 & \textbf{35} &
        58.87 & 21.66 & 28.42 & \textbf{1.04} &
        7152.15 & 3349.78 & 4241.73 & \textbf{889.25} \\
    & POST /cart &
        21 & 12 & 18 & \textbf{10} &
        0.20 & 0.10 & 0.15 & \textbf{0.09} &
        462.88 & 285.75 & 377.40 & \textbf{210.48} \\
\midrule
\multirow{3}{*}{SS-6} & GET /login &
        66 & 27 & 30 & \textbf{15} &
        2.51 & 0.36 & 0.42 & \textbf{0.11} &
        1482.32 & 626.80 & 679.94 & \textbf{351.44} \\
    & POST /orders &
        463 & 280 & 324 & \textbf{52} &
        2990.70 & 2021.42 & 2280.13 & \textbf{10.14} &
        13392.38 & 8757.31 & 11102.41 & \textbf{1334.60} \\
    & POST /cart &
        54 & 25 & 25 & \textbf{16} &
        0.87 & 0.31 & 0.27 & \textbf{0.11} &
        1249.32 & 607.25 & 585.24 & \textbf{373.58} \\
\midrule
\multirow{6}{*}{TT-2} & POST /left &
        211 & 186 & 195 & \textbf{68} &
        53.40 & 48.68 & 49.60 & \textbf{6.82} &
        5251.36 & 4887.12 & 4990.69 & \textbf{1688.09} \\
    & POST /cheapest &
        372 & 343 & 354 & \textbf{156} &
        158.66 & 126.83 & 144.38 & \textbf{27.17} &
        10067.23 & 8623.13 & 9722.06 & \textbf{4571.71} \\
    & POST /minStation &
        351 & 333 & 342 & \textbf{151} &
        148.46 & 116.77 & 129.03 & \textbf{26.53} &
        9425.73 & 8113.86 & 8617.50 & \textbf{4393.10} \\
    & POST /quickest &
        368 & 341 & 357 & \textbf{153} &
        154.21 & 122.21 & 147.77 & \textbf{26.70} &
        10289.49 & 8532.38 & 9923.20 & \textbf{4521.05} \\
    & GET /food &
        33 & 27 & 28 & \textbf{16} &
        1.35 & 1.03 & 1.20 & \textbf{0.22} &
        679.45 & 577.80 & 584.87 & \textbf{289.76} \\
    & POST /preserve &
        348 & 338 & 340 & \textbf{115} &
        140.64 & 119.82 & 122.48 & \textbf{20.13} &
        23569.01 & 22311.96 & 22576.42 & \textbf{6160.43} \\
\midrule
\multirow{6}{*}{TT-3} & POST /left &
        1591 & 1565 & 1581 & \textbf{249} &
        1617.56 & 1531.13 & 1587.99 & \textbf{114.44} &
        38998.83 & 38251.02 & 38688.04 & \textbf{6613.93} \\
    & POST /cheapest &
        4013 & 3816 & 3680 & \textbf{1021} &
        18150.08 & 15224.18 & 15048.12 & \textbf{1165.69} &
        118985.63 & 111810.83 & 108374.78 & \textbf{27979.78} \\
    & POST /minStation &
        3660 & 3327 & 3337 & \textbf{917} &
        16492.12 & 12574.39 & 12732.02 & \textbf{1043.53} &
        108652.69 & 96597.06 & 97250.21 & \textbf{24791.49} \\
    & POST /quickest &
        4058 & 3902 & 3919 & \textbf{1087} &
        17812.31 & 15543.74 & 16932.20 & \textbf{1282.62} &
        119883.38 & 114479.03 & 116089.91 & \textbf{29283.32} \\
    & GET /food &
        43 & 32 & 38 & \textbf{22} &
        1.57 & 1.09 & 1.30 & \textbf{0.40} &
        1112.43 & 942.45 & 1021.76 & \textbf{660.15} \\
    & POST /preserve &
        3007 & 2863 & 2912 & \textbf{535} &
        15636.59 & 14876.04 & 15302.74 & \textbf{568.96} &
        195496.38 & 183225.47 & 189570.76 & \textbf{29215.41} \\
\bottomrule
\end{tabular}
}
\endgroup
\end{table*}

As shown in Table~\ref{tab:rq1-result}, FastFI achieves the best performance in terms of fault injection number, CNF-solving time, and end-to-end time.

\textbf{For the fault injection number}, MicroFI, IntelliFI, and LDFI require 608, 604, and 661 fault injections on average, respectively, whereas FastFI requires only 163. This corresponds to average reductions of 73.10\%, 72.93\%, and 75.24\%, respectively. This improvement is observed for the vast majority of requests in our benchmarks, although the magnitude varies by workload. The benefit is particularly pronounced for workloads with larger hypothesis spaces (e.g., OB-6 \texttt{POST /checkout}, HR-6 \texttt{GET /hotels}, SS-6 \texttt{POST /orders}, and most requests in TT-3), where the baseline heuristics still admit a large number of candidate combinations.
One key reason is that FastFI effectively manages candidate solutions across iterations. Unlike prior approaches that carry forward solutions derived from earlier CNF instances, FastFI treats each newly constructed CNF formula as a fresh search problem and invalidates previously enumerated candidates because their satisfiability can change as the CNF formula is updated by a newly observed execution path. Retaining such stale candidates can trigger repeated validations of hypotheses that are no longer satisfiable under the updated constraints, inflating the injection budget. By focusing solutions computed from the latest CNF instance and explicitly avoiding validation of outdated candidates, FastFI maintains a tighter candidate set and converges with substantially fewer injections while still discovering all valid combinatorial faults.

\textbf{For the CNF-solving time}, when a request triggers only a small set of APIs and yields short CNF clauses (e.g., SS-4 \texttt{GET /login}), FastFI’s per-instance CNF-solving time is comparable to the Z3-based baselines. As the number of APIs invoked per request increases, FastFI’s advantage widens, yielding progressively larger speedups over the Z3-based baselines. For instance, TT-3 \texttt{POST /preserve} invokes the most distinct APIs in our evaluation, and FastFI solves the resulting CNF instance in 568.96\,s, whereas the fastest baseline, IntelliFI, still requires 14,876.04\,s. 
Additionally, a complementary stressor is the number of CNF variables, which can grow even when the per-request API fan-out is not the largest. For example, OB-6 \texttt{POST /checkout} invokes 12 APIs, but the six replicas expand the variable space to 72 API variables. In this instance, FastFI spends 63.98\,s, while the fastest baseline IntelliFI needs 9,102.14\,s, i.e., 143$\times$ slower than FastFI.
On average, MicroFI, IntelliFI, and LDFI spend 2,400.88\,s, 2,243.04\,s, and 2,697.61\,s, respectively, in pure CNF-solving time, whereas FastFI requires only 137.44\,s, yielding reductions of 94.28\%, 93.87\%, and 94.91\%. These results indicate that FastFI’s search-space pruning reduces invalid injection trials while preserving fault coverage. In addition, FastFI’s DFS-based solver is markedly faster than the Z3-based solvers employed by MicroFI, IntelliFI, and LDFI.

\textbf{For the end-to-end time}, MicroFI, IntelliFI, and LDFI require on average 21,254.41\,s, 20,799.57\,s, and 22,994.24\,s, respectively, whereas FastFI completes in 4,967.45\,s, reducing end-to-end time by 76.63\%, 76.12\%, and 78.40\% compared with these baselines. These improvements are driven by two complementary effects: (i) FastFI validates substantially fewer fault hypotheses, reducing the cumulative cost of executing injections and collecting traces; and (ii) FastFI solves each CNF formula much faster with our DFS-based solver, avoiding the expensive Z3-based model enumeration that dominates the baselines when the candidate space is large. Consequently, FastFI consistently achieves the shortest end-to-end time across benchmarks, particularly for requests that invoke many APIs or contain large variable spaces.


\begin{figure}[t]
\captionsetup{skip=2pt}
  \centering
  \includegraphics[width=0.91\linewidth]{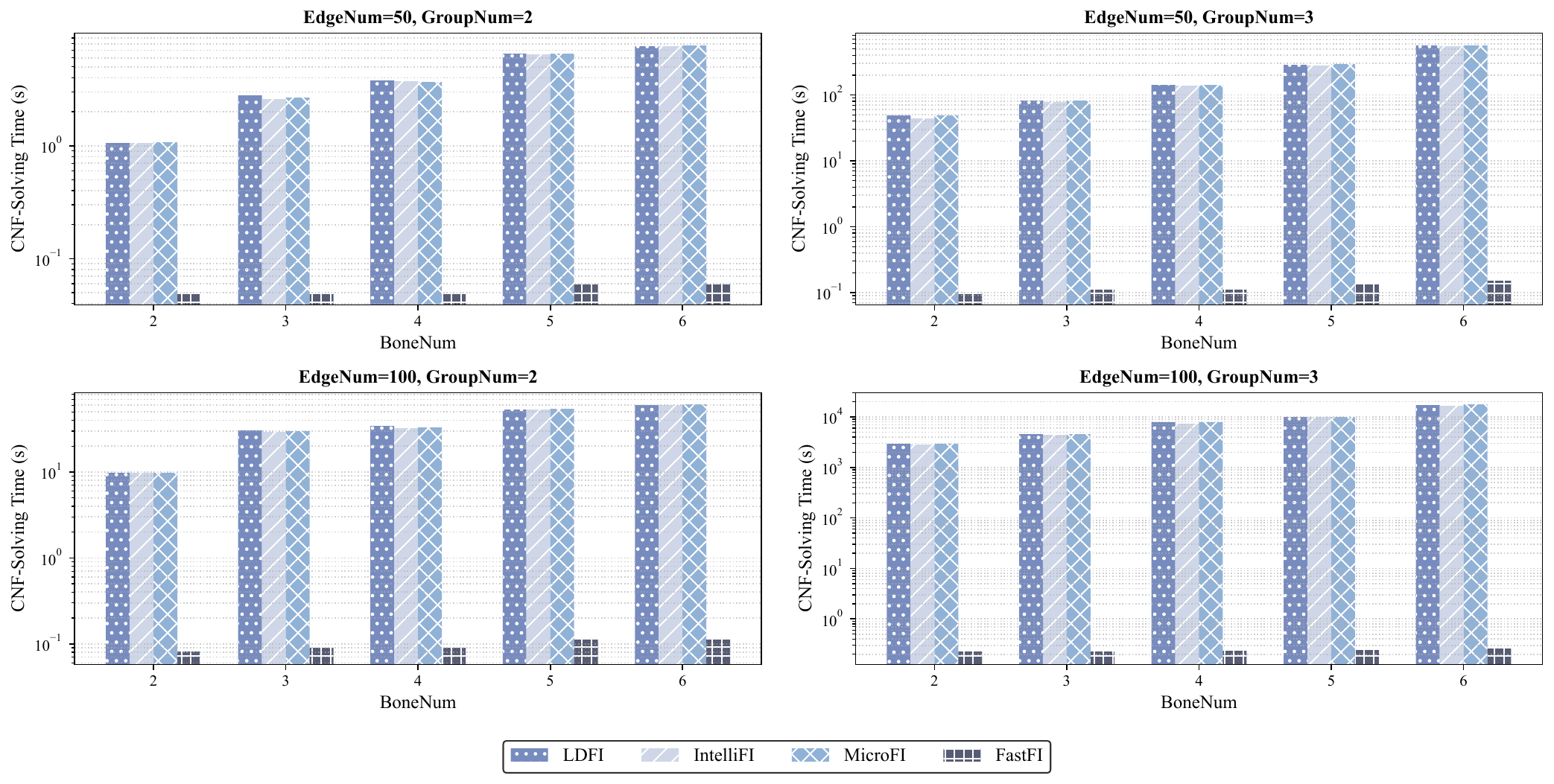}
  \caption{CNF-solving time for discovering all valid combinatorial faults under different API-dependency graph configurations.}
  \label{fig: RQ1_time_cost}
\end{figure}
\begin{figure}[t]
\captionsetup{skip=2pt}
  \centering
  \includegraphics[width=0.91\linewidth]{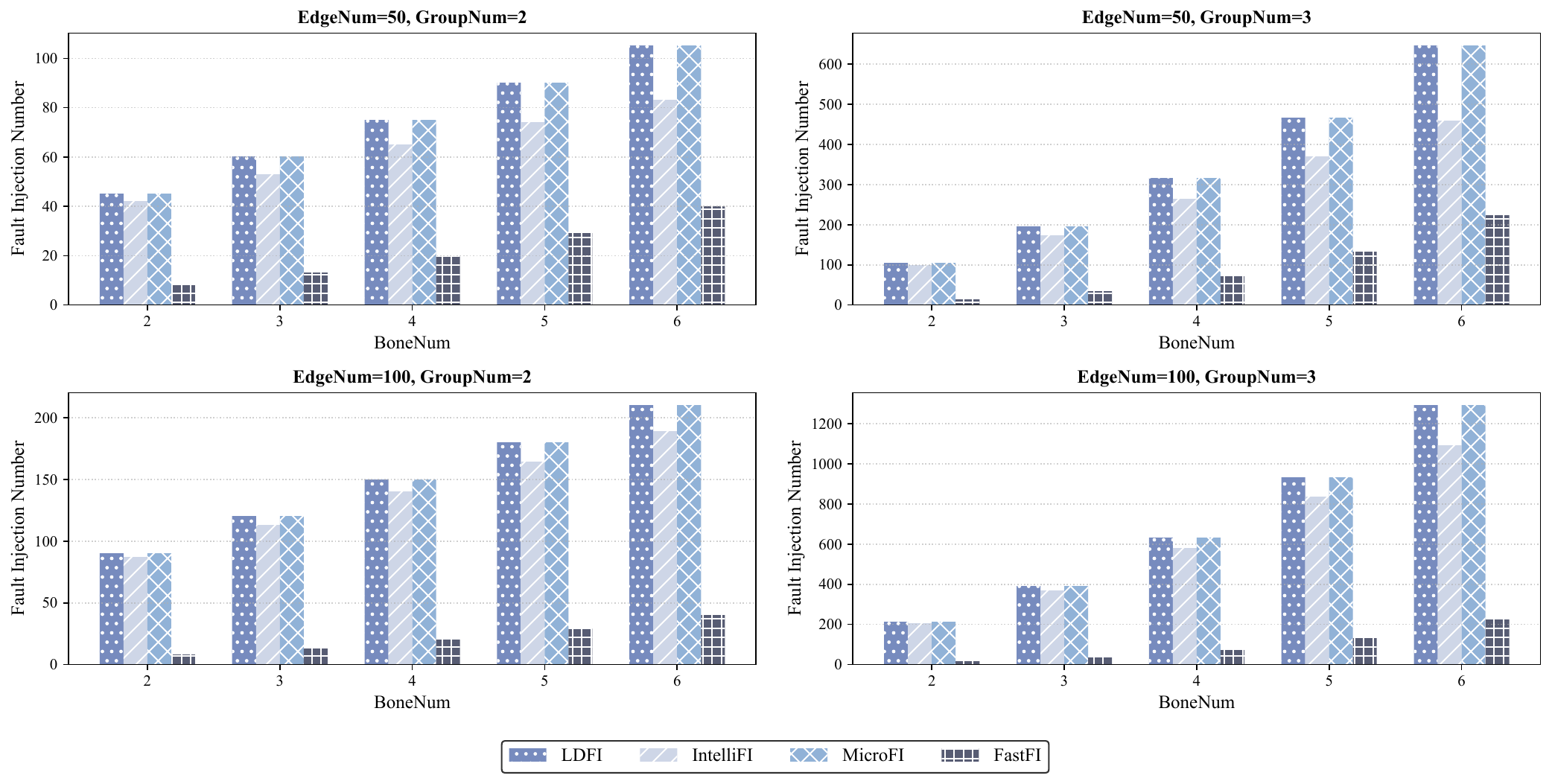}
  \caption{Fault injection number for discovering all valid combinatorial faults under different API-dependency graph configurations.}
  \label{fig: RQ1_fault_injection_num}
\end{figure}
\paragraph{(ii) Results on the Simulated Microservice Benchmark.}
As shown in Fig.~\ref{fig: RQ1_time_cost} and Fig.~\ref{fig: RQ1_fault_injection_num}, increasing the structural complexity of the per-request API-dependency graph by raising \texttt{EdgeNum} from 50 to 100, \texttt{GroupNum} from 2 to 3, and \texttt{BoneNum} from 2 to 6, rapidly enlarges the hypothesis space of combinatorial faults. Intuitively, denser dependencies and richer grouping/bone structures introduce more interacting API subsets and constraints, which in turn yield larger (and harder) CNF instances and a larger set of candidate faults to validate. Under this scaling, the Z3-based baselines exhibit a sharp blow-up in both fault injection number and CNF-solving time, indicating that their exploration cost grows rapidly with problem complexity.
For \texttt{EdgeNum}=50, \texttt{GroupNum}=2, and \texttt{BoneNum}=2, the baselines already require 45--60 injections and 1.05--1.06\,s, whereas FastFI needs only 8 injections and 0.05\,s. When scaling to \texttt{EdgeNum}=100, \texttt{GroupNum}=3, and \texttt{BoneNum}=6, the baselines increase to 1,092--1,290 injections and 16,571.27--17,193.21\,s, while FastFI requires 222 injections and 0.26\,s, achieving the same fault coverage with a substantially smaller injection budget. Notably, the gap widens as complexity increases, suggesting that FastFI mitigates the combinatorial explosion more effectively than these baselines.
This advantage stems from FastFI’s aggressive pruning together with the efficiency of its DFS-based solver. In contrast, MicroFI, IntelliFI, and LDFI continue to validate candidates that were SAT solutions to earlier CNF formulas even after the formula is updated; as a result, they spend additional injections and solver effort on candidates that have become invalid under the updated formula, amplifying both runtime and injection cost at larger scales.

\begin{tcolorbox}[
  enhanced,
  colback=gray!5,
  colframe=gray!500,
  arc=2mm,
  boxrule=1.5pt,
  left=1.5mm,right=1.5mm,top=1mm,bottom=1mm
]
\textbf{Answer to RQ1: }
FastFI is substantially more efficient at identifying valid fault combinations. On real-world benchmarks, it reduces fault injection number trials by 72.93\%--75.24\% on average and shortens CNF-solving time by 93.87\%--94.91\%, yielding a 76.12\%--78.40\% reduction in end-to-end time. These gains stem from aggressively discarding stale candidates after each CNF formula update and leveraging a DFS-based solver that efficiently enumerates minimal satisfying assignments while avoiding expensive Z3-based model enumeration.
\end{tcolorbox}

\subsection{RQ2: Performance of the DFS-Based Solver for Monotone CNF Formulas}
\subsubsection{Baselines}
To evaluate the performance of FastFI's DFS-based solver on monotone CNF formulas, we compare it against three representative SAT solvers:
\begin{itemize}[leftmargin=*,labelsep=0.5em]
    \item \textbf{Z3} ~\cite{z3_paper}. It is a widely used SAT solver developed by Microsoft Research. It has also been adopted as the SAT solver in prior fault injection methods, including LDFI, IntelliFI, and MicroFI. When solving SAT instances, Z3 employs a Davis--Putnam--Logemann--Loveland (DPLL) search procedure augmented with conflict-driven clause learning (CDCL)~\cite{DPLL,CDCL}, featuring conflict analysis, clause learning, and non-chronological backtracking. As a general-purpose solver, Z3 is engineered to support a broad range of constraints beyond propositional satisfiability, which makes it a strong reference point for assessing how a general solver performs on our specialized monotone CNF formula.
    \item \textbf{AE-Kissat-MAB} ~\cite{AE-kissat-MAB}. It ranked first in the Main Track of the SAT Competition 2025. It builds on Kissat-public and introduces an LLM-driven solver-evolution pipeline, where large language models guide the discovery and refinement of configurations, combined with multi-armed bandit (MAB) strategies to efficiently explore. This baseline represents an emerging paradigm that leverages LLMs to systematically enhance CDCL solvers.
    \item \textbf{Kissat-public}~\cite{Kissat,kissat_paper}.  It ranked second in the Main Track of the SAT Competition 2025. and is widely regarded as a high-performance, state-of-the-art CDCL solver. It incorporates aggressive in-processing (e.g., bounded variable elimination, subsumption, and vivification), carefully tuned restart policies, and optimized data structures for efficient unit propagation. Kissat-public has also served as the foundation for several competitive SAT solvers.
\end{itemize}

Together, these baselines enable a well-rounded comparison with FastFI’s specialized DFS-based CNF-solving approach. To isolate the solver’s effect, we kept all other FastFI components fixed (including pruning and dynamic mechanisms) and replaced only the SAT solver during evaluation.

\subsubsection{Metrics} On both the real-world microservice benchmarks and the large-scale simulated microservice benchmark, we reported \textbf{CNF-solving time} and the \textbf{Average Clause Overlap} ($ACO$) metric, which quantifies the overlap among alternative execution paths for each request.
For each request, we constructed a monotone CNF formula $F=\bigwedge_{i=1}^{m} C_i$, where each clause $C_i\subseteq X$ corresponds to an alternative execution path and $X$ is the set of all Boolean variables.
To compute $ACO$, we first calculate the average clause length as $\bar{\ell}=\frac{1}{m}\sum_{i=1}^{m}|C_i|$.
For each variable $x\in X$, we define its clause coverage $\deg(x)$ as the number of clauses that contain $x$:
$\deg(x)=|\{\, i\in\{1,\dots,m\}\mid x\in C_i \,\}|$.
We then define $ACO$ as the normalized average intersection size between two distinct clauses:
\begin{equation}
ACO=\frac{2}{m(m-1)\bar{\ell}}\sum_{x\in X}\binom{\deg(x)}{2}.
\end{equation}
Intuitively, $ACO$ measures the average fraction of variables shared by two uniformly randomly selected alternative paths. A larger $ACO$ indicates heavier overlap among alternative execution paths, whereas a smaller $ACO$ suggests that the paths are closer to being disjoint.

\subsubsection{Results}
\paragraph{(i) Results on Real-World Microservice Benchmarks.}
\begin{table*}[t]
\centering
\captionsetup{justification=centering}
\caption{CNF-solving time comparison across solvers on real-world microservice benchmarks. \textit{Note:} OB-$n$/HR-$n$/SS-$n$/TT-$n$ denote Online Boutique/Hotel Reservation/Sock Shop/Train Ticket deployed with $n$ replicas per microservice.}
\label{tab:rq2_real_benchmark_solving_time}
\footnotesize
\begingroup
\setlength{\tabcolsep}{14pt}
\renewcommand{\arraystretch}{1.3}
{%
\begin{tabular}{c|l|c|cccc}
\toprule
\multirow{2}{*}[-1.5ex]{Benchmark} & \multirow{2}{*}[-1.5ex]{Request} & \multirow{2}{*}[-1.5ex]{$ACO$} &
\multicolumn{4}{c}{CNF-Solving Time (s)} \\
\cline{4-7}
& & & \rule{0pt}{4.5ex}Z3
  & \makecell[c]{AE-Kissat-\\MAB}
  & \makecell[c]{Kissat-\\public}
  & \makecell[c]{DFS-\\based} \\
\midrule
\multirow{4}{*}{OB-4} & GET /                 & 0.326608 & 1.48   & 17.94   & 20.83   & \textbf{1.05}  \\
                     & GET /product/\{id\}    & 0.315936 & 2.97   & 39.42   & 40.15   & \textbf{1.28}  \\
                     & GET /cart              & 0.314231 & 4.26   & 61.13   & 65.27   & \textbf{1.47}  \\
                     & POST /checkout         & 0.302119 & 23.63  & 224.79  & 232.47  & \textbf{3.34}  \\
\midrule
\multirow{4}{*}{OB-6} & GET /                 & 0.308554 & 8.58   & 85.07   & 94.03   & \textbf{7.20}  \\
                     & GET /product/\{id\}    & 0.269176 & 20.31  & 349.08  & 420.33  & \textbf{10.82} \\
                     & GET /cart              & 0.261836 & 26.38  & 359.65  & 455.24  & \textbf{12.16} \\
                     & POST /checkout         & 0.222716 & 959.91 & 1884.22 & 2423.28 & \textbf{63.98} \\
\midrule
\multirow{3}{*}{HR-4} & GET /recommendations  & 0.142857 & 0.15   & 0.99    & 1.29    & \textbf{0.11}  \\
                     & GET /hotels           & 0.219883 & 0.63   & 18.68   & 28.14   & \textbf{0.42}  \\
                     & POST /reservation     & 0.166667 & 0.17   & 0.82    & 1.02    & \textbf{0.14}  \\
\midrule
\multirow{3}{*}{HR-6} & GET /recommendations  & 0.128571 & 0.22   & 3.11    & 3.48    & \textbf{0.17}  \\
                     & GET /hotels           & 0.154985 & 3.36   & 86.78   & 99.32   & \textbf{1.02}  \\
                     & POST /reservation     & 0.138095 & 0.19   & 2.41    & 3.09    & \textbf{0.16}  \\
\midrule
\multirow{3}{*}{SS-4} & GET /login            & 0.190909 & 0.11   & 0.73    & 0.84    & \textbf{0.09}  \\
                     & POST /orders          & 0.254938 & 4.83   & 115.68  & 191.56  & \textbf{1.04}  \\
                     & POST /cart            & 0.166667 & 0.10   & 0.77    & 0.94    & \textbf{0.09}  \\
\midrule
\multirow{3}{*}{SS-6} & GET /login            & 0.147061 & 0.14   & 3.02    & 3.73    & \textbf{0.11}  \\
                     & POST /orders          & 0.203158 & 100.36 & 768.89  & 1507.23 & \textbf{10.14} \\
                     & POST /cart            & 0.128576 & 0.13   & 3.17    & 3.91    & \textbf{0.11}  \\
\midrule
\multirow{6}{*}{TT-2} & POST /left            & 0.728957 & 15.37  & 196.17  & 222.96  & \textbf{6.82}  \\
                     & POST /cheapest        & 0.829188 & 40.63  & 684.42  & 799.12  & \textbf{27.17} \\
                     & POST /minStation      & 0.792552 & 37.95  & 662.05  & 716.65  & \textbf{26.53} \\
                     & POST /quickest        & 0.839739 & 39.02  & 687.25  & 829.10  & \textbf{26.70} \\
                     & POST /food            & 0.508571 & 0.26   & 2.54    & 2.68    & \textbf{0.22}  \\
                     & POST /preserve        & 0.766868 & 37.07  & 545.28  & 684.75  & \textbf{20.13} \\
\midrule
\multirow{6}{*}{TT-3} & POST /left            & 0.663501 & 206.22  & 5441.03  & 6984.51  & \textbf{114.44}  \\
                     & POST /cheapest        & 0.809494 & 3121.09 & 80612.69 & 85418.04 & \textbf{1165.69} \\
                     & POST /minStation      & 0.799367 & 2843.54 & 79215.39 & 80938.54 & \textbf{1043.53} \\
                     & POST /quickest        & 0.816148 & 3031.39 & 81739.31 & 85795.87 & \textbf{1282.62} \\
                     & POST /food            & 0.349902 & 0.54    & 8.18     & 8.73     & \textbf{0.40}    \\
                     & POST /preserve        & 0.711351 & 1556.73 & 43401.44 & 50696.11 & \textbf{568.96}  \\
\bottomrule
\end{tabular}%
}
\endgroup
\end{table*}

Table~\ref{tab:rq2_real_benchmark_solving_time} shows the time spent in solving CNF formulas by different solvers on real-world microservice benchmarks. The DFS-based solver consistently outperforms all baselines, and it is the fastest on every instance. On average, the DFS-based solver finishes in 137.44\,s, whereas Z3, AE-Kissat-MAB, and Kissat-public take 377.74\,s, 9,288.19\,s, and 9,959.16\,s, respectively, corresponding to being 2.75$\times$, 67.58$\times$, and 72.46$\times$ slower than the DFS-based solver, respectively.
These baseline solvers follow the workflow below: to obtain all satisfying assignments, they repeatedly invoke the solver and add a blocking clause to exclude each previously found model. While each solver invocation can be fast, some real-world microservice benchmarks yield a large number of candidates; thus, the cumulative overhead of thousands of solver invocations on the increasingly large CNF formula dominates the total execution time. In contrast, the DFS-based solver performs a single DFS traversal with batch evaluation and leverages bitmask operations, along with pruning, to avoid redundant exploration. As a result, it substantially reduces the CNF-solving time, especially on the hardest instances where AE-Kissat-MAB and Kissat-public can take 81,739--85,796\,s ($\approx$22.7--23.8\,h), while the DFS-based solver finishes in 1,282.62\,s ($\approx$21.4\, min).

Regarding the $ACO$ metric, Table~\ref{tab:rq2_real_benchmark_solving_time} reports that the real-world microservice benchmarks exhibit relatively large $ACO$ values (0.41 on average). As described in Section~\ref{sec: benchmarks}, following MicroFI and IntelliFI, we constructed alternative execution paths via replica-based instance-level failover to ensure comparability. This design naturally increases shared dependencies across alternatives, leading to heavier clause overlap in the resulting CNF formulas. Moreover, failover across replicas introduces additional options, thereby increasing the number of clauses in the CNF formula. As a result, the CNF formulas derived from real-world microservice benchmarks are characterized by both a large clause count and a high degree of clause overlap. According to our theoretical analysis (Theorem~\ref{proof:time_complex}), under such a setting, the clause count becomes the dominant factor that affects the overall solving cost, which in turn increases the solving cost. Nevertheless, even in this unfavorable regime with many clauses and heavy overlap, our DFS-based solver consistently outperforms the other solvers.

\paragraph{(ii) Results on the Simulated Microservice Benchmark.}
\begin{table*}[t]
\centering
\caption{$ACO$ results under different API-dependency graph configurations for each URL request.}
\label{tab:rq2_sim_aco}
\footnotesize
\begingroup
\setlength{\tabcolsep}{4pt}
\renewcommand{\arraystretch}{1.1}
\begin{tabular}{c|ccc|c|ccc|c|ccc}
\toprule
\multirow{2}{*}[0ex]{EdgeNum} & \multicolumn{3}{c|}{Configuration} &
\multirow{2}{*}[0ex]{EdgeNum} & \multicolumn{3}{c|}{Configuration} &
\multirow{2}{*}[0ex]{EdgeNum} & \multicolumn{3}{c}{Configuration} \\
\cline{2-4}\cline{6-8}\cline{10-12}
\rule{0pt}{3ex}
& GroupNum & BoneNum & $ACO$ & & GroupNum & BoneNum & $ACO$ & & GroupNum & BoneNum & $ACO$ \\
\midrule
\multirow{6}{*}[-1.5ex]{50}  & \multirow{3}{*}[0ex]{2} & 2 & 0.0267 &
\multirow{6}{*}[-1.5ex]{100} & \multirow{3}{*}[0ex]{2} & 2 & 0.0133 &
\multirow{6}{*}[-1.5ex]{150} & \multirow{3}{*}[0ex]{2} & 2 & 0.0089 \\
                            &                             & 3 & 0.0400 &
                            &                             & 3 & 0.0200 &
                            &                             & 3 & 0.0133 \\
                            &                             & 4 & 0.0533 &
                            &                             & 4 & 0.0267 &
                            &                             & 4 & 0.0178 \\
\cmidrule(l{2pt}r{2pt}){2-4}\cmidrule(l{2pt}r{2pt}){6-8}\cmidrule(l{2pt}r{2pt}){10-12}
                            & \multirow{3}{*}[0ex]{3} & 2 & 0.0160 &
                            & \multirow{3}{*}[0ex]{3} & 2 & 0.0080 &
                            & \multirow{3}{*}[0ex]{3} & 2 & 0.0053 \\
                            &                            & 3 & 0.0240 &
                            &                            & 3 & 0.0120 &
                            &                            & 3 & 0.0080 \\
                            &                            & 4 & 0.0320 &
                            &                            & 4 & 0.0160 &
                            &                            & 4 & 0.0107 \\
\midrule
\multirow{6}{*}[-1.5ex]{200} & \multirow{3}{*}[0ex]{2} & 2 & 0.0067 &
\multirow{6}{*}[-1.5ex]{250} & \multirow{3}{*}[0ex]{2} & 2 & 0.0053 &
\multirow{6}{*}[-1.5ex]{300} & \multirow{3}{*}[0ex]{2} & 2 & 0.0044 \\
                            &                             & 3 & 0.0100 &
                            &                             & 3 & 0.0080 &
                            &                             & 3 & 0.0067 \\
                            &                             & 4 & 0.0133 &
                            &                             & 4 & 0.0107 &
                            &                             & 4 & 0.0089 \\
\cmidrule(l{2pt}r{2pt}){2-4}\cmidrule(l{2pt}r{2pt}){6-8}\cmidrule(l{2pt}r{2pt}){10-12}
                            & \multirow{3}{*}[0ex]{3} & 2 & 0.0040 &
                            & \multirow{3}{*}[0ex]{3} & 2 & 0.0032 &
                            & \multirow{3}{*}[0ex]{3} & 2 & 0.0027 \\
                            &                            & 3 & 0.0060 &
                            &                            & 3 & 0.0048 &
                            &                            & 3 & 0.0040 \\
                            &                            & 4 & 0.0080 &
                            &                            & 4 & 0.0064 &
                            &                            & 4 & 0.0053 \\
\bottomrule
\end{tabular}
\endgroup
\end{table*}

Table~\ref{tab:rq2_sim_aco} shows the $ACO$ results on the simulated microservice benchmark. This benchmark is designed to more closely approximate production-grade high-reliability practices. To improve the availability of alternative execution paths, system designers often aim to reduce common-mode failures by limiting shared dependencies across alternatives~\cite{low_overlap1, low_overlap2, low_overlap3}, so that a failure on one path is less likely to invalidate the others. Consequently, the overlap among alternative paths is expected to be relatively low by design. Consistent with this intuition, the simulated benchmark yields significantly smaller $ACO$ values than the real-world benchmarks, whose alternative execution paths are constructed via replica-based failover. This indicates weaker clause overlap in the resulting CNF formulas. This observation aligns with the common production practice of decoupling alternatives to enhance availability, and it also suggests that the simulated benchmark better captures the low-overlap multipath structure in production environments.

\begin{figure}[t]
\captionsetup{skip=2pt}
  \centering
  \includegraphics[width=\linewidth]{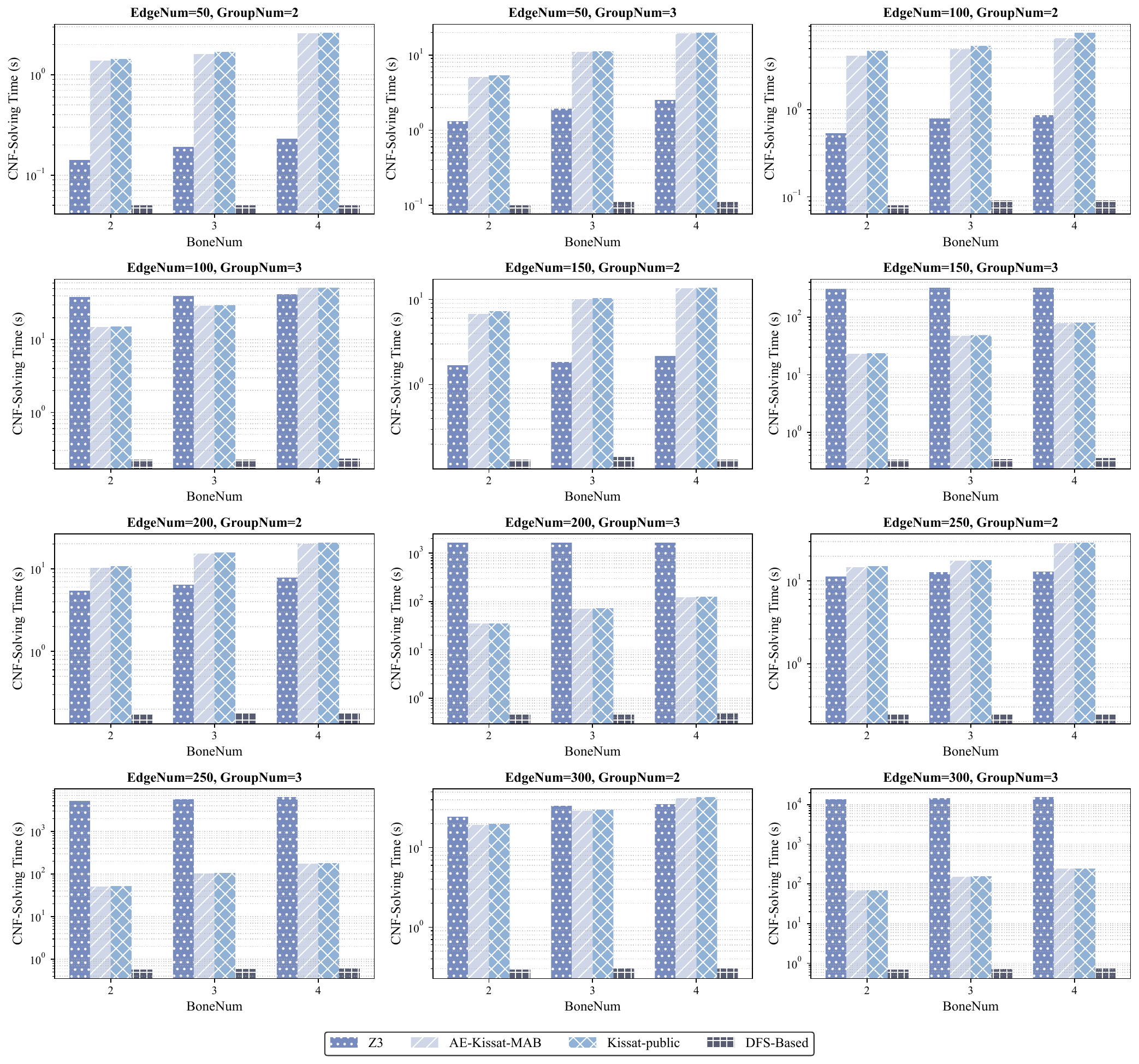}
  \caption{CNF-solving time across solvers under different API-dependency graph configurations.}
  \label{fig: RQ2_time_cost}
\end{figure}
Fig.~\ref{fig: RQ2_time_cost} shows the CNF-solving time on the simulated microservice benchmark across 36 configurations. The DFS-based solver consistently achieves the lowest CNF-solving time, remaining below 1\,s across all configurations, ranging from 0.05\,s to 0.73\,s, with an average of 0.28\,s. In contrast, the baseline solvers are orders of magnitude slower, with average solving times of 1,834.44\,s for Z3, 42.37\,s for AE-Kissat-MAB, and 43.45\,s for Kissat-public. Overall, the DFS-based solver delivers 6,551.57$\times$, 151.32$\times$, and 155.18$\times$ speedups compared to Z3, AE-Kissat-MAB, and Kissat-public, respectively.

As \texttt{EdgeNum} and \texttt{GroupNum} increase, the CNF formula becomes more complex, and all solvers slow down. However, the DFS-based solver degrades much more gracefully. In addition, we observed a clear trend under \texttt{GroupNum}=3 as \texttt{EdgeNum} increases from 100 to 300: AE-Kissat-MAB and Kissat-public progressively outperform Z3. This is because, in this configuration, the dominant cost is no longer driven by enumerating all satisfying assignments; instead, it is driven by per-instance solving hardness. Specifically, the increase in the number of variables and clauses substantially raises the structural complexity of the CNF instance, making per-call SAT solving efficiency the primary bottleneck. Since AE-Kissat-MAB and Kissat-public are considerably faster at solving individual SAT instances on complex CNF formulas than Z3, their advantage becomes more pronounced as the CNF formula becomes more complex. Nevertheless, the DFS-based solver remains robust and completes all configurations within one second, even in the most challenging configuration (\texttt{EdgeNum}=300, \texttt{GroupNum}=3, \texttt{BoneNum}=4). Specifically, the DFS-based solver finishes in 0.73\,s, while Z3 requires 15,049.41\,s, AE-Kissat-MAB takes 236.81\,s, and Kissat-public takes 243.00\,s. These results confirm that the DFS-based solver scales favorably with increasing instance size.

\begin{tcolorbox}[
  enhanced,
  colback=gray!5,
  colframe=gray!500,
  arc=2mm,
  boxrule=1.5pt,
  left=1.5mm,right=1.5mm,top=1mm,bottom=1mm
]
\textbf{Answer to RQ2: }
Across both real-world and simulated microservice benchmarks, our DFS-based solver achieves the lowest solving time. On real-world benchmarks, it averages 137.44\,s, providing 2.75$\times$, 67.57$\times$, and 72.46$\times$ speedups over Z3, AE-Kissat-MAB, and Kissat-public, respectively. On the simulated benchmark, it remains below 1\,s (0.28\,s on average) and achieves 6,551.57$\times$, 151.32$\times$, and 155.18$\times$ speedups over the same baselines. These stem from a single DFS traversal with batched bitmask operations and aggressive pruning, which avoids repeated solver invocations by incrementally excluding explored candidates during the DFS search.
\end{tcolorbox}

\subsection{RQ3: Effectiveness of Dynamic Injection Mechanisms}
\subsubsection{Baselines}
To evaluate the effectiveness of FastFI's dynamic injection mechanism, we constructed a baseline by keeping FastFI's solver and pruning strategy unchanged, and replacing only the dynamic injection mechanism with a static injection scheme. This baseline is denoted as FastFI w/o Dynamic. A key parameter in static injection is the maximum number of APIs included in each combinatorial fault. If the parameter is set too small, we may miss valid combinatorial faults; if the parameter is set too large, it can dramatically expand the search space and incur substantial solving overhead. To identify valid combinatorial faults while keeping the search space tractable, we set this parameter to the replica count plus one for each real-world microservice benchmark. For the simulated microservice benchmark, we set this parameter to match \texttt{GroupNum}. This choice avoids an unnecessarily inflated search space, yielding a representative and directly comparable static baseline.

\subsubsection{Metrics}
On real-world microservice benchmarks, we evaluated three metrics consistent with RQ1: \textbf{fault injection number}, \textbf{CNF-solving time}, and \textbf{end-to-end time}. On the simulated microservice benchmark, we reported the CNF-solving time to discover all valid combinatorial faults.

\subsubsection{Results}
\paragraph{(i) Results on Real-World Microservice Benchmarks.}
\begin{table*}[t]
\captionsetup{justification=centering}
\caption{Effect of dynamic fault injection mechanism for FastFI’s performance on real-world microservice benchmarks. \textit{Note:} OB-$n$/HR-$n$/SS-$n$/TT-$n$ denote Online Boutique/Hotel Reservation/Sock Shop/Train Ticket deployed with $n$ replicas per microservice.}
\label{tab:rq3-real-world-result}
\centering
\footnotesize
\begingroup
\setlength{\tabcolsep}{8pt}
\renewcommand{\arraystretch}{1.3}
{%
\begin{tabular}{c|l|cc|cc|cc}
\toprule
\multirow{2}{*}[-1.5ex]{Benchmark} & \multirow{2}{*}[-1.5ex]{Request}
 & \multicolumn{2}{c}{Fault Injection Number}
 & \multicolumn{2}{|c|}{CNF-Solving Time (s)}
 & \multicolumn{2}{c}{End-to-end Time (s)} \\
\cline{3-4}\cline{5-6}\cline{7-8}
\rule{0pt}{4.5ex}
 & & \makecell[c]{FastFI\\w/o Dynamic} & FastFI
   & \makecell[c]{FastFI\\w/o Dynamic} & FastFI
   & \makecell[c]{FastFI\\w/o Dynamic} & FastFI \\
\midrule
\multirow{4}{*}{OB-4} 
& GET /                     & 64  & \textbf{40}  & 2.78    & \textbf{1.05}  & 1596.15  & \textbf{985.14}   \\
& GET /product/\{id\}       & 79  & \textbf{41}  & 8.47    & \textbf{1.28}  & 1926.15  & \textbf{979.93}   \\
& GET /cart                 & 82  & \textbf{46}  & 9.45    & \textbf{1.47}  & 1957.76  & \textbf{1075.65}  \\
& POST /checkout            & 105 & \textbf{53}  & 133.79  & \textbf{3.34}  & 2976.73  & \textbf{1416.17}  \\
\midrule
\multirow{4}{*}{OB-6} 
& GET /                     & 111 & \textbf{67}  & 34.78   & \textbf{7.20}  & 2863.39  & \textbf{1720.26}  \\
& GET /product/\{id\}       & 121 & \textbf{76}  & 370.86  & \textbf{10.82} & 3320.92  & \textbf{1840.86}  \\
& GET /cart                 & 135 & \textbf{78}  & 506.13  & \textbf{12.16} & 3887.21  & \textbf{1933.31}  \\
& POST /checkout            & 159 & \textbf{106} & 5603.88 & \textbf{63.98} & 9845.72  & \textbf{2868.78}  \\
\midrule
\multirow{3}{*}{HR-4} 
& GET /recommendations      & 15 & \textbf{10} & 0.19 & \textbf{0.11} & 345.08 & \textbf{231.67} \\
& GET /hotels               & 38 & \textbf{23} & 1.13 & \textbf{0.42} & 914.71 & \textbf{543.42} \\
& POST /reservation         & 13 & \textbf{9}  & 0.16 & \textbf{0.14} & 305.99 & \textbf{210.61} \\
\midrule
\multirow{3}{*}{HR-6} 
& GET /recommendations      & 24 & \textbf{16} & 0.29 & \textbf{0.17} & 561.05 & \textbf{370.94} \\
& GET /hotels               & 54 & \textbf{31} & 9.52 & \textbf{1.02} & 1467.16 & \textbf{832.75} \\
& POST /reservation         & 22 & \textbf{15} & 0.24 & \textbf{0.16} & 522.03 & \textbf{352.72} \\
\midrule
\multirow{3}{*}{SS-4} 
& GET /login                & 15 & \textbf{10} & 0.11  & \textbf{0.09} & 402.79 & \textbf{268.68} \\
& POST /orders              & 76 & \textbf{35} & 98.99 & \textbf{1.04} & 2035.42 & \textbf{889.25} \\
& POST /cart                & 14 & \textbf{10} & 0.10  & \textbf{0.09} & 306.18 & \textbf{210.48} \\
\midrule
\multirow{3}{*}{SS-6} 
& GET /login               & 21 & \textbf{15} & 0.25   & \textbf{0.11}  & 503.69  & \textbf{351.44}  \\
& POST /orders              & 95 & \textbf{52} & 463.05 & \textbf{10.14} & 2964.11 & \textbf{1334.60} \\
& POST /cart                & 23 & \textbf{16} & 0.39   & \textbf{0.11}  & 523.31  & \textbf{373.58}  \\
\midrule
\multirow{6}{*}{TT-2} 
& POST /left                & 217 & \textbf{68}  & 57.07  & \textbf{6.82}  & 5809.79  & \textbf{1688.09}  \\
& POST /cheapest            & 476 & \textbf{156} & 352.03 & \textbf{27.17} & 14004.17 & \textbf{4571.71}  \\
& POST /minStation          & 452 & \textbf{151} & 330.47 & \textbf{26.53} & 13532.28 & \textbf{4393.10}  \\
& POST /quickest            & 461 & \textbf{153} & 338.71 & \textbf{26.70} & 13961.24 & \textbf{4521.05}  \\
& POST /food                & 34  & \textbf{16}  & 0.53   & \textbf{0.22}  & 619.62   & \textbf{289.76}   \\
& POST /preserve            & 484 & \textbf{115} & 524.09 & \textbf{20.13} & 27547.23 & \textbf{6160.43}  \\
\midrule
\multirow{6}{*}{TT-3} 
& POST /left                & 773  & \textbf{249}  & 3077.78  & \textbf{114.44}  & 22350.11  & \textbf{6613.93}  \\
& POST /cheapest            & 3849 & \textbf{1021} & 52105.45 & \textbf{1165.69} & 154206.19 & \textbf{27979.78} \\
& POST /minStation          & 2608 & \textbf{917}  & 40637.55 & \textbf{1043.53} & 109013.96 & \textbf{24791.49} \\
& POST /quickest            & 4020 & \textbf{1087} & 61084.06 & \textbf{1282.62} & 166152.43 & \textbf{29283.32} \\
& POST /food                & 41   & \textbf{22}   & 0.78     & \textbf{0.40}    & 1143.27   & \textbf{660.15}   \\
& POST /preserve            & 1894 & \textbf{535}  & 24121.56 & \textbf{568.96}  & 127661.04 & \textbf{29215.41} \\
\bottomrule
\end{tabular}%
}\endgroup
\end{table*}

As shown in Table~\ref{tab:rq3-real-world-result}, FastFI with the dynamic injection mechanism consistently outperforms FastFI w/o Dynamic across all three metrics. 
\textbf{For the fault injection number}, FastFI w/o Dynamic requires 517 injections on average, which is 3.17$\times$ higher than FastFI.
This is because the static scheme fixes a relatively large upper bound on the number of APIs included in each combinatorial fault from the beginning, which generates many more candidate solutions and thereby increases the number of injection attempts.
\textbf{For the CNF-solving time}, even with the DFS-based solver, FastFI w/o Dynamic still spends 5,933.58\,s on average for CNF-solving time, which is 43.17$\times$ slower than FastFI. This gap stems from how the two mechanisms set the upper bound on the number of APIs in a combinatorial fault. The dynamic mechanism starts from a small bound and gradually increases it based on feedback from injection outcomes, which tightly constrains the solver search space early on.
In contrast, the static scheme uses a large bound from the beginning, imposing weaker constraints and thus generating many more candidate assignments; enumerating the enlarged solution space substantially increases solving time.
\textbf{For the end-to-end time}, consistent with the trends in fault injection number and CNF-solving time, FastFI w/o Dynamic incurs a substantially higher end-to-end time. On average, it takes 21,725.84\,s, which is 4.37$\times$ that of FastFI, due to the combined overhead of more injection attempts and the increased CNF-solving time.

\paragraph{(ii) Results on the Simulated Microservice Benchmark.}
\begin{figure}[t]
\captionsetup{skip=2pt}
  \centering
  \includegraphics[width=\linewidth]{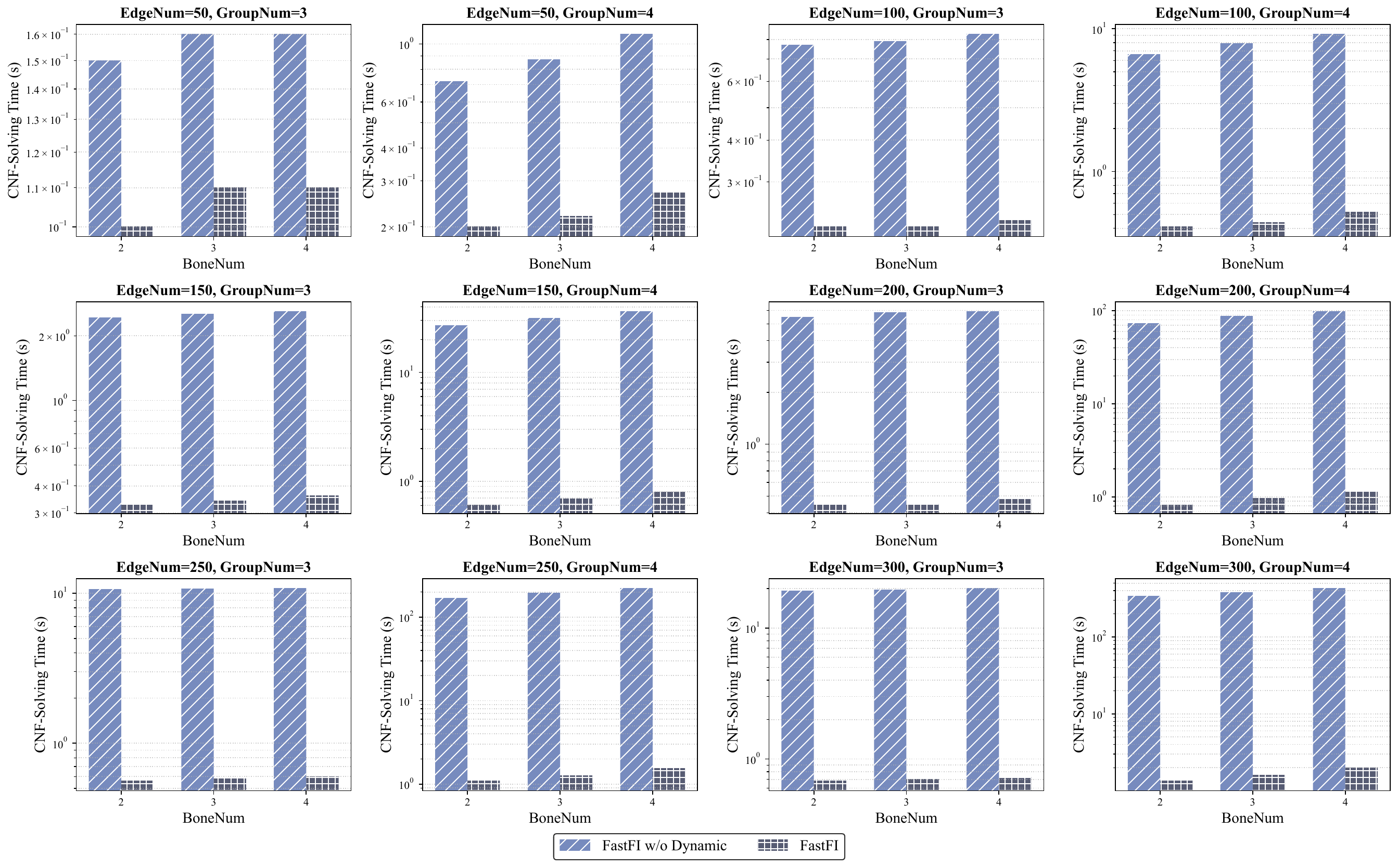}
  \caption{Impact of dynamic fault injection mechanism on CNF-solving time across different configurations.}
  \label{fig: RQ3_time_cost}
\end{figure}
Fig.~\ref{fig: RQ3_time_cost} compares the CNF-solving time of FastFI and FastFI w/o Dynamic on the simulated microservice benchmark. Overall, the dynamic mechanism consistently improves efficiency across all configurations. With dynamic injection, the CNF-solving time remains low, whereas the static baseline ranges from 0.15\,s to 427.56\,s. Dynamic injection reduces the average CNF-solving time from 61.99\,s to 0.65\,s. Notably, the benefit becomes more pronounced as the benchmark grows harder. For example, at \texttt{EdgeNum}=300 and \texttt{GroupNum}=4, the dynamic mechanism completes in 1.34--1.99\,s, while the static baseline requires 339.53--427.56\,s. These results indicate that dynamic injection effectively tightens constraints, thereby shrinking the search space and reducing solving time.

\begin{tcolorbox}[
  enhanced,
  colback=gray!5,
  colframe=gray!500,
  arc=2mm,
  boxrule=1.5pt,
  left=1.5mm,right=1.5mm,top=1mm,bottom=1mm
]
\textbf{Answer to RQ3: }
On real-world microservice benchmarks, enabling the dynamic injection mechanism reduces fault injection number by 68.47\%, CNF-solving time by 97.68\%, and end-to-end time by 77.13\%. This improvement stems from feedback-driven adaptation of the maximum number of APIs per combinatorial fault, which limits candidate solutions, shrinks the search space, and eliminates redundant injections and solver invocations.
\end{tcolorbox}

\subsection{RQ4: Effectiveness of Robustness Optimization}
\label{rq4:title}
\subsubsection{Baselines}
To the best of our knowledge, FastFI is the first work to use fault-injection results to guide the enhancement of API call-site robustness in microservices. To evaluate the effectiveness of our robustness optimization, we implemented a greedy heuristic baseline that maximizes per-request combinatorial-fault coverage under a budget constraint. For each request, we encoded all combinatorial faults as SAT constraints and built an inverted index mapping APIs to the combinatorial faults that contain them, enabling fast coverage computation. Given a budget \(B\), the heuristic first targets high-priority requests: it calculates the marginal coverage gain of each candidate API for the uncovered faults of high-priority requests, selects the API with the largest marginal gain, and iteratively updates coverage and budget. If the budget remains after fully covering the high-priority requests, the heuristic proceeds to low-priority requests and greedily selects APIs in descending order of frequency among the remaining faults to maximize additional coverage per selection. This heuristic provides a strong baseline for comparison.

\subsubsection{Metrics}
In this evaluation, we used the following four metrics: 

\noindent\textbullet\ \textbf{Identification Time:} the time required to identify the critical APIs under a given budget constraint.

\noindent\textbullet\ \textbf{\boldmath$CR_{b}$:} the coverage rate of combinatorial faults associated with low-priority requests under budget level $b$, given that we enforced 100\% coverage for all combinatorial faults on high-priority requests at the same budget level.

\noindent\textbullet\ \textbf{\boldmath$MCG_{b}$:} the marginal coverage gain per unit budget increase from level $b-1$ to $b$, defined as:
\begin{equation}
\label{math:mcg}
MCG_{b}
= \frac{CN_{b}-CN_{b-1}}
       {Budget_{b}-Budget_{b-1}}
\qquad (b>1),
\end{equation}
where $CN_{b}$ is the number of combinatorial faults associated with low-priority requests covered at budget level $b$ and $Budget_{b}$ is the cumulative budget consumed at that level.

\noindent\textbullet\ \textbf{\boldmath$AFVR_b$:} the average fault validity ratio after hardening the call-site robustness selected under budget level $b$. For each request, we re-injected all previously identified valid combinatorial faults and computed the fraction that remain failure-inducing after enhancing, defined as:
\begin{equation}
AFVR_b=\frac{1}{|\mathcal{R}|}\sum_{r\in\mathcal{R}}
\frac{\text{rvf}(\mathcal{F}_r)}
{|\mathcal{F}_r|},
\end{equation}
where $\mathcal{R}$ denotes the set of the evaluated requests, $\mathcal{F}_r$ denotes the set of valid combinatorial faults for request $r$ before enhancing, and $\text{rvf}(\mathcal{F}_r)$ denotes the number of combinatorial faults in $\mathcal{F}_r$ that remain valid and failure-inducing after enhancing. A smaller $AFVR_b$ indicates that fewer previously valid combinatorial faults remain failure-inducing after enhancing, and thus reflects more effective enhancing.

\subsubsection{Results}
\paragraph{(i) Results on Real-World Microservice Benchmarks.}

\begin{table*}[t]
\captionsetup{justification=centering}
\caption{$CR_{b}$, $MCG_{b}$, $AFVR_{b}$, and identification time under different budgets on real-world microservice benchmarks. \textit{Note:} OB-$n$/HR-$n$/SS-$n$/TT-$n$ denote Online Boutique/Hotel Reservation/Sock Shop/Train Ticket deployed with $n$ replicas per microservice.}
\label{tab:rq4-result}
\centering
\footnotesize
\begingroup
\setlength{\tabcolsep}{6pt}
\renewcommand{\arraystretch}{1.2}
{%
\begin{tabular}{c|l|c|cc|cc|cc|cc}
\toprule
\multirow{2}{*}{Benchmark} &
\multirow{2}{*}[-0.5ex]{\makecell{High-Priority\\Request}} &
\multirow{2}{*}{$Budget_{b}$} &
\multicolumn{2}{c|}{$CR_{b}$ (\%)} &
\multicolumn{2}{c|}{$MCG_{b}$} &
\multicolumn{2}{c|}{$AFVR_{b}$ (\%)} &
\multicolumn{2}{c}{Identification Time (ms)} \\
\cline{4-5}\cline{6-7}\cline{8-9}\cline{10-11}
\rule{0pt}{3ex}
& & & Heuristic & FastFI & Heuristic & FastFI & Heuristic & FastFI & Heuristic & FastFI \\
\midrule
\multirow{7}{*}{OB-6} & \multirow{7}{*}{GET /} & 4  & 55.00 & 55.00 & -- & -- & 33.33 & 33.33 & \textbf{12.5} & 23.1 \\
& & 5  & 60.00 & \textbf{65.00} & 1 & \textbf{2} & 29.76 & \textbf{25.60} & \textbf{14.9} & 39.2 \\
& & 6  & 70.00 & \textbf{75.00} & 2 & 2 & 22.62 & \textbf{17.86} & \textbf{14.8} & 39.7 \\
& & 7  & 75.00 & \textbf{85.00} & 1 & \textbf{2} & 19.05 & \textbf{10.71} & \textbf{21.2} & 51.8 \\
& & 8  & 80.00 & \textbf{90.00} & 1 & 1 & 15.48 & \textbf{7.14}  & \textbf{18.7} & 41.0 \\
& & 9  & 90.00 & \textbf{95.00} & \textbf{2} & 1 & 7.74  & \textbf{3.57}  & \textbf{17.3} & 41.3 \\
& & 10 & 100.00 & 100.00 & \textbf{2} & 1 & 0.00 & 0.00 & \textbf{16.5} & 41.5 \\
\midrule
\multirow{7}{*}{HR-6} & \multirow{7}{*}{\makecell{GET /recom-\\mendations}} & 2 & 14.29 & 14.29 & -- & -- & 60.00 & 60.00 & \textbf{4.8} & 6.6 \\
& & 3 & 28.57 & 28.57 & 1 & 1 & 53.33 & 53.33 & \textbf{6.1} & 16.3 \\
& & 4 & 42.86 & 42.86 & 1 & 1 & 46.67 & \textbf{26.67} & \textbf{6.8} & 17.9 \\
& & 5 & 57.14 & 57.14 & 1 & 1 & 40.00 & \textbf{20.00} & \textbf{6.5} & 18.7 \\
& & 6 & 71.43 & 71.43 & 1 & 1 & 33.33 & \textbf{13.33} & \textbf{6.1} & 18.9 \\
& & 7 & 85.71 & 85.71 & 1 & 1 & 16.67 & \textbf{6.67}  & \textbf{7.4} & 20.1 \\
& & 8 & 100.00 & 100.00 & 1 & 1 & 0.00 & 0.00 & \textbf{7.3} & 20.0 \\
\midrule
\multirow{10}{*}{SS-6} & \multirow{10}{*}{POST /cart} & 2  & 10.00 & 10.00 & -- & -- & 62.96 & 62.96 & \textbf{6.3} & 8.3 \\
& & 3  & 20.00 & 20.00 & 1 & 1 & 59.26 & \textbf{29.63} & \textbf{9.6} & 20.3 \\
& & 4  & 30.00 & 30.00 & 1 & 1 & 55.56 & \textbf{25.93} & \textbf{9.0} & 19.8 \\
& & 5  & 40.00 & 40.00 & 1 & 1 & 51.85 & \textbf{22.22} & \textbf{9.0} & 20.5 \\
& & 6  & 50.00 & 50.00 & 1 & 1 & 48.15 & \textbf{18.52} & \textbf{9.7} & 22.4 \\
& & 7  & 60.00 & 60.00 & 1 & 1 & 44.44 & \textbf{14.81} & \textbf{9.6} & 23.9 \\
& & 8  & 70.00 & 70.00 & 1 & 1 & 40.74 & \textbf{11.11} & \textbf{9.8} & 23.6 \\
& & 9  & 80.00 & 80.00 & 1 & 1 & 37.04 & \textbf{7.41}  & \textbf{9.6} & 23.8 \\
& & 10 & 90.00 & 90.00 & 1 & 1 & 33.33 & \textbf{3.70}  & \textbf{9.7} & 23.5 \\
& & 11 & 100.00 & 100.00 & 1 & 1 & 0.00 & 0.00 & \textbf{9.4} & 23.8 \\
\midrule
\multirow{17}{*}{TT-3} & \multirow{17}{*}{GET /food} & 6  & 16.30 & 16.30 & -- & -- & 69.40 & 69.40 & \textbf{16.6} & 30.6 \\
& & 7  & 17.39 & \textbf{21.74} & 1 & \textbf{5} & 66.36 & \textbf{64.76} & \textbf{22.8} & 51.5 \\
& & 9  & 25.00 & \textbf{32.61} & 3.5 & \textbf{5} & 57.07 & \textbf{55.47} & \textbf{23.7} & 53.6 \\
& & 11 & 35.87 & \textbf{43.48} & 5 & 5 & 47.78 & \textbf{46.18} & \textbf{22.9} & 54.4 \\
& & 13 & 42.39 & \textbf{54.35} & 3 & \textbf{5} & 38.49 & \textbf{36.89} & \textbf{23.1} & 53.0 \\
& & 15 & 48.91 & \textbf{65.22} & 3 & \textbf{5} & 29.21 & \textbf{27.60} & \textbf{23.5} & 53.5 \\
& & 17 & 55.43 & \textbf{73.91} & 3 & \textbf{4} & 21.53 & \textbf{19.92} & \textbf{22.9} & 52.7 \\
& & 19 & 61.96 & \textbf{79.35} & \textbf{3} & 2.5 & 17.14 & \textbf{15.34} & \textbf{23.2} & 54.5 \\
& & 21 & 64.13 & \textbf{82.61} & 1 & \textbf{1.5} & 14.62 & \textbf{13.01} & \textbf{23.1} & 52.9 \\
& & 23 & 66.30 & \textbf{84.78} & 1 & 1 & 12.14 & \textbf{11.01} & \textbf{22.9} & 53.2 \\
& & 25 & 77.17 & \textbf{86.96} & \textbf{5} & 1 & 9.88 & \textbf{9.15} & \textbf{23.3} & 54.4 \\
& & 27 & 85.87 & \textbf{89.13} & \textbf{4} & 1 & 7.40 & \textbf{7.25} & \textbf{25.2} & 54.1 \\
& & 29 & 89.13 & \textbf{91.30} & \textbf{1.5} & 1 & 5.80 & 5.80 & \textbf{26.7} & 54.8 \\
& & 31 & 93.48 & 93.48 & \textbf{2} & 1 & 4.35 & 4.35 & \textbf{27.1} & 52.8 \\
& & 33 & 95.65 & 95.65 & 1 & 1 & 2.90 & 2.90 & \textbf{28.3} & 52.6 \\
& & 35 & 97.83 & 97.83 & 1 & 1 & 1.45 & 1.45 & \textbf{30.5} & 53.8 \\
& & 37 & 100.00 & 100.00 & 1 & 1 & 0.00 & 0.00 & \textbf{30.8} & 54.9 \\
\bottomrule
\end{tabular}}%
\endgroup
\end{table*}

As summarized in Table~\ref{tab:rq4-result}, we compared the two-stage FastFI method with the heuristic algorithm across four metrics. To preserve realism, we assigned request priorities based on the frequency of observed requests within a fixed window, with OB-6 using \texttt{GET /}, HR-6 using \texttt{GET /recommendation}, SS-6 using \texttt{POST /carts}, and TT-3 using \texttt{GET /food} as the high-priority requests. The results show that on HR-6 and SS-6, both methods achieve the same values for the metrics $CR_{b}$ and $MCG_{b}$ because the two benchmarks lack common API invocations across requests, which limits FastFI's global optimization advantage. This also reflects the limitation of the benchmarks themselves, which ignore the API overlap among requests seen in production environments (see Table~\ref{tab:Alibaba-Dataset-Result-api-service}). In contrast, OB-6 and TT-3 expose a substantial set of APIs shared across requests. Consequently, FastFI attains higher initial API coverage and yields an \(MCG_{b}\) value with the expected long-tail pattern---large early gains that gradually taper and stabilize. As the budget increases and becomes effectively unconstrained, both methods eventually reach 100\% coverage; however, under realistic (budget-constrained) settings, FastFI achieves higher coverage than the heuristic baseline in most cases. Moreover, the heuristic's \(MCG_{b}\) exhibits noticeable fluctuations, indicating poorer stability.

Regarding $AFVR_b$, these results provide closed-loop validation of our recommendations: for each budget, we hardened the call-site robustness of the APIs selected by the given method, re-injected all previously identified valid combinatorial faults for each request, and then computed $AFVR_b$. FastFI consistently yields lower values than the heuristic baseline under most budget settings. In particular, at small budgets, FastFI and the heuristic cover all combinatorial faults associated with high-priority requests, resulting in identical \(AFVR_{b}\). As the budget increases, the gap between the two methods gradually widens; when the budget becomes effectively unconstrained, the gap narrows again and \(AFVR_{b}\) converges to the same value for both methods. These results indicate that hardening the call-site robustness of APIs identified by FastFI invalidates a larger fraction of previously valid combinatorial faults under limited constraints, indicating more effective hardening and higher system reliability than heuristic-guided hardening.
In terms of identification time, FastFI incurs a higher cost than the heuristic because it solves a more complex Partial Max-SAT problem in pursuit of global optimality.

\paragraph{(ii) Results on the Simulated Microservice Benchmark.}
\begin{figure}[t]
\captionsetup{skip=2pt}
  \centering
  \includegraphics[width=\linewidth]{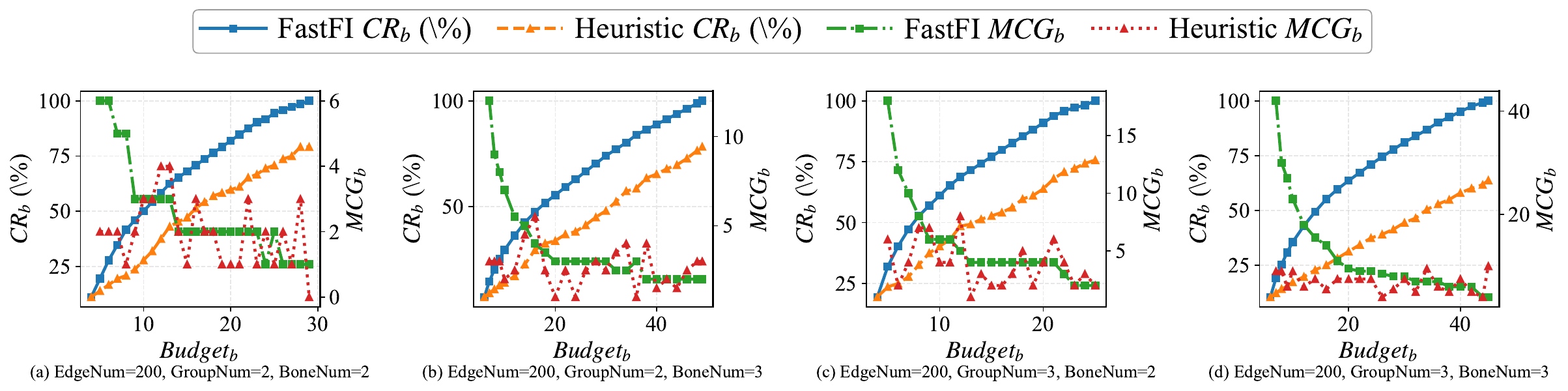}
  \caption{$CR_{b}$ and $MCG_{b}$ under different budgets on the simulated microservice benchmark.}
  \label{fig:RQ4-1}
\end{figure}
\begin{figure}[t]
\captionsetup{skip=2pt}
  \centering
  \includegraphics[width=\linewidth]{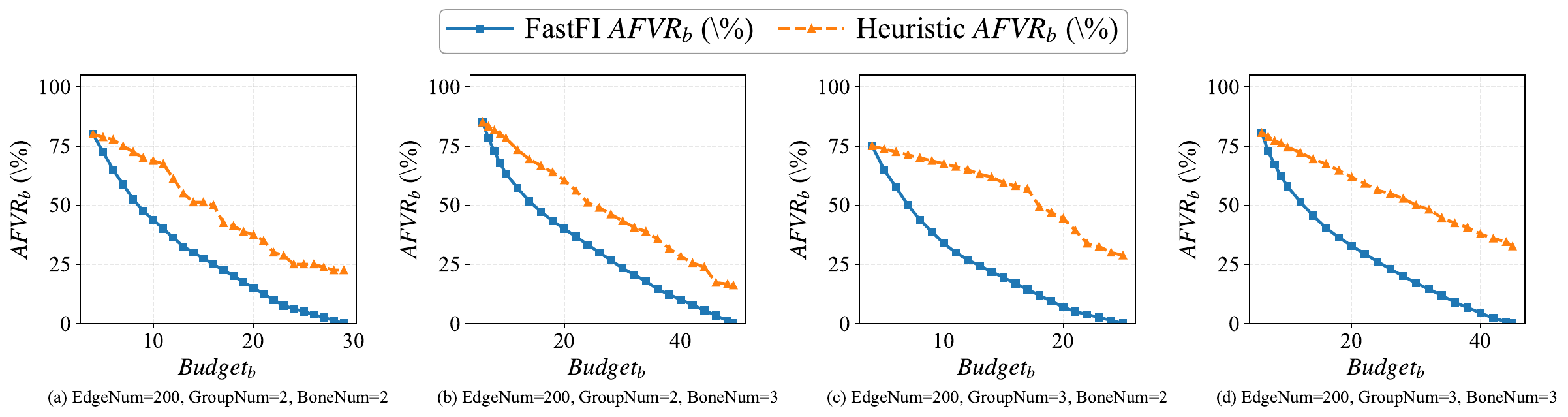}
  \caption{$AFVR_{b}$ under different budgets on the simulated microservice benchmark.}
  \label{fig:RQ4-2}
\end{figure}
\begin{figure}[t]
\captionsetup{skip=2pt}
  \centering
  \includegraphics[width=\linewidth]{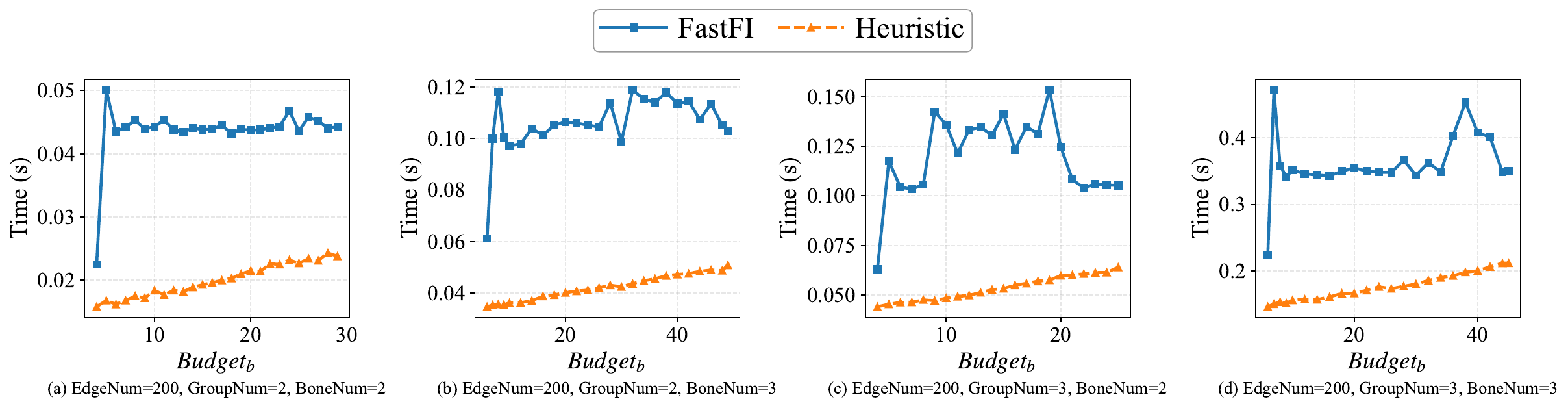}
  \caption{Identification time under different budgets on the simulated microservice benchmark.}
  \label{fig:RQ4-3}
\end{figure}
Fig.~\ref{fig:RQ4-1} shows the $CR_{b}$ and $MCG_{b}$ results on the large-scale benchmark. $CR_{b}$ of FastFI grows and approaches $100\%$ as the budget increases, whereas the heuristic remains below FastFI at all budget levels. $MCG_{b}$ of FastFI exhibits a smooth long-tail pattern: high early gains that taper and stabilize, showing effective prioritization of high-impact APIs and steady convergence. However, the heuristic’s $MCG_{b}$ shows pronounced fluctuations and remains consistently lower, indicating a less stable search.

As shown in Fig.~\ref{fig:RQ4-2}, FastFI consistently attains a lower \(AFVR_{b}\) than the heuristic baseline under most budget settings across all configurations. Moreover, the figure makes the trend explicit: as the budget increases, the \(AFVR_{b}\) gap between the two methods initially widens; as the budget continues to grow and becomes effectively unconstrained, the gap gradually narrows. This convergence occurs because, with an unlimited budget, both methods can cover all combinatorial faults. This advantage remains stable across configuration variations, indicating that FastFI’s robustness optimization strategy generalizes across different multipath structures and overlap regimes in the simulated benchmark. Overall, these results corroborate the real-world findings and provide additional evidence that hardening the call-site robustness of the APIs identified by FastFI improves system reliability under combinatorial faults with limited budgets.
As shown in Fig.~\ref{fig:RQ4-3}, FastFI incurs a $2.2\times$ identification time overhead over the heuristic due to solving a Partial Max-SAT problem, yet remains within an acceptable range.

\begin{tcolorbox}[
  enhanced,
  colback=gray!5,
  colframe=gray!500,
  arc=2mm,
  boxrule=1.5pt,
  left=1.5mm,right=1.5mm,top=1mm,bottom=1mm
]
\textbf{Answer to RQ4: }
FastFI's two-stage method stably identifies critical APIs to enhance call-site robustness, achieving higher early gains by prioritizing high-impact APIs, particularly when multiple requests invoke overlapping APIs.
Although FastFI solves a Partial Max-SAT problem and incurs a 2.20$\times$ identification time overhead over the heuristic, both complete within 0.5\,s, making the overhead acceptable. After enhancing the call-site robustness of the APIs identified by FastFI, we re-injected all discovered valid combinatorial faults for each request and measured the average fault validity ratio. The results show that FastFI-guided enhancement leads to higher system reliability.
\end{tcolorbox}

\subsection{RQ5: Overhead of FastFI}
\begin{table*}[t]
\centering
\captionsetup{justification=centering}
\caption{Comparison of CPU and memory usage across fault-injection algorithms for different requests. \textit{Note:} OB-$n$/HR-$n$/SS-$n$/TT-$n$ denote Online Boutique/Hotel Reservation/Sock Shop/Train Ticket deployed with $n$ replicas per microservice.}
\label{tab:cpu-mem-combined}
\footnotesize
\begingroup
\setlength{\tabcolsep}{8pt}
\renewcommand{\arraystretch}{1.3}
\begin{tabular}{c|l|cccc|cccc}
\toprule
\multirow{2}{*}[0ex]{Benchmark} &
\multirow{2}{*}[0ex]{Request} &
\multicolumn{4}{c|}{CPU Usage (\%)} &
\multicolumn{4}{c}{Memory Usage (MB)} \\
\cline{3-6}\cline{7-10}
\rule{0pt}{3ex}
& & LDFI & IntelliFI & MicroFI & FastFI & LDFI & IntelliFI & MicroFI & FastFI \\
\midrule

\multirow{4}{*}{OB-4} & GET /                & 1.05 & 1.03 & \textbf{0.87} & 0.91 & 138.3 & 138.7 & 139.1 & \textbf{130.6} \\
    & GET /product/\{id\}  & 1.11 & \textbf{1.02} & 1.12 & 1.18 & 138.1 & 138.9 & 139.2 & \textbf{130.1} \\
    & GET /cart            & 1.15 & 1.12 & 1.18 & \textbf{1.06} & 139.6 & 140.8 & 142.0 & \textbf{132.4} \\
    & POST /checkout       & 2.16 & 2.23 & 2.31 & \textbf{1.08} & 149.9 & 148.2 & 148.5 & \textbf{132.9} \\
\midrule

\multirow{4}{*}{OB-6} & GET /                & 1.21 & 1.08 & 1.18 & \textbf{1.06} & 149.5 & 140.7 & 141.2 & \textbf{132.9} \\
    & GET /product/\{id\}  & 2.82 & 2.43 & 3.33 & \textbf{1.15} & 161.9 & 141.4 & 139.6 & \textbf{130.5} \\
    & GET /cart            & 2.52 & 2.53 & 2.52 & \textbf{1.25} & 153.2 & 148.9 & 154.9 & \textbf{129.7} \\
    & POST /checkout       & 24.56 & 21.20 & 20.05 & \textbf{3.40} & 196.7 & 192.7 & 200.2 & \textbf{172.4} \\
\midrule

\multirow{3}{*}{HR-4} & GET /recommendations & 1.25 & 1.30 & \textbf{1.24} & 1.87 & 138.7 & 138.7 & 138.6 & \textbf{128.0} \\
    & GET /hotels          & 1.05 & \textbf{1.02} & 1.07 & 1.30 & 138.2 & 137.9 & 138.7 & \textbf{124.1} \\
    & POST /reservation    & 1.38 & 1.38 & \textbf{1.23} & 1.91 & 138.5 & 138.3 & 138.8 & \textbf{128.2} \\
\midrule

\multirow{3}{*}{HR-6} & GET /recommendations & \textbf{1.05} & 1.23 & 1.35 & 1.67 & 139.3 & 139.0 & 138.9 & \textbf{128.9} \\
    & GET /hotels          & \textbf{1.09} & 1.11 & 1.45 & 1.29 & 140.4 & 137.7 & 138.1 & \textbf{129.6} \\
    & POST /reservation    & \textbf{1.34} & 1.36 & 1.39 & 1.70 & 137.9 & 137.8 & 137.9 & \textbf{127.1} \\
\midrule

\multirow{3}{*}{SS-4} & GET /login           & \textbf{1.21} & 1.35 & 1.25 & 1.83 & 138.4 & 138.1 & 138.2 & \textbf{129.8} \\
    & POST /orders         & 1.76 & 1.82 & 1.74 & \textbf{1.29} & 144.8 & 138.6 & 139.2 & \textbf{130.7} \\
    & POST /cart           & 1.51 & 1.72 & \textbf{1.36} & 1.65 & 138.9 & 137.5 & 138.0 & \textbf{129.2} \\
\midrule

\multirow{3}{*}{SS-6} & GET /login           & 1.25 & 1.29 & \textbf{1.22} & 1.66 & 139.9 & 138.3 & 138.1 & \textbf{127.8} \\
    & POST /orders         & 16.24 & 16.56 & 15.41 & \textbf{2.08} & 146.1 & 152.7 & 145.6 & \textbf{133.8} \\
    & POST /cart           & 1.22 & 1.19 & \textbf{1.08} & 1.36 & 139.7 & 138.9 & 138.4 & \textbf{130.2} \\
\midrule

\multirow{6}{*}{TT-2} & POST /left       & 1.33 & 1.28 & 1.41 & \textbf{0.87} & 150.0 & 142.6 & 148.7 & \textbf{114.4} \\
    & POST /cheapest   & 2.25 & 1.94 & 2.11 & \textbf{1.56} & 198.4 & 169.7 & 178.5 & \textbf{125.4} \\
    & POST /minStation & 2.14 & 1.93 & 2.25 & \textbf{1.44} & 186.9 & 160.3 & 163.1 & \textbf{120.0} \\
    & POST /quickest   & 2.33 & 2.08 & 2.02 & \textbf{1.63} & 190.3 & 168.8 & 184.3 & \textbf{122.0} \\
    & GET /food        & 0.89 & \textbf{0.86} & 0.93 & 1.09 & 138.6 & 137.4 & 138.4 & \textbf{129.2} \\
    & POST /preserve   & 1.59 & 1.70 & 1.68 & \textbf{1.13} & 183.7 & 158.6 & 168.5 & \textbf{126.0} \\
\midrule

\multirow{6}{*}{TT-3} & POST /left       & 5.13 & 4.89 & 5.12 & \textbf{2.42} & 544.4 & 507.8 & 515.2 & \textbf{136.8} \\
    & POST /cheapest   & 16.11 & 14.76 & 14.72 & \textbf{5.00} & 3581.2 & 2769.8 & 2867.2 & \textbf{150.4} \\
    & POST /minStation & 16.33 & 14.19 & 14.26 & \textbf{4.86} & 2764.8 & 2252.8 & 2457.6 & \textbf{144.3} \\
    & POST /quickest   & 15.98 & 14.96 & 15.71 & \textbf{5.18} & 3483.6 & 2972.6 & 3174.4 & \textbf{149.7} \\
    & GET /food        & 0.82 & \textbf{0.64} & 0.81 & 0.78 & 140.7 & 139.9 & 141.1 & \textbf{130.2} \\
    & POST /preserve   & 5.07 & 4.72 & 4.87 & \textbf{2.33} & 1537.8 & 1274.4 & 1344.9 & \textbf{144.1} \\
\bottomrule
\end{tabular}
\endgroup
\end{table*}

To quantify FastFI’s overhead, we measure resource utilization. We ran FastFI and all baselines on identical hardware and deployments, monitored complete evaluation runs, and recorded average CPU and memory usage during fault injection.

Table~\ref{tab:cpu-mem-combined} reports average CPU and memory usage per request on real-world microservice benchmarks. FastFI exhibits low CPU usage (1.84\% on average). In comparison, MicroFI, IntelliFI, and LDFI exhibit average CPU usage of 4.01\%, 4.00\%, and 4.28\%, respectively. The gap is even larger for more complex requests, such as the OB-6 request (\texttt{POST /checkout}), the SS-6 request (\texttt{POST /cart}), and the TT-3 requests (\texttt{POST /cheapest}, \texttt{POST /minStation}, and \texttt{POST /quickest}). The baselines require 16.74\% CPU usage on average on these requests, whereas FastFI uses only 4.10\%, reducing CPU usage by 75.48\%. The gap arises because CNF solving dominates the runtime, and the baselines rely on a general-purpose SAT solver, Z3, which is CPU-heavy for our instances.

In terms of memory usage, FastFI consumes 132.2 MB on average, whereas MicroFI, IntelliFI, and LDFI require 448.6 MB, 428.7 MB, and 500.3 MB, respectively.
This reduction mainly stems from FastFI discarding stale candidates derived from older CNF formulas and injecting only candidates derived from the latest CNF formula.
In contrast, the baselines retain candidates that satisfy earlier CNF formulas, thereby accumulating additional memory overhead.
The effect is particularly pronounced for TT-3 requests (\texttt{POST /cheapest}, \texttt{POST /minStation}, \texttt{POST /quickest}, and \texttt{POST /preserve}): FastFI uses only 147.1 MB on average, while the baselines rise to 2,540.1 MB, which can be impractical in practice.


\begin{tcolorbox}[
  enhanced,
  colback=gray!5,
  colframe=gray!500,
  arc=2mm,
  boxrule=1.5pt,
  left=1.5mm,right=1.5mm,top=1mm,bottom=1mm
]
\textbf{Answer to RQ5: }
FastFI substantially reduces resource overhead on real-world microservice benchmarks. It lowers CPU usage by 54.00\%, 53.98\%, and 56.91\% compared with MicroFI, IntelliFI, and LDFI, respectively, and reduces memory usage by 70.52\%, 69.16\%, and 73.57\%. These results suggest that FastFI is practical for deployment, due to the low-cost DFS-based solver and pruning strategy that focuses on solutions from the latest CNF formula while discarding stale candidates.
\end{tcolorbox}

\subsection{Case Study}
To demonstrate FastFI’s practicality, we conducted a case study on the Online Boutique benchmark deployed with four replicas per microservice (Fig.~\ref{fig:4.7}). We first configured the fault-injection experiment by specifying the target request endpoints for injection, the communication protocol between services, and an upper bound for our $k$-fault-based dynamic injection, which limits the maximum number of injection points in each candidate fault set.
 In the benchmark, each valid combinatorial fault corresponds to a replica combination of the same downstream API: it includes four injection points, one targeting the API endpoint on each replica, rather than combinations involving multiple distinct APIs. This pattern indicates that the call site lacks sufficient error handling and fallback logic when all replicas of a downstream API become unavailable.

After deriving the fault list via fault injection, we selected APIs to enhance their call-site robustness within a fixed budget. We profiled request frequencies over a time window to assign request priorities: we treated \texttt{GET /} as high priority and \texttt{GET /product/\{id\}}, \texttt{GET /cart}, and \texttt{POST /checkout} as low priority. For high-priority requests, we required full coverage of all combinatorial faults in their lists to preserve availability under faults. Given a total hardening budget of five APIs, four budget slots are allocated to the high-priority request, and the remaining budget is allocated to maximize coverage of faults in low-priority requests. If multiple APIs offered equivalent coverage, we broke ties using \emph{APIRank}~\cite{MicroFI}, which ranks APIs based on call frequency and call-graph topology.

\begin{figure}[t]
\captionsetup{skip=2pt}
  \centering
  \includegraphics[width=\linewidth]{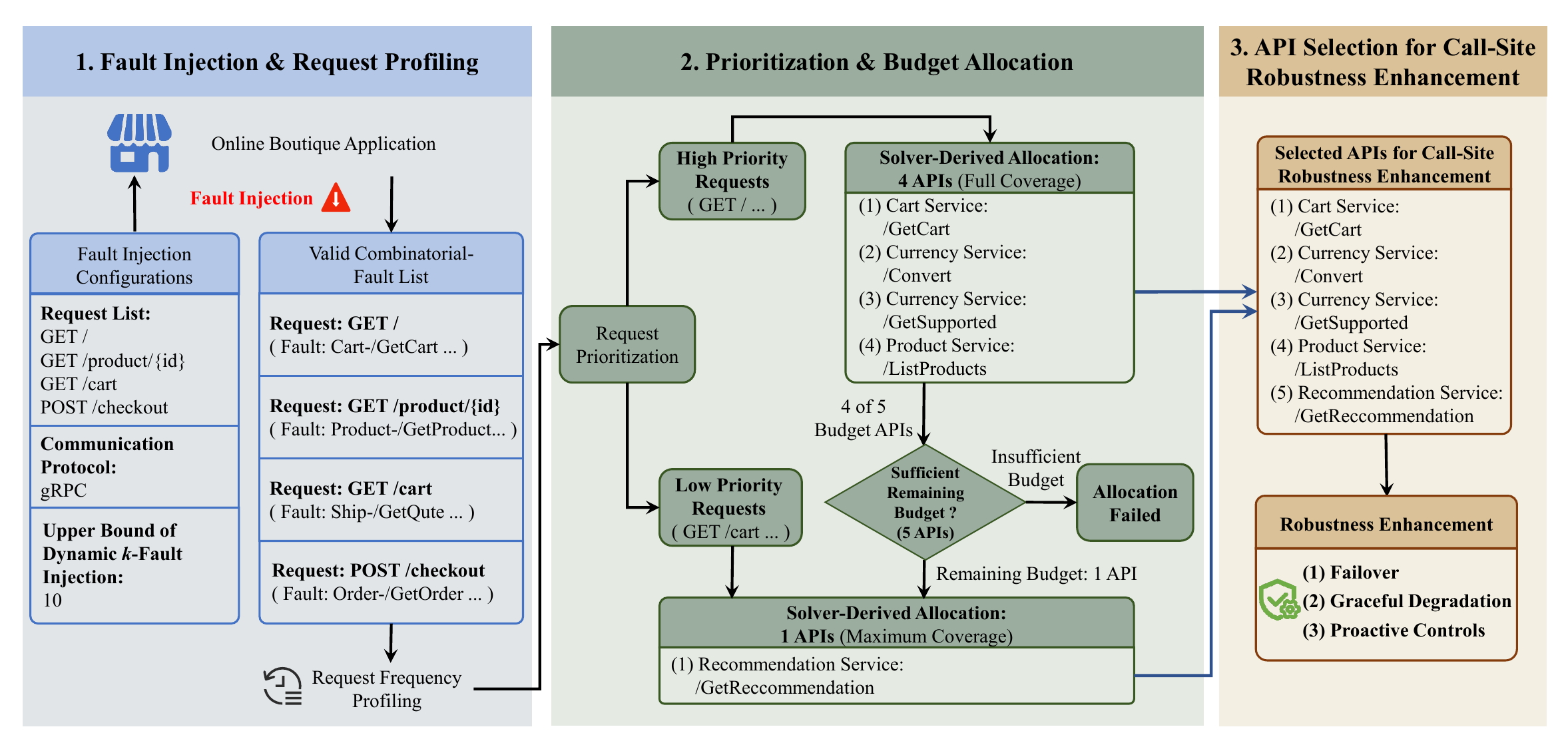}
  \caption{Case study of applying FastFI to the Online Boutique application.}
  \label{fig:4.7}
\end{figure}

The output is a list of APIs whose call-site robustness should be enhanced. Enhancing the robustness of their call sites (e.g., via failover, graceful degradation, or proactive controls) yields the largest improvement in system reliability under a constrained budget. Crucially, FastFI operates in a black-box manner, pinpointing critical call sites without requiring access to source code, thereby reducing analysis complexity and improving applicability across systems.

\section{Discussion}
\label{sec:5}
\subsection{Limitations}
FastFI requires target applications to implement OpenTelemetry instrumentation so that tracestate can be propagated along service call chains, enabling request-level fault injection. Currently, fault injection is implemented via Istio's EnvoyFilter, which configures Envoy dynamically to intercept and mutate network traffic. However, the sidecar-based service mesh incurs measurable overhead, including additional memory and CPU consumption. According to our monitoring of OB-6 with over 60 pods, each sidecar container in a pod averaged about 100 MB of memory and 150 millicores of CPU under request load. Additionally, the configuration of the Istio-proxy container shows that it requests 128 MB of memory and 100 millicores of CPU. To overcome these limitations and achieve truly non-intrusive fault injection, we plan to leverage eBPF~\cite{eBPF} for kernel-level traffic interception and to integrate ZeroTrace~\cite{ZeroTracer}, an eBPF-based, instrumentation-free tracing tool, to inject metadata directly into packets in kernel space. This design eliminates the need for a service mesh or any application-layer code changes, enabling request-level fault injection with no application-level instrumentation and minimal performance impact.

\subsection{Threats to Validity}
\paragraph{Internal Threats to Validity.} Internal validity threats primarily concern controllability and reproducibility during experimentation. Because the competing microservice fault-injection methods LDFI~\cite{LDFI-Netflix}, IntelliFI~\cite{IntelliFI}, and MicroFI~\cite{MicroFI} do not publicly release complete implementations or technical details, we mitigated this threat by carefully analyzing each baseline’s core algorithmic description, system architecture, and critical implementation details, and by re-implementing them faithfully according to the specifications in their papers. During re-implementation, we paid particular attention to preserving the baselines’ core logic and optimization strategies to ensure a fair and credible comparison.

\paragraph{External Threats to Validity.} The principal external validity threat arises from the gap between academic/open-source microservice benchmarks and production-grade systems, particularly in how redundant execution paths and reliability mechanisms are realized in practice. Most widely used open-source microservice benchmarks lack business-level fallback and degradation logic (e.g., failover, graceful degradation, or proactive controls), so injecting a single fault often causes a request to fail outright. As a result, alternative execution paths are commonly approximated via replica-based, instance-level failover, which may increase clause overlap and inflate the number of paths compared to production settings. To mitigate this threat, we designed a multi-level evaluation. We first evaluated FastFI on standard open-source benchmarks, using the same replica-based construction adopted by prior work, such as IntelliFI and MicroFI, to ensure comparability under a high-overlap regime. We then constructed a large-scale simulated microservice benchmark designed to approximate production-grade, high-reliability practices by explicitly controlling the multipath structure and overlap through scaling service counts, dependency patterns, and call-frequency distributions. Finally, we reported overlap statistics $ACO$ and swept key structural parameters to demonstrate that FastFI’s conclusions hold across a wide range of overlap regimes and system configurations, thereby strengthening the generalizability of our findings beyond any single benchmark.

\paragraph{Construct Threats to Validity.}
A construct validity threat is whether $AFVR_b$, defined in RQ4, faithfully captures the intended notion of reliability improvement after hardening. By definition, we compute $AFVR_b$ as the fraction of previously identified valid combinatorial faults that remain failure-inducing after hardening, and then average it across requests. Although this metric directly reflects how effectively a hardening strategy eliminates known failure-inducing combinations, it remains a proxy for production reliability.
Specifically, $AFVR_b$ treats all faults equally and does not account for failure severity, user impact, or occurrence frequency; thus, invalidating rare faults may improve $AFVR_b$ without yielding proportional operational benefits. Moreover, $AFVR_b$ focuses solely on failure/non-failure outcomes and does not capture potential side effects of hardening (e.g., semantic degradation of responses or latency overhead), which can be important in practice.
To mitigate these threats, we treat $AFVR_b$ as a targeted reliability proxy that measures how effectively hardening invalidates previously observed failure-inducing combinations under a fixed injection model. We evaluated $AFVR_b$ across multiple benchmarks and budgets, explicitly documented the injection protocol and success criteria, and applied the same hardening strategy across all methods. These controls reduce measurement and comparison artifacts, so observed differences in $AFVR_b$ primarily reflect the quality of call-site selection rather than experimental tuning.

\section{Related Work}
\label{sec:6}
\textbf{Fault Injection.} Fault injection (FI) proactively surfaces latent system flaws, and chaos engineering provides practical guidance for conducting FI in production environments~\cite{ChaosEngineeringBasic}. Prior work has improved FI by developing more effective scenario-selection and analysis techniques~\cite{Coverage-ICC, FailureModeAnalysis, CausalFaultInject, RiskAnalysis}. Distributed-systems research has proposed diverse FI techniques to improve testing realism and experimental reproducibility~\cite{SyscallChaos, Rainmaker, FeedbackDrivenFailureReproduction, SlowFault}. In microservices, LDFI~\cite{LDFI-2015} formulates FI as a SAT problem to uncover minimal combinatorial faults, and it has been deployed in production at Netflix~\cite{LDFI-Netflix, Netflix-ICSE}. IntelliFI~\cite{IntelliFI} accelerates discovery via fitness-guided search; MicroFI~\cite{MicroFI} enables request-level injection with APIRank-based selection under a bounded injection budget; and MicroRes~\cite{MicroRes} leverages FI to profile reliability and localize bottlenecks. Production toolchains ~\cite{ChaosMesh, ChaosBlade, ChaosMonkey} support a broad range of fault types; Gremlin~\cite{Gremlin} adopts a proxy-based FI framework; 3MileBeach~\cite{3MileBeach} enables request-level injection through trace data; and Filibuster~\cite{Filibuster} integrates service-level FI with functional testing to verify recovery mechanisms. The reliability of cluster controllers and resilience under unreliable databases have also been studied~\cite{ControllerFault, DatabaseFault}. However, prior studies provide limited guidance on how to systematically improve API call-site robustness based on FI results.

\textbf{Robustness of API Call Sites.} Robust API call sites are critical to both end-user experience and system reliability in large-scale systems. Tail latency can dominate end-to-end response time, which motivates timeouts, retries, and hedged requests~\cite{Tail}. However, indiscriminate hedging can increase load and must therefore be load-aware and cost-bounded~\cite{Hedge}. In the presence of slow failures (e.g., network or I/O latency), fixed-threshold timeout and retry policies are brittle, whereas adaptive detection and mitigation mitigate performance degradation~\cite{SlowFault}. Moreover, error-handling code is often defect-prone, which can undermine recovery even when such policies are carefully tuned~\cite{ErrorHandlingCodeQuality}. Robustness, therefore, requires a coordinated approach including load-aware client invocation, adaptive server-side reliability, and systematic verification of error-handling paths.

\section{Conclusion}
\label{sec:7}
This paper presents FastFI, an efficient fault-injection framework for microservices that identifies valid combinatorial faults and recommends a budget-bounded set of APIs to enhance the call-site robustness. We design a DFS-based solver that rapidly enumerates all minimal SAT solutions and integrate it with a dynamic $k$-fault combinatorial injection pipeline to efficiently localize valid combinatorial faults. To maximize utility under tight DevOps timelines, we formulate API selection as a Partial Max-SAT problem, selecting a set of APIs subject to a budget and maximizing coverage of observed valid combinatorial faults. Extensive experiments show that FastFI outperforms existing methods in failure-discovery efficiency, runtime overhead, and the accuracy of identifying high-impact APIs. Moreover, enhancing the call-site robustness of selected APIs effectively improves system reliability.

In the future, we plan to develop a meshless, eBPF-based injection design that eliminates Istio sidecar overhead to achieve true zero instrumentation and to explore adaptive tuning of FastFI in cloud-native environments to better support continuous integration and deployment.

\section{Data Availability}
Our code and benchmarks are open-source and publicly available at the repository~\cite{FastFI}.

\begin{acks}
This work is supported by the National Natural Science Foundation of China (No. 62572362) and the Shenzhen Science and Technology Program (No. CJGJZD20230724091659002).
\end{acks}

\bibliographystyle{ACM-Reference-Format}
\bibliography{Reference}

\end{document}